\DeclareMathOperator{\opt}{OPT}
\DeclareMathOperator{\alg}{ALG}
\newcommand{\hyper}{\ensuremath{\mathcal{H}}\xspace}
\newcommand{\hypersp}{\ensuremath{\mathcal{H}_f}\xspace}
\newcommand{\hedges}{\groups\xspace}
\newcommand{\hedgesp}{\groups_f\xspace}
\newcommand{\hedgespp}{\groups_p\xspace}
\newcommand{\hedgese}{E\xspace}
\newcommand{\hedgesepp}{E_p\xspace}
\newcommand{\groups}{\ensuremath{\Gamma}\xspace}
\newcommand{\group}{\ensuremath{g}\xspace}
\newcommand{\dmeet}{\ensuremath{\delta_{\operatorname{group}}}\xspace}
\newcommand{\dsep}{\ensuremath{\delta_{\operatorname{sep}}}\xspace}
\newcommand{\forbiddenh}{\ensuremath{\mathcal{F}}\xspace}
\newcommand{\ffree}{\forbiddenh-free\xspace}
\newtheorem{RepeatTheoremEnvironment}{Theorem}
\newenvironment{repeattheorem}[1]%
    {%
     \begin{RepeatTheoremEnvironment}}%
    {\end{RepeatTheoremEnvironment}}
\newtheorem{RepeatLemmaEnvironment}{Lemma}
\newenvironment{repeatlemma}[1]%
    {%
     \begin{RepeatLemmaEnvironment}}%
    {\end{RepeatLemmaEnvironment}}
\newtheorem{RepeatObservationEnvironment}{Observation}
\newenvironment{repeatobservation}[1]%
    {%
     \begin{RepeatObservationEnvironment}}%
    {\end{RepeatObservationEnvironment}}
\title{Block Crossings in Storyline Visualizations%
\iflncs%
  \thanks{\arxivrefthanks}
\else%
  \thanks{Appears in the Proceedings of the 24th International Symposium on
  Graph Drawing and Network Visualization (GD 2016).}
\fi%
}
\author{Thomas~C.~van~Dijk\inst{1} \and Martin Fink\inst{2} \and
  Norbert~Fischer\inst{1} \and
  Fabian~Lipp\inst{1} \and Peter~Markfelder\inst{1} \and
  Alexander~Ravsky\inst{3} \and Subhash Suri\inst{2} \and Alexander~Wolff\inst{1}}
\authorrunning{T.~C.~van~Dijk et al.}
\titlerunning{Block Crossings in Storyline Visualizations}
\institute{%
  Lehrstuhl f\"ur Informatik I, Universit\"at W\"urzburg, Germany\\
  \url{http://www1.informatik.uni-wuerzburg.de/en/staff} \and
  University of California, Santa Barbara, USA \and
  Pidstryhach Institute for Applied Problems of Mechanics and
  Mathematics,\\ National Academy of Science of Ukraine, Lviv, Ukraine}
\definecolor{cb-Set1-1}{RGB}{228,26,28}
\definecolor{cb-Set1-2}{RGB}{55,126,184}
\definecolor{cb-Set1-3}{RGB}{77,175,74}
\definecolor{cb-Set1-4}{RGB}{152,78,163}
\definecolor{cb-Set1-5}{RGB}{255,127,0}
\definecolor{cb-Set1-6}{RGB}{255,255,51}
\definecolor{cb-Set1-7}{RGB}{166,86,40}
\definecolor{cb-Set1-8}{RGB}{247,129,191}
\definecolor{cb-Set1-9}{RGB}{153,153,153}
\definecolor{cb-Dark2-1}{RGB}{27,158,119}
\definecolor{cb-Dark2-2}{RGB}{217,95,2}
\definecolor{cb-Dark2-3}{RGB}{117,112,179}
\definecolor{cb-Dark2-4}{RGB}{231,41,138}
\definecolor{cb-Dark2-5}{RGB}{102,166,30}
\definecolor{cb-Dark2-6}{RGB}{230,171,2}
\definecolor{cb-Dark2-7}{RGB}{166,118,29}
\definecolor{cb-Dark2-8}{RGB}{102,102,102}
\tikzset{
  meeting/.append style={ultra thick},
  char/.append style={very thick},
  char1/.append style={char,cb-Dark2-1},
  char2/.append style={char,cb-Dark2-2},
  char3/.append style={char,cb-Dark2-3},
  char4/.append style={char,cb-Dark2-4},
  char5/.append style={char,cb-Dark2-5},
  char6/.append style={char,cb-Dark2-6},
  char7/.append style={char,cb-Dark2-7},
  char8/.append style={char,cb-Dark2-8},
}
\newcounter{x}
\providecommand*\@nameundef[1]{%
  \expandafter\let\csname #1\endcsname\@undefined}
\newcommand*{\storylineset}{\pgfqkeys{/storyline}}
\newcommand{\drawstoryline}[2][]{
  \begin{scope}[/storyline, #1]
  \setcounter{x}{0}
  \@nameundef{lasty}
  \@nameundef{firsty}
  \@nameundef{firstx}
  \xdef\pathtodraw{}
  \foreach \y in {#2} {
    \IfStrEq{\y}{x}{
      \stepcounter{x}
    }{
      \@ifundefined{firstx}{
        \xdef\firstx{\thex}
      }{}
      \@ifundefined{lasty}{
        \xdef\lasty{\y}
        \xdef\firsty{\y}
        \IfStrEq{\storyline@leftwidth}{x}{}{
          \xappto\pathtodraw{({\storyline@distx*(\thex-\storyline@leftwidth)},
              \storyline@disty*\lasty)
            -- (\storyline@distx*\thex,\storyline@disty*\lasty)}
        }
      }{
        \IfStrEq{\pathtodraw}{}{
          \xappto\pathtodraw{(\storyline@distx*\thex,\storyline@disty*\lasty)}
        }{}
        \xappto\pathtodraw{-- (\storyline@distx*\thex,\storyline@disty*\lasty)
          .. controls ({\storyline@distx*(\thex+\storyline@crossinglength/2)},
              \storyline@disty*\lasty)
          and ({\storyline@distx*(\thex+\storyline@crossinglength/2)},
              \storyline@disty*\y)
          .. ({\storyline@distx*(\thex+\storyline@crossinglength)},\storyline@disty*\y)}
        \xdef\lasty{\y}
        \stepcounter{x}
      }
    }
  }
  \IfStrEq{\storyline@rightwidth}{x}{}{
    \xappto\pathtodraw{-- ++(\storyline@rightwidth*\storyline@distx,0)}
  }
  \IfStrEq{\pgfkeysvalueof{/storyline/name path}}{}{
    \@ifundefined{storyline@invisible}{
      \draw[\storyline@drawingstyle] \pathtodraw;
    }{
      \path[] \pathtodraw;
    }
  }{
    \@ifundefined{storyline@invisible}{
      \draw[\storyline@drawingstyle,
          name path global={\pgfkeysvalueof{/storyline/name path}}]
        \pathtodraw;
    }{
      \path[name path global={\pgfkeysvalueof{/storyline/name path}}] \pathtodraw;
    }
  }

  \@ifundefined{storyline@invisible}{
    \xdef\storyline@shiftstr{}
    \IfStrEq{\storyline@labelshifty}{}{}{
        \xdef\storyline@shiftstr{+ (0,\storyline@labelshifty)}
    }
    \IfStrEq{\storyline@leftwidth}{x}{}{
      \xdef\storyline@pos{($ ({\storyline@distx*(\firstx-\storyline@leftwidth)},
          \storyline@disty*\firsty) - (.3333em,0) \storyline@shiftstr $)}
      \node[\storyline@drawingstyle,anchor=east,inner sep=0]
        at \storyline@pos
        {\storyline@label};
    }
    \IfStrEq{\storyline@rightwidth}{x}{}{
      \@ifundefined{storyline@norightlabel}{
        \xdef\storyline@pos{($ ({\storyline@distx*
            (\thex-1+\storyline@crossinglength+\storyline@rightwidth)},
            \storyline@disty*\lasty) + (.3333em, 0) \storyline@shiftstr $)}
        \node[\storyline@drawingstyle,anchor=west,inner sep=0]
          at \storyline@pos
          {\storyline@label};
      }{}
    }
  }{}
  \end{scope}
}
\newcommand{\drawstorylinemeeting}[4][]{
  \begin{scope}[/storyline, #1]
  \draw[meeting,#1] (\storyline@distx*#2,{\storyline@disty*(#3-0.3)})
    -- (\storyline@distx*#2,{\storyline@disty*(#4+0.3)});
  \end{scope}
}
\newcommand{\drawstorylinepermutation}[5][]{
  \begin{scope}[/storyline, #1]
    \node[anchor=south,yshift=0.2cm] at (#2*\storyline@distx,#4*\storyline@disty) {#5};
    \draw[dashed] (#2*\storyline@distx,{(#3-0.5)*\storyline@disty})
      -- (#2*\storyline@distx,{(#4+0.5)*\storyline@disty});
  \end{scope}
}
\let\doendproof\endproof
\renewcommand\endproof{~\hfill$\qed$\doendproof}
\newcommand{\CS}{\ensuremath{\mathcal{S}}\xspace}
\newcommand{\SBT}{SBT\xspace}
\newtheorem{observation}[theorem]{Observation}
\newcommand{\algo}[1]{\textsc{#1}}
\begin{document}

\maketitle

\begin{abstract}
  Storyline visualizations help visualize encounters of the
  characters in a story over time.  Each character is represented by
  an $x$-monotone curve that goes from left to right.  A meeting is
  represented by having the characters that participate in the meeting
  run close together for some time.  In order to keep the visual
  complexity low, rather than just minimizing pairwise crossings of
  curves, we propose to count \emph{block crossings}, that is, pairs
  of intersecting bundles of lines.

  Our main results are as follows.  We show that minimizing the number
  of block crossings is NP-hard, and we develop, for meetings of
  bounded size, a constant-factor approximation.  We also present two
  fixed-parameter algorithms and, for meetings of size~2, a greedy
  heuristic that we evaluate experimentally.
\end{abstract}

\section{Introduction}
\label{sec:introduction}

A storyline visualization is a convenient abstraction for visualizing
the complex narrative of interactions among people, objects, or concepts. 
The motivation comes from the setting of a movie, novel, or play where the 
narrative develops as a sequence of interconnected scenes, each involving a 
subset of characters. See Fig.~\ref{fig:jurassic-park-xkcd} for an example.

\begin{figure}[b]
  \centering
  \includegraphics[width=\textwidth]{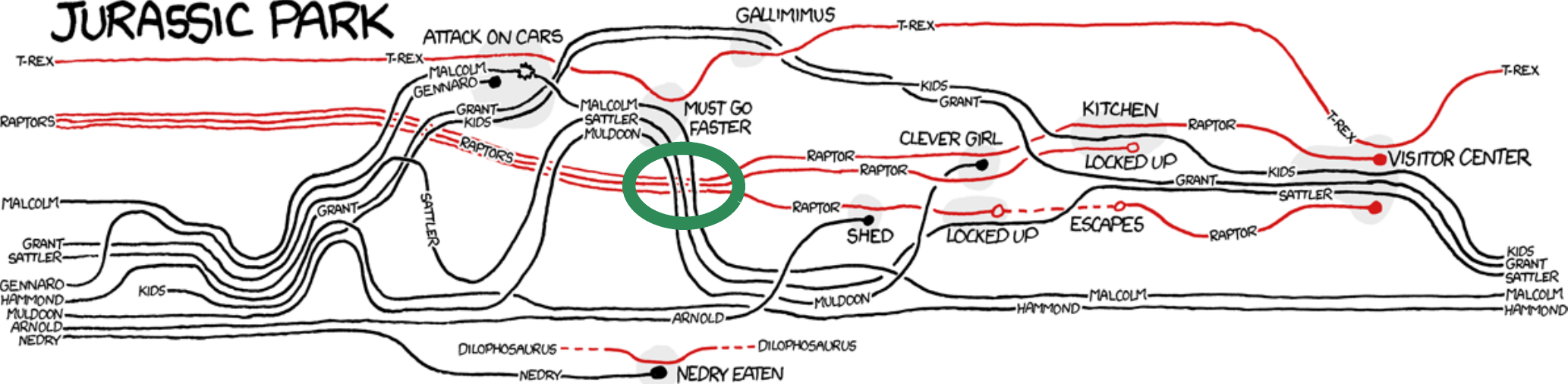}
  \caption{Storyline visualization for \emph{Jurassic Park}
  by \texttt{xkcd}~\cite{xkcd-storylines} with a block crossing
  (highlighted by a bold green ellipse).}
  \label{fig:jurassic-park-xkcd}
\end{figure}
The storyline abstraction of characters and events occurring over time
can be used as a metaphor for visualizing other situations, from
physical events involving groups of people meeting in corporate
organizations, political leaders managing global affairs, and groups of
scholars collaborating on research to abstract co-occurrences of
``topics'' such as a global event being covered on the front pages of
multiple leading news outlets, or different organizations 
turning their attention to a common cause.

A storyline visualization maps a set of characters
of a story to a set of curves in the plane and a sequence of meetings
between the characters to regions in the plane where the corresponding
curves come close to each other.
The current form of storyline visualizations seems to have been
invented by Munroe~\cite{xkcd-storylines} (compare
Fig.~\ref{fig:jurassic-park-xkcd}), who used it to
visualize, in a compact way, which subsets of characters meet over the
course of a movie.
Each character is shown as an x-monotone curve. 
Meetings occur at certain times from left to right. 
A meeting corresponds to a point in time where the characters that
meet are next to each other with only small gaps between them. Munroe
highlights meetings by underlaying them with a gray shaded region,
while we use a vertical line for that purpose.    Hence, a storyline
visualization can be seen as a drawing of a hypergraph whose vertices
are represented by the curves and whose edges come in at specific
points in time.

A natural objective for the quality of a storyline visualization is 
to minimize unnecessary ``crossings'' among the character lines.
The number of crossings alone, however, is a poor measure: two blocks 
of ``locally parallel'' lines crossing each other are far less 
distracting than an equal number of crossings randomly scattered 
throughout the drawing.  Therefore, instead of pairwise crossings, 
we focus on minimizing the number of \emph{block crossings}, where
each block crossing involves two arbitrarily large sets of parallel
lines forming a crossbar, with no other line in the crossing area;
see Fig.~\ref{fig:jurassic-park-xkcd} for an example.

\paragraph{Previous Work.}
Kim et al.~\cite{kch-tgdt-AVI10} used
storylines to visualize genealogical data; meetings correspond to
marriages and special techniques are used to indicate child--parent
relationships.  Tanahashi and Ma~\cite{tm-dcosv-TVCG12} computed
storyline visualizations automatically and showed how to adjust the
geometry of individual lines to improve the aesthetics of their
visualizations.  Muelder et al.~\cite{mcsm-esval-BD13} visualized
clustered, dynamic graphs as storylines, summarizing the behavior of
the local network surrounding user-selected foci.

Only recently a more theoretical and principled study was initiated by
Kostitsyna et al.~\cite{DBLP:conf/gd/KostitsynaNP0S15}, who considered
the problem of minimizing pairwise (not \emph{block}) crossings in storylines. 
They proved that the problem is NP-hard in general, and showed that it is
fixed-parameter tractable with respect to the (total) number of characters.
For the special case of 2-character meetings without repetitions, they developed 
a lower bound on the number of crossings, as well as as an upper bound of 
$O(k \log k)$ when the meeting graph---whose edges describe the pairwise
meetings of characters---is a tree.

Our work builds on the problem formulation of Kostitsyna et
al.~\cite{DBLP:conf/gd/KostitsynaNP0S15} but we considerably extend
their results by designing (approximation) algorithms for
general meetings---for a different optimization goal: we minimize the
number of \emph{block crossing} rather than the number of pairwise
line crossings.  Block crossings were introduced by Fink et
al.~\cite{DBLP:journals/jgaa/FinkPW15} for visualizing metro maps.

\paragraph{Problem Definition.}
A storyline $\CS$ is a pair $(C,M)$ where $C = \{1, \dots, k\}$ is a set of
\emph{characters} and $M = [m_1, m_2, \dots, m_n]$ with $m_i
\subseteq C$ and $|m_i| \ge 2$ for $i = 1, 2, \dots, n$ is a
sequence of \emph{meetings} of at least two characters. We call any
set $\group \subseteq C$ of characters that has at least one meeting,
a \emph{group}. We define the \emph{group hypergraph} $\hyper = (C,\hedges)$
whose vertices are the characters and whose hyperedges are the groups
that are involved in at least one meeting. The group hypergraph does
not include the temporal aspect of the storyline---it models only the
graph-theoretical structure of groups participating in the storyline
meetings; it can be built by lexicographically sorting the
meetings in $M$ in $O(nk\log n)$ time.

Note that we do not encode the exact times of the meetings: In a given
visualization, at any time $t$, there is a unique vertical order $\pi$
of the characters.  Without changing $\pi$ by crossings, we can
increase or decrease vertical
gaps between lines. If a group $\group$ forms a contiguous
interval in $\pi^t$, then we can bring $\group$'s lines within a short
distance $\dmeet$ without any crossing, and also make sure that all
other lines are at a larger distance of at least $\dsep$. Since any
group must be supported at a time just before its meeting
starts, computing an output drawing consists mainly of changing the
permutation of characters over time so that during a meeting its group
is supported by the current permutation. We therefore focus on
changing the permutation by crossings over time, and only have to be
concerned about the order of meetings; the final drawing can be
obtained by a simple post-processing from this discrete set of
permutations.

\begin{wrapfigure}[9]{r}{.35\textwidth}
  \centering
  \begin{tikzpicture}[scale=0.7]
    \storylineset{disty=0.35}
      \storylineset{
        norightlabel,
        leftwidth=0.05,
        rightwidth=0.05,
        disty=0.35,
      }
    \drawstoryline[drawingstyle=char1,label=$k$]{3,3}
    \drawstoryline[drawingstyle=char1]{4,4}

    \drawstoryline[drawingstyle=char2,label=$c$]{5,9}
    \drawstoryline[drawingstyle=char2]{6,10}
    \drawstoryline[drawingstyle=char2,label=$b{+}1$,labelshifty=-2pt]{7,11}

    \drawstoryline[drawingstyle=char3,label=$b$,labelshifty=2pt]{8,5}
    \drawstoryline[drawingstyle=char3]{9,6}
    \drawstoryline[drawingstyle=char3]{10,7}
    \drawstoryline[drawingstyle=char3,label=$a$]{11,8}

    \drawstoryline[drawingstyle=char1]{12,12}
    \drawstoryline[drawingstyle=char1,label=$1$]{13,13}

  \end{tikzpicture}
    \caption{Block crossing $(a,b,c)$}
    \label{fig:block-move}
\end{wrapfigure}
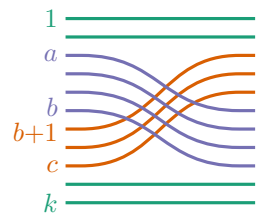
If $\{\pi_1,\pi_2,\dots,\pi_k\}=\{1,2,\dots,k\}$, then $\langle \pi_1,
\pi_2, \dots, \pi_k \rangle$ is a permutation of length~$k$ of~$C$.
For $a \le b < c$, a \emph{block crossing} $(a, b, c)$ on the permutation
$\pi = \langle 1, \dots, k \rangle$ is the exchange of
two consecutive blocks $\langle a,\dots,b \rangle$ and
$\langle b+1,\dots,c \rangle$; %
see Fig.~\ref{fig:block-move}. %
A meeting~$m$ \emph{fits} a permutation~$\pi$ (or a
permutation~$\pi$ \emph{supports} a meeting~$m$) if the characters participating
in~$m$ form an interval in~$\pi$.  In other words, there is a
permutation of~$m$ that is part of~$\pi$.  If we apply a sequence~$B$
of block crossings to a permutation~$\pi$ in the given order, we denote
the resulting permutation by~$B(\pi)$.

\begin{problem}[Storyline Block Crossing Minimization (SBCM)]
  Given a storyline instance $(C, M)$ find a solution consisting of a start
  permutation~$\pi^0$ of~$C$ and a 
  sequence~$B$ of (possibly empty) sequences of block crossings
  $B_1,B_2,\dots,B_n$ such that the total number of block crossings is
  minimized and~$\pi^1=B_1(\pi^0)$ supports~$m_1$,
  $\pi^2=B_2(\pi^1)$ supports~$m_2$, etc.
\end{problem}
We also consider $d$-SBCM, a special case of SBCM where meetings
involve groups of size at most $d$, for an arbitrary constant~$d$.
E.g., 2-SBCM allows only 2-character meetings, a setting that
was also studied by Kostitsyna et
al.~\cite{DBLP:conf/gd/KostitsynaNP0S15}.

\paragraph{Our Results.}
We observe that a storyline has a crossing-free visualization if and
only if its group hypergraph is an interval hypergraph.  A hypergraph can be
tested for the interval property in $O(n^2)$ time, where $n$ is the
number of hyperedges.
We show that 2-SBCM is NP-hard (see Sect.~\ref{sec:np-hard}) and
that SBCM is fixed-parameter tractable with respect to~$k$
(Sect.~\ref{sec:exact}).
The latter can be modified to handle pairwise crossings, where its runtime improves on Kostitsyna et al.~\cite{DBLP:conf/gd/KostitsynaNP0S15}.

We present a greedy
algorithm for 2-SBCM that runs in $O(k^3n)$ time for $k$ characters.
We do some preliminary experiments where
we compare greedy solutions to optimal solutions; see
Sect.~\ref{sec:algorithm}.
One of our main results is a constant-factor
approximation algorithm for $d$-SBCM for the case that $d$ is bounded
and that meetings cannot be repeated; see Sect.~\ref{sec:approximation}. Our
algorithm is based on a solution for the following NP-complete
hypergraph problem, which may be of independent interest.
Given a
hypergraph $\hyper$, we want to delete the minimum number of
hyperedges so that the remainder is an interval hypergraph. We
develop a $(d+1)$-approximation algorithm, where $d$ is the maximum
size of a hyperedge in $\hyper$; see Sect.~\ref{sec:hyperedge-removal-approx}.
Finally, we list some open problems in Appendix~\appref{app:openProblems}.

\section{Preliminaries}
\label{sec:preliminaries}

First, we consider the special case where every meeting
consists of two characters.  For these restricted instances, every
meeting can be realized from any permutation by a single
block crossing.
This raises the question
whether there is also an \emph{optimal} solution that fulfills this
condition.  The answer is negative---if we may
prescribe the start permutation;
see Appendix~\appref{app:preliminaries} for details.

\newcommand{\contentObsBadupper}{%
  Given an instance of 2-SBCM, there is a solution with at most
  one block crossing before each of the meetings.
  In particular, there is a solution with at most $n$ block crossings in total.
}
\begin{observation}
  \label{obs:badupper}
  \contentObsBadupper
\end{observation}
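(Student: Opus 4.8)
The plan is to give an explicit construction that realizes every meeting using at most one block crossing, which immediately yields the total bound of $n$. The heart of the matter is the single-crossing claim already announced in the text: starting from \emph{any} permutation, the two characters of a $2$-character meeting can be brought next to each other by a single block crossing.

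First I would record the trivial-but-useful reformulation that, since $|m_i| = 2$, a meeting $m_i = \{x,y\}$ fits a permutation exactly when $x$ and $y$ occupy adjacent positions. This turns the requirement ``$\pi^i$ supports $m_i$'' into the concrete goal ``make $x$ and $y$ adjacent,'' which is what the block crossing will achieve.

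The main step is to exhibit the required block crossing. Fix any permutation $\pi$ and let $p < q$ be the positions of the two characters of the current meeting (name them so that the character on the left sits at position $p$). If $q = p+1$ the two are already adjacent and no crossing is needed, so the corresponding $B_i$ is empty. Otherwise $q \ge p+2$, and I claim the block crossing $(p,\,p,\,q-1)$ does the job: it exchanges the length-one block at position $p$ with the block occupying positions $p+1,\dots,q-1$, which moves the left character to position $q-1$---directly beside the right character still at position $q$. One checks that the index triple $(p,\,p,\,q-1)$ meets the admissibility condition $a \le b < c$ precisely when $q \ge p+2$, i.e.\ exactly in the case where we invoke it, and that the exchange touches only positions $p,\dots,q-1$, leaving the rest of $\pi$ unchanged.

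With this in hand the solution is assembled greedily: I fix an arbitrary start permutation $\pi^0$ (say the identity) and, for $i = 1,\dots,n$, apply the at-most-one crossing above to the current permutation $\pi^{i-1}$ to obtain $\pi^i = B_i(\pi^{i-1})$ supporting $m_i$. Since each $B_i$ contributes at most one block crossing, the total is at most $n$. I do not expect a genuine obstacle here; the only points requiring care are verifying the admissibility of the index triple and confirming that the exchange really places $x$ and $y$ in adjacent positions, both of which follow directly from the block-crossing definition.
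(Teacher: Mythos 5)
Your proposal is correct and matches the paper's own argument: the paper likewise places the left character directly beside the right one via the block crossing $(i,i,j-1)$ and applies this greedily to each meeting in turn. The only difference is that you spell out the adjacency reformulation and the admissibility check $a\le b<c$ explicitly, which the paper leaves implicit.
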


\paragraph{Detecting Crossing-Free Storylines.}
If a storyline admits a crossing-free visualization, then
the vertical permutation of the character lines remains 
the same over time, and all meetings involve groups that form contiguous 
subsets in that permutation. (The visualization
can be obtained by placing characters along a vertical line in the correct 
permutation and for each meeting bringing its lines together for the 
duration of the meeting and then separating them apart again.)
In other words, a single permutation supports each group of
$\hyper = (C,\hedges)$. This holds if and only if $\hyper$ is an
\emph{interval hypergraph}.  This is the case
if there exists a permutation $\pi =\left\langle v_1,
\ldots, v_k \right\rangle$ 
of~$C$ such that each hyperedge $e \in \hedges$ corresponds to 
a contiguous block of characters in this permutation.
As an anonymous reviewer pointed out, this is equivalent to the
hypergraph having path support~\cite{DBLP:journals/jgaa/BuchinKMSV11}.
An interval hypergraph can be visualized by placing all of its vertices on 
a line, and drawing each of its hyperedges as an interval that includes
all vertices of $e$ and no vertex of $V \setminus e$.
Checking whether a $k$-vertex hypergraph is an interval hypergraph
takes $O(k^2)$ time %
\cite{trotter1976characterization}.
Recall that we can build $\hyper$ in $O(nk \log n)$ time.

\begin{theorem}
  \label{obs:crossing-free}
	Given the group hypergraph $\hyper$ of an instance of SBCM with $k$
	characters, we can check in $O(k^2)$ time whether a crossing-free
	solution exists.
\end{theorem}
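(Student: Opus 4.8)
The plan is to reduce the decision problem to recognizing an interval hypergraph and then invoke the known recognition bound. Concretely, I would prove the equivalence \emph{a crossing-free solution exists if and only if $\hyper$ is an interval hypergraph}, and then observe that testing the interval property on a $k$-vertex hypergraph takes $O(k^2)$ time by the characterization of~\cite{trotter1976characterization}. Since the theorem supplies $\hyper$ as input, we do not pay the $O(nk\log n)$ construction cost, so the whole algorithm is just this one test.

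First I would establish the forward direction. Suppose a crossing-free solution exists. Because the character lines are $x$-monotone curves and no two of them cross, the vertical order of the $k$ curves is \emph{the same} at every time~$t$; call this common permutation~$\pi$. Each meeting $m_i$ must be supported by the permutation present at its time, i.e., the characters of $m_i$ form a contiguous interval in~$\pi$. Hence \emph{every} group of $\hyper$ is an interval of the single permutation~$\pi$, which is exactly the statement that $\hyper$ is an interval hypergraph. The only point needing care here is the claim that a crossing-free drawing forces a single global permutation; this follows immediately from $x$-monotonicity together with the definition of a crossing, since any change in the vertical order of two curves would require them to intersect.

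For the reverse direction, suppose $\hyper$ is an interval hypergraph witnessed by a permutation $\pi = \langle v_1, \dots, v_k\rangle$ in which every hyperedge is contiguous. I would take $\pi$ as the fixed start permutation~$\pi^0$ and apply no block crossings, so that $\pi^t=\pi$ for all~$t$. Since each group is contiguous in~$\pi$, every meeting is supported, and the drawing is realized by placing the characters along a vertical line in the order~$\pi$ and, for each meeting, temporarily contracting the gaps within its (contiguous) group to~$\dmeet$ while keeping all separating gaps at least~$\dsep$, then relaxing them afterwards; no crossing is ever introduced. Combining the two directions with the cited $O(k^2)$-time interval-hypergraph test yields the claimed bound. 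I do not expect a genuine obstacle: the structural characterization does essentially all the work, and the forward direction's permutation-rigidity argument is the only step that benefits from being spelled out explicitly.
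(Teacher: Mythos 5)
Your proposal is correct and matches the paper's own argument: the paper likewise reduces the question to whether $\hyper$ is an interval hypergraph (a single permutation supports all groups if and only if the drawing can be crossing-free, with the drawing obtained by contracting and relaxing vertical gaps) and then invokes the $O(k^2)$-time recognition test of~\cite{trotter1976characterization}. Your explicit spelling-out of the forward direction's permutation-rigidity via $x$-monotonicity is a slightly more careful rendering of what the paper states in prose, but it is the same proof.
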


For 2-SBCM we only need to check (in $O(k)$ time) whether $\hyper$ 
is a collection of vertex-disjoint paths; this
is dominated by the time ($O(n)$) for building $\hyper$.

\section{NP-Completeness of SBCM}
\label{sec:np-hard}

In this section we prove that SBCM is NP-complete.  This is known for
BCM.  But SBCM is not simply a generalization of BCM because in SBCM
we can choose an arbitrary start permutation.  Therefore, the idea of
our hardness proof is to force a certain start permutation by adding
some characters and meetings.
We reduce from \textsc{Sorting by Transpositions} (\SBT), which has also 
been used to show the hardness of BCM~\cite{DBLP:journals/jgaa/FinkPW15}.  
In \SBT, the problem is to decide whether there is a sequence of
transpositions (which are equivalent to block crossings) of length 
at most $k$ that transforms a given permutation $\pi$ to the identity.
\SBT was recently shown NP-hard by Bulteau et al.~\cite{bfr-sbtid-DM12}.

We show hardness for 2-SBCM, which also implies that SBCM is NP-hard.
It is easy to see that SBCM is in NP:
Obviously, the maximum number of block crossings needed for any number of
characters and meetings is bounded by a polynomial in $k$ and $n$.
Therefore also the size of the solutions is bounded by a polynomial.
To test the feasibility of a solution efficiently, we simply test whether the
permutations between the block crossings support the meetings in the right order
from left to right.  We will use the following obvious fact.

\begin{observation}
  \label{obs:permutationsolve}
  If permutation $\pi$ needs $c$ block crossings to be sorted,
  any permutation containing~$\pi$ as subsequence needs at least
  $c$ block crossings to be sorted.
\end{observation}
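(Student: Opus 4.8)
The plan is to show that sorting the larger permutation is at least as hard by \emph{projecting} each block crossing down to the embedded copy of~$\pi$. Write $\sigma$ for a permutation, on ground set $\{1,\dots,k'\}$ with $k' \ge k$, that contains $\pi$ as a subsequence, and let $T \subseteq \{1,\dots,k'\}$ be the set of $k$ values witnessing this occurrence, so that the entries of $\sigma$ belonging to $T$, read from left to right, form a sequence order-isomorphic to $\pi$. For any permutation $\rho$ let $\rho|_T$ denote the subsequence obtained by deleting every entry not in $T$; thus $\sigma|_T$ is order-isomorphic to~$\pi$, while the identity permutation restricted to $T$ (the values of $T$ in increasing order) is order-isomorphic to the identity of length~$k$. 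Sorting $\sigma|_T$ to this increasing order is therefore exactly the same task as sorting~$\pi$.

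The key step is a \emph{projection lemma}: a single block crossing applied to $\sigma$ induces at most one block crossing on $\sigma|_T$. Recall that a block crossing $(a,b,c)$ exchanges the two consecutive blocks occupying positions $[a,b]$ and $[b{+}1,c]$. Entries of $T$ lying in positions $<a$ or $>c$ keep both their position and their relative order, and the $T$-entries inside $[a,b]$ (resp.\ inside $[b{+}1,c]$) form a contiguous block of $\sigma|_T$, because restricting a consecutive range of positions to $T$ again yields consecutive entries of $\sigma|_T$. The effect on $\sigma|_T$ is therefore precisely to swap these two consecutive sub-blocks---that is, a single block crossing---unless one of the two ranges contains no element of~$T$, in which case $\sigma|_T$ is unchanged and the operation can simply be omitted. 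Hence projecting a block crossing never increases the number of crossings.

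With the lemma in hand the conclusion is immediate. Take any sequence of block crossings of length~$m$ that sorts $\sigma$ to the identity. Projecting it term by term yields a sequence of at most $m$ block crossings that transforms $\sigma|_T \cong \pi$ into the increasing order of~$T$, i.e.\ a valid sorting sequence for~$\pi$. Since $\pi$ needs $c$ block crossings to be sorted, we obtain $m \ge c$. As this holds for every sorting sequence of~$\sigma$, the minimum number of block crossings needed to sort $\sigma$ is at least~$c$.

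I expect the only delicate point to be the bookkeeping in the projection lemma: verifying that the images of the two swapped position ranges really are \emph{consecutive} in $\sigma|_T$ (so that the induced move is a genuine block crossing rather than a more general rearrangement), and correctly discarding the degenerate cases in which a block projects to the empty set. Everything else is a direct counting argument.
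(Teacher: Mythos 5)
Your proof is correct. The paper states this observation without proof, calling it an ``obvious fact''; your projection argument---each block crossing on the larger permutation restricts to at most one block crossing on the embedded copy of~$\pi$, because a contiguous range of positions restricts to a contiguous range of the subsequence---is exactly the standard justification the authors have in mind, and the one delicate point you flag (contiguity of the two projected sub-blocks and discarding the case where one of them is empty) is handled correctly.
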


\begin{theorem}
2-SBCM is NP-complete.
\end{theorem}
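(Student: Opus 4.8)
The plan is to show membership in NP and then reduce from \SBT. Membership was already sketched above: a solution has size polynomial in~$k$ and~$n$, and feasibility is tested by checking that the permutations between consecutive block crossings support the meetings in the right order. For hardness, let an \SBT instance be a permutation $\sigma$ of $[k]$ together with a bound~$t$, the question being whether $\sigma$ can be transformed into the identity by at most~$t$ transpositions, i.e., block crossings; write $d(\sigma)$ for the minimum number of block crossings that sort~$\sigma$. I would construct a 2-SBCM instance whose characters are the $k$ \emph{main} characters $1,\dots,k$ together with a few auxiliary \emph{sentinel} characters, and whose meetings are arranged in two phases so that (i) at some time the main characters are forced into the order~$\sigma$ and (ii) at a later time they are forced into the identity order $\langle 1,2,\dots,k\rangle$. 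The threshold for the reduction will be~$t$.

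For the lower bound I would argue purely combinatorially, which is exactly the content of Observation~\ref{obs:permutationsolve}. The key structural fact is that a block crossing of the full permutation, restricted to any fixed subsequence of characters, induces \emph{at most one} block crossing on that subsequence: the two swapped blocks meet the subsequence in two (possibly empty) runs whose relative order is reversed, which is either a single block crossing or the identity. Hence, if the main characters are in order~$\sigma$ at some time and in identity order at a later time, the induced moves sort~$\sigma$ to the identity, so at least $d(\sigma)$ of the full block crossings occur between these two checkpoints. Consequently every feasible solution uses at least $d(\sigma)$ block crossings.

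For the upper bound I would exhibit a matching solution. Choosing the start permutation so that the main characters appear contiguously in the order~$\sigma$, with the sentinels placed so as to satisfy the order-$\sigma$ phase without any crossing, one can sort the main block to the identity using exactly $d(\sigma)$ block crossings inside that contiguous block; as the block is contiguous, each such move is a legal block crossing of the full permutation and leaves the sentinels untouched, so the identity-order phase is satisfied for free afterwards. This yields a solution of size $d(\sigma)$. Together with the lower bound, the instance admits a solution with at most~$t$ block crossings if and only if $d(\sigma)\le t$, completing the reduction.

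The main obstacle is forcing the two checkpoints with \emph{two-character} meetings. A single meeting only makes its two characters adjacent at its own time, and because the solver may insert block crossings between consecutive meetings, merely listing the consecutive pairs of~$\sigma$ (and of the identity) does \emph{not} pin a global order at a checkpoint: a solver could start near the identity and re-establish each required adjacency locally, undercutting~$d(\sigma)$. I would overcome this by repeating each phase's consecutive-pair meetings a polynomial number of times, so that maintaining one fixed contiguous order throughout a phase---which supports all of its pairs with zero crossings---is strictly cheaper than repeatedly rebuilding adjacencies; any deviation then pays at least one crossing per round and is dominated once the number of rounds exceeds~$t$. The sentinel characters play a second role here: they break the reflection symmetry (a comb such as $s,\sigma_1,\dots,\sigma_k$ versus its reverse) and link the orientation of the two phases, so that the solver cannot combine a reflected start order with an unreflected target order to obtain an induced distance smaller than~$d(\sigma)$. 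Proving that this forcing is robust against \emph{every} solver strategy---that no interleaving of block crossings across the phases beats the intended solution---is the crux of the argument.
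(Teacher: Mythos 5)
Your plan is the same reduction the paper uses: extend the \SBT permutation with sentinel characters, force a $\sigma$-ordered checkpoint and an identity-ordered checkpoint via two phases of repeated consecutive-pair meetings, get the lower bound from the subsequence argument (Observation~\ref{obs:permutationsolve}), and match it with the obvious upper bound. The repetition argument you sketch is also the paper's: if a phase's comb order is never attained, each of the (more than budget-many) rounds forces at least one block crossing, a contradiction. So the architecture is right, but you stop exactly at what you yourself call the crux, and the one place where a concrete idea is genuinely missing is the sentinel construction. ``A few'' sentinels do not suffice. After reflecting the whole solution so that the first checkpoint is the unreversed $\sigma$-order, you must rule out that the second checkpoint is the \emph{reversed} identity order; the only tool available is a cost argument, namely that reaching the reversed target would force the sentinel block itself to be reversed, and reversing a block of $l$ sentinels costs $\lceil (l+1)/2\rceil$ block crossings (Eriksson et al.~\cite{eriksson2001sorting}), which must exceed the total budget. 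That requires $\Theta(k)$ sentinels --- the paper uses $2k$ of them, so that the reversal costs $k+1$ while any optimal solution uses at most $k$ block crossings. With $O(1)$ sentinels the reversal is cheap, the orientation of the two phases is not linked, and the induced distance from $\sigma$ to the reversed identity can undercut $d(\sigma)$, breaking the lower bound.

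A second, smaller point: your forcing argument needs the fact that the only permutations supporting an entire round of consecutive-pair meetings of a comb $\langle s_1,\dots,s_l\rangle$ without any intervening block crossing are that comb and its reverse (the path graph has exactly two Hamiltonian orderings). You use this implicitly when you say a checkpoint ``pins a global order''; it should be stated, since it is what converts ``no crossing within a round'' into ``the permutation at that round is $\sigma'$ or its reverse''. Also, the number of repetitions should exceed the \emph{constructive} upper bound $d(\sigma)\le k$ on the optimum of the built instance (the paper uses $k+1$ copies), not merely $t$; your phrasing works for the decision version but the cleaner statement is that the optimum of the 2-SBCM instance equals $d(\sigma)$ exactly. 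None of this changes the approach --- it is the paper's proof --- but as written the argument is incomplete precisely where the hardness actually lives.
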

\noindent\emph{Proof.}\nobreakspace
It remains to show the NP-hardness.  We reduce from \SBT.
Given an instance of \SBT, that is, a permutation $\pi$ of $\{1,
\dots, k\}$, we show how to use a hypothetical, efficient algorithm for
2-SBCM to determine the minimum number of transpositions (i.e., block crossings) that
transforms~$\pi$ to the identity $\iota=\langle 1, 2, \dots, k \rangle$.
Note that $\pi$ can be sorted by
at most $k$ block crossings.
So $k$ is an upper bound for an optimal solution of instance $\pi$ of \SBT.

We extend the set of characters~$\{1,2,\dots,k\}$ to $C = \{1, \dots,
k, c_1, c_2, \dots, c_{2k}\}$.  Correspondingly, we
extend~$\pi=\langle \pi_1, \pi_2, \dots, \pi_k \rangle$ to
$\pi'=\langle c_1, \dots, c_{2k}, \pi_1, \dots, \pi_k \rangle$
and~$\iota$ to $\iota' = \langle c_1, c_2, \dots, c_{2k}, 1, 2, \dots, k
\rangle$.
Let~$M_{\pi'}$ and~$M_{\iota'}$ be the sequences of meetings of all
neighboring pairs in~$\pi'$ and~$\iota'$, respectively.
Let $M_1$ and $M_2$ be the concatenations of $k+1$ copies of $M_{\pi'}$ and
$M_{\iota'}$, respectively.
By repeating we get $M_1 = M_{\pi'}^{k+1}$ and $M_2 = M_{\iota'}^{k+1}$.
This yields the instance $\CS = (C, M)$ of 2-SBCM, where $M$ is the
concatenation of~$M_1$ and~$M_2$; see Fig.~\ref{fig:np-reduction}.

We show that the number of block crossings needed for the 2-SBCM instance~\CS
equals the number of block crossings to solve instance~$\pi$ of \SBT.

First, let $B$ be a shortest sequence of block crossings to sort~$\pi$.
Then, $(\pi', B)$ is a feasible solution for $\CS$.
The start permutation $\pi'$ supports all meetings in $M_1$ without any
block crossing.  Using~$B$, the lines are sorted to $\iota'$, and this
permutation supports all meetings in~$M_2$ without any further block
crossings; see Fig.~\ref{fig:np-reduction}.
Hence, the number of block crossings in any solution of~$\pi$ is an
upper bound for the minimum number of block crossings needed for~\CS.

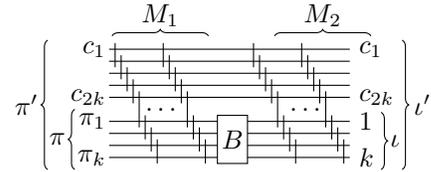
\begin{wrapfigure}[11]{r}{0.46\textwidth}
  \centering
  \begin{tikzpicture}[scale=0.8]
    \foreach \y in {3.6,3.4,3.2,3.0,2.8,2.4,2.2,...,1.8} {
      \draw (0.1,\y) -- (4.1,\y);
    }
    \node[anchor=east] at (0.2,3.6) {$c_1$};
    \node[anchor=east] at (0.2,2.8) {$c_{2k}$};
    \node[anchor=east] at (0.2,2.4) {$\pi_1$};
    \node[anchor=east] at (0.2,1.8) {$\pi_k$};
    \node[anchor=west] at (4.1,3.6) {$c_1$};
    \node[anchor=west] at (4.1,2.8) {$c_{2k}$};
    \node[anchor=west] at (4.1,2.4) {$1$};
    \node[anchor=west] at (4.1,1.8) {$k$};

    \foreach \x in {0, 0.8, 2.3, 3.1} {
      \foreach \i in {2,3,...,5} {
        \draw (\x + \i*0.1, 4 - \i*0.2 + 0.1) -- (\x + \i*0.1, 4 - \i*0.2 - 0.3);
      }
      \draw (\x + 0.6, 2.9) -- (\x + 0.6, 2.3);
      \foreach \i in {7,...,9} {
        \draw (\x + \i*0.1, 3.8 - \i*0.2 + 0.1) -- (\x + \i*0.1, 3.8 - \i*0.2 - 0.3);
      }
    }
    \draw[fill=white] (1.9,2.5) rectangle (2.4,1.7);
    \node at (2.15,2.1) {$B$};

    \draw[decoration={brace,raise=5pt},decorate] (0.15,3.6) -- node[above=5pt] {$M_1$} (1.75,3.6);
    \draw[decoration={brace,raise=5pt},decorate] (2.85,3.6) -- node[above=5pt] {$M_2$} (4.45,3.6);
    \draw[decoration={brace,raise=5pt},decorate] (-0.25,1.6) -- node[left=5pt] {$\pi$} (-0.25,2.6);
    \draw[decoration={brace,raise=5pt},decorate] (-0.7,1.6) -- node[left=5pt] {$\pi'$} (-0.7,3.75);
    \draw[decoration={brace,raise=5pt},decorate] (4.4,2.55) -- node[right=5pt] {$\iota$} (4.4,1.6);
    \draw[decoration={brace,raise=5pt},decorate] (4.75,3.75) -- node[right=6pt] {$\iota'$} (4.75,1.6);
    \node at (1.0,2.6) {$\dots$};
    \node at (3.4,2.6) {$\dots$};
  \end{tikzpicture}
  \caption{Solution for the 2-SBCM instance $\CS$ corresponding to a solution
    $B$ of instance $\pi$ of \SBT.
    The box $B$ represents the block crossings.}
  \label{fig:np-reduction}
\end{wrapfigure}
For the other direction, let $(\pi^*, B^*)$ be an optimal solution for $\CS$.
Any solution of 2-SBCM gives rise to a symmetric solution that is obtained by
reversing the order of the characters.
Without loss of generality, we assume that $\pi'$ (rather than the
reverse permutation $\pi^{\prime R}$) occurs somewhere in~$M_1$.

Next, we show that the start permutation~$\pi'$ occurs somewhere
in~$M_1$ and that $\iota'$ occurs somewhere in~$M_2$.
If there is a sequence~$M_{\pi'}$
of meetings between which there is no block crossing, the permutation
at this position can only be the start permutation $\pi'$ or its reverse.
For a contradiction, assume that $\pi'$ does not occur during~$M_1$ in the
layout induced by $(\pi^*, B^*)$.
Then there is no such sequence without any block crossing in it.
As this sequence is repeated $k+1$ times, the solution would need at least
$k+1$ block crossings.  This contradicts our upper bound, which is~$k$.
Analogously, we can show that the permutation~$\iota'$ or its reverse
occurs in~$M_2$.

We now want to show that the unreversed version of $\iota'$ occurs
in~$M_2$.
For a contradiction, assume the opposite.
We forget about the lines $1, \dots, k$ and only consider the sequence $\pi'' =
\langle c_1, \dots, c_{2k} \rangle$ in $\pi'$ which is reversed to $\iota''= \langle c_{2k}, \dots,
c_1 \rangle$ in $\iota^{\prime R}$.
Eriksson et al.~\cite{eriksson2001sorting} showed that we need
$\lceil (l+1)/2 \rceil$ block crossings to reverse a permutation of $l$
elements.
This implies that we need $k+1$ block crossings to transform $\pi''$ to $\iota''$.
As $\pi'$ and $\iota^{\prime R}$ contain these sequences as subsequences,
Observation~\ref{obs:permutationsolve} implies that the
transformation from $\pi'$ to $\iota^{\prime R}$ also needs at least $k+1$ block
crossings.
As the optimal solution uses at most $k$ block crossings, we know that we cannot
reach $\iota^{\prime R}$ and thus the sequence of permutations contains $\pi'$
and $\iota'$.

The sequence of block crossings that transforms~$\pi'$ to~$\iota'$ yields
a sequence~$B$ of block crossings of the same length that
transforms~$\pi$ to~$\iota$.  This
shows that the length of a solution for $\CS$ is an upper bound for
the length of an optimal solution of the corresponding \SBT
instance~$\pi$.  Thus, the two %
are equal.
\qed

\paragraph{Hardness Without Repetitions.}
With arbitrarily large meetings, SBCM is
hard even without repeating meetings. We can emulate a repeated
sequence of 2-char\-ac\-ter meetings by gradually increasing group sizes; see
Appendix~\appref{sec:hardness-no-rep}.

\section{Exact Algorithms}
\label{sec:exact}

\newcommand{\method}[1]{\textsc{#1}}

We present two exact algorithms.
Conceptually, both build up a sequence of block crossings
while keeping track of how many meetings have already been accomplished.
The first uses polynomial space; the second improves the runtime at
the cost of exponential space.

We start with a data structure that keeps track of permutations,
block crossings and meetings.
It is initialized with a given permutation and has two operations.
The \method{Check} operation returns whether a given meeting fits the current permutation.
The \method{BlockMove} operation performs a given block crossing on the permutation and then returns whether the most-recently \method{Check}ed meeting now fits.
See Appendix~\appref{app:exact} for a detailed description.

\newcommand{\contentLemDatastruct}{%
A sequence of arbitrarily interleaved \method{BlockMove} and
\method{Check} operations can be performed in $O(\beta+\mu)$ time,
where $\beta$ is the number of block crossings and $\mu$ is sum of cardinalities
of the meetings given to \method{Check}.
Space usage is $O(k)$.
}
\begin{lemma}
  \label{lem:datastruct}
  \contentLemDatastruct
\end{lemma}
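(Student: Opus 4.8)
The plan is to avoid ever materialising the full permutation after a block crossing, since a single crossing may move $\Theta(k)$ characters whereas we are only allowed $O(1)$ amortised time per crossing. I would represent the current permutation as a doubly linked list of character nodes in left-to-right order, together with an array that maps each character to its node, so that any character's node is reachable in $O(1)$ time. A block crossing then becomes a constant-size pointer splice: to exchange the adjacent blocks $\langle a,\dots,b\rangle$ and $\langle b{+}1,\dots,c\rangle$ it suffices to know the four delimiting nodes (the two ends of each block), after which rerouting the list so that the second block precedes the first touches only the three ``seam'' links and is done in $O(1)$.

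For \method{Check} I would maintain, for the most recently checked meeting $m$, a Boolean mark on each of its characters and a counter $A$ equal to the number of adjacent pairs in the list whose \emph{both} endpoints lie in $m$. The key identity is that the marked characters split into exactly $|m|-A$ maximal runs, so $m$ fits the current permutation if and only if $A=|m|-1$, i.e. there is a single run. A call \method{Check}$(m)$ therefore marks the characters of $m$ and computes $A$ by inspecting, for every $x\in m$, whether the list-predecessor of $x$ is also marked; this is $O(|m|)$ work, after which the fitting test is a single comparison. Before processing a new meeting I clear the marks of the previous one, charging that clean-up to the previous \method{Check}.

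The point of the linked-list representation is that \method{BlockMove} can now reuse this bookkeeping. After the $O(1)$ splice only the three seams change, so $A$ changes only according to the mark bits of the at most six characters incident to those seams; I would recompute $A$ by subtracting the marked--marked seams that disappear and adding those that appear, all in $O(1)$, and then report whether $A=|m|-1$. Summing up, each of the $\beta$ block crossings costs $O(1)$ and each \method{Check}$(m)$ costs $O(|m|)$ (plus an $O(|m|)$ clean-up charged forward), for a total of $O(\beta+\mu)$; the list, the character-to-node array and the marks use $O(k)$ space.

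The step I expect to be delicate is locating the four block-delimiting nodes in $O(1)$ time, because a crossing is specified by positions $a\le b<c$ while positions are exactly what the lazy representation refuses to keep up to date---maintaining them explicitly would reintroduce the $\Theta(k)$ cost per crossing. I would resolve this by having the caller address a crossing through its delimiting characters rather than through raw positions; these characters are available to the search procedure that generates the crossings, so the character-to-node array yields all four boundary nodes in $O(1)$, and hence their outer neighbours (and the six mark bits) as well. The only remaining care is in the boundary cases $a=1$ or $c=k$, where a missing neighbour is simply treated as unmarked.
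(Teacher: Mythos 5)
Your proof is correct and follows essentially the same route as the paper's: a doubly linked list updated in $O(1)$ per block crossing (only the three seam links, hence at most six adjacencies, change), together with a maintained count of adjacent pairs lying inside the current meeting, the meeting fitting if and only if that count equals $|m|-1$. The one detail you add beyond the paper --- addressing a crossing by its delimiting characters rather than by raw positions so that the four boundary nodes are located in $O(1)$ --- is a legitimate implementation point that the paper leaves implicit.
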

A block crossing can be represented by indices
$(a,b,c)$ with $1\leq a\leq b<c\leq k$; hence, there are
$\frac{k^3-k}{6}$ distinct block crossings on a permutation of length
$k$.

Now we provide an output-sensitive algorithm for SBCM whose runtime
depends on the number of block crossings required by the optimum.

\begin{theorem}
\label{thm:polySpaceExact}
An instance of SBCM can be solved in $O(k!\cdot (\frac{k^3-k}{6})^\beta\cdot (\beta+\mu))$ time and $O(\beta k)$ working space if a solution with $\beta$ block crossings exists, where $\mu=\sum_{i\in M} |m_i|$.
\end{theorem}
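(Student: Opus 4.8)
The plan is to search over all candidate solutions in order of increasing size and return the first one that works. I would iterate a target value $\beta'=0,1,2,\dots$, and for each $\beta'$ enumerate every start permutation $\pi^0$ of $C$ (there are $k!$ of them) together with every ordered sequence $\sigma=(\sigma_1,\dots,\sigma_{\beta'})$ of block crossings; as noted after Lemma~\ref{lem:datastruct} there are $\frac{k^3-k}{6}$ distinct block crossings on a length-$k$ permutation, so there are $(\frac{k^3-k}{6})^{\beta'}$ such sequences. Each candidate $(\pi^0,\sigma)$ is put through the feasibility test described below, and the first candidate that passes is returned; it is optimal because the sizes are tried in increasing order. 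Since a $\beta$-crossing solution exists by hypothesis, the search halts at $\beta'=\beta$. For $k\ge 3$ the factor $(\frac{k^3-k}{6})^{\beta'}\ge 4^{\beta'}$ grows geometrically, so the earlier iterations contribute only a constant factor and the total time is dominated by the last iteration $\beta'=\beta$; the cases $k\le 2$ are trivial and solved directly.

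The heart of the argument is the feasibility test. Given $(\pi^0,\sigma)$, we must decide whether the crossings of $\sigma$ can be placed into the gaps before the meetings $m_1,\dots,m_n$, respecting their order, so that the permutation in force when $m_i$ occurs supports $m_i$. Writing $\sigma_{\le j}$ for the prefix of the first $j$ crossings, such a placement is specified by monotone prefix lengths $0\le p_1\le\dots\le p_n\le\beta'$ and is feasible iff $\sigma_{\le p_i}(\pi^0)$ supports $m_i$ for every $i$. I claim it suffices to try the \emph{lazy} placement, where $p_i$ is the smallest value $\ge p_{i-1}$ (with $p_0=0$) for which $\sigma_{\le p_i}(\pi^0)$ supports $m_i$. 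Indeed, if any feasible placement $p_1^*\le\dots\le p_n^*$ exists, then by induction the lazy choice satisfies $p_i\le p_i^*$: the base case is immediate, and for the step $p_i^*\ge p_{i-1}^*\ge p_{i-1}$ is a value $\ge p_{i-1}$ whose prefix supports $m_i$, so the smallest such value, namely $p_i$, is at most $p_i^*$. In particular the lazy placement stays within the budget and is itself feasible. Hence the lazy test accepts exactly those $(\pi^0,\sigma)$ admitting a feasible placement; as every solution is some $(\pi^0,\sigma)$ together with a placement, no optimum is missed, even one that applies crossings eagerly.

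The lazy test is precisely what the data structure of Lemma~\ref{lem:datastruct} computes: initialize it with $\pi^0$, then for $i=1,\dots,n$ \method{Check} $m_i$ and, while $m_i$ does not fit, advance a pointer into $\sigma$ and \method{BlockMove} the next crossing (which reports whether the last-checked meeting now fits); we proceed to $m_{i+1}$ once $m_i$ fits and reject the candidate if $\sigma$ is exhausted first. This is a run of at most $\beta'$ \method{BlockMove}s interleaved with \method{Check}s whose cardinalities sum to at most $\mu$, so by Lemma~\ref{lem:datastruct} it takes $O(\beta'+\mu)$ time in $O(k)$ space. Multiplying by the $k!\cdot(\frac{k^3-k}{6})^{\beta'}$ candidates and summing the geometric series over $\beta'\le\beta$ gives the time bound $O(k!\cdot(\frac{k^3-k}{6})^{\beta}\cdot(\beta+\mu))$. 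For the space bound I would enumerate the sequences by depth-first backtracking, storing the current candidate sequence as $O(\beta)$ triples together with the $O(k)$-size permutation reached at each of the at most $\beta$ levels (so that crossings can be undone on backtracking), for $O(\beta k)$ working space. The one step that needs care is the lazy-placement claim of the second paragraph; everything else is enumeration and bookkeeping, so the correctness of the whole algorithm rests on the fact that postponing each block crossing to the latest meeting that still needs it sacrifices neither feasibility nor optimality.
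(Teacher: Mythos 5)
Your proposal is correct and is essentially the paper's proof: iterating the budget $\beta'$ while enumerating start permutations and crossing sequences by depth-first backtracking is exactly the iterative-deepening DFS the paper uses, driven by the same data structure from Lemma~\ref{lem:datastruct}, with the same geometric-series and $O(\beta k)$-space accounting. The only difference is that you spell out, via the exchange induction for the lazy placement, why greedily satisfying each meeting with the fewest crossings loses no solutions---a point the paper leaves implicit under ``correctness follows from the iterative deepening.''
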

\begin{proof}
Consider a branching algorithm that starts from a permutation of the
characters and keeps trying all possible block crossings.
This has branching factor $\frac{k^3-k}{6}$ and we can enumerate the
children of a node in constant time each by enumerating triples
$(a,b,c)$.
While applying block crossings, the algorithm keeps track of how many meetings fit this sequence of permutations using the data structure from Lemma~\ref{lem:datastruct}.
We use depth-first iterative-deepening search~\cite{k-dfid-AI85} from
all possible start permutations until we find a sequence of
permutations that fulfills all meetings.  Correctness follows from the
iterative deepening: we want an (unweighted) shortest sequence of block crossings.
The runtime and space bounds follow from the standard analysis of iterative-deepening search, observing that a node uses $O(k)$ space and it takes $O(\beta+\mu)$ time in total to evaluate a path from root to leaf.
\end{proof}
We have that $\mu$ is $O(kn)$ since there are $n$ meetings and each
consists of at most $k$ characters.
At the cost of exponential space, we can improve the runtime and get
rid of the dependence on~$\beta$, showing the problem to be fixed
parameter linear for $k$.
We note that the following algorithm can easily be adapted to handle
pairwise crossings rather than block crossings; in this case the
runtime improves upon the original result of Kostisyna et
al.~\cite{DBLP:conf/gd/KostitsynaNP0S15} by a factor of~$k!$.

\begin{theorem}
\label{thm:fpt}
An instance of SBCM can be solved in $O( k!\cdot k^3\cdot n )$ time and $O( k!\cdot k \cdot n )$ space.
\end{theorem}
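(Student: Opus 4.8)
The plan is to trade the re-exploration of the branching algorithm of Theorem~\ref{thm:polySpaceExact} for a single shortest-path computation over the whole state space of size $O(k!\,n)$. I would set up a layered graph $H$ whose vertices are pairs $(\pi,i)$, where $\pi$ ranges over all $k!$ permutations of $C$ and $i\in\{0,1,\dots,n\}$ records how many meetings have already been realized. Inside a layer I add a weight-$1$ edge $(\pi,i)\to(\pi',i)$ for every block crossing taking $\pi$ to $\pi'$; between consecutive layers I add a weight-$0$ edge $(\pi,i)\to(\pi,i+1)$ whenever $\pi$ supports $m_{i+1}$. Giving every start vertex $(\pi,0)$ distance $0$ (the start permutation is free) and asking for the cheapest vertex $(\pi,n)$, a shortest $s$--$t$ path corresponds exactly to a feasible solution: the block-crossing edges used inside layer $i$ spell out the sequence $B_{i+1}$, and each weight-$0$ edge certifies that the current permutation supports the next meeting. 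Hence the shortest-path length equals the optimum number of block crossings, and a start permutation together with the sequences $B_1,\dots,B_n$ can be read off from the stored predecessors.

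Because every edge has weight $0$ or $1$, I would compute the shortest path by $0$--$1$ BFS with a deque (weight-$0$ edges pushed to the front, weight-$1$ edges to the back), which runs in time linear in $|V(H)|+|E(H)|$. The graph has $O(k!\,n)$ vertices; each vertex has one outgoing advancement edge and $\tfrac{k^3-k}{6}=O(k^3)$ block-crossing edges, so $|E(H)|=O(k!\,k^3 n)$. Provided each edge relaxation costs $O(1)$, this yields the claimed $O(k!\,k^3 n)$ running time. The working arrays---tentative distances and predecessors over all $O(k!\,n)$ states, together with an explicit table of the $k!$ permutations and the support information per meeting---fit in $O(k!\,k\,n)$ space.

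To make each relaxation $O(1)$ I would fix a bijection between permutations and indices $\{0,\dots,k!-1\}$ (e.g.\ via the factorial number system) and precompute, for each index and each triple $(a,b,c)$, the index of the resulting permutation; the data structure of Lemma~\ref{lem:datastruct} lets me walk through a permutation and apply a block crossing cheaply, and it also answers the support tests used to install the weight-$0$ edges. With this table the neighbours of a state are enumerated and looked up in constant time apiece, and the support of each meeting is tested once per (permutation, meeting) pair.

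The main obstacle is the running-time bound itself. A direct dynamic program over meetings, with a table $f_i[\pi]$ giving the fewest block crossings to realize $m_1,\dots,m_i$ and end at $\pi$, has the min-plus transition $f_i[\pi]=\min_\sigma\bigl(f_{i-1}[\sigma]+d(\sigma,\pi)\bigr)$, where $d$ is the block-crossing distance; evaluated naively this costs $(k!)^2$ per meeting. The point I need to exploit is that $d$ is the \emph{unit-weight} graph distance in the block-crossing Cayley graph on the $k!$ permutations, so this transition is nothing but a multi-source shortest-path wave over the $O(k!\,k^3)$ edges of that graph, computable in linear rather than quadratic time. The layered-graph/$0$--$1$-BFS formulation above is exactly the global version of this observation, and verifying that the deque-based search stays linear despite the mixed $0$/$1$ weights---and that the precomputed transition table keeps every edge relaxation constant-time---is the crux of attaining the stated bounds.
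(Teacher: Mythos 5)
Your construction is essentially the paper's own proof: both replace the iterative-deepening search of Theorem~\ref{thm:polySpaceExact} by a single multi-source shortest-path computation over the $O(k!\,n)$ states $(\pi,\ell)$, with $O(k^3)$ block-crossing transitions per state and per-permutation precomputation of which meetings fit. The only (immaterial) difference is that the paper folds each block crossing together with a maximal advance of the meeting counter into one unweighted arc and runs plain BFS, whereas you keep weight-$1$ crossing edges and weight-$0$ advancement edges separate and run $0$--$1$ BFS; both give the claimed bounds.
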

\begin{proof}

Let $f(\pi, \ell)$ be the optimal number of block crossings in a solution to the given instance when restricted to the first $\ell$ meetings and to have $\pi$ as its final permutation.
Note that by definition the solution for the actual instance is given by $\min_{\pi^*} f(\pi^*,n)$, where the minimum ranges over all possible permutations.
As a base case, $f(\pi,0)=0$ for all $\pi$, since the empty set of meetings is supported by any permutation.
Let $\pi$ and $\pi'$ be permutations that are one block crossing apart and let $0\leq \ell \leq \ell'$.
If the meetings $\{m_{\ell+1}, \ldots, m_{\ell'}\}$ fit $\pi'$, then
$f(\pi',\ell')\leq f(\pi,\ell)+1$: if we can support the first $\ell$
meetings and end on $\pi$, then with one additional block crossing we can
support the first $\ell'$ meetings and end with $\pi'$.

We now model this as a graph.
Let $G$ be an unweighted directed graph on nodes $(\pi,\ell)$ and call a node \emph{start node} if $\ell=0$. %
There is an arc from $(\pi,\ell)$ to $(\pi',\ell')$ if and only if
$\pi$ and $\pi'$ are one block crossing apart, $\ell\leq\ell'$, and the meetings $\{m_{\ell+1}, \ldots, m_{\ell'}\}$ fit $\pi'$.
Note that we allow $\ell=\ell'$ since we may need to allow block
crossings that do not immediately achieve an additional meeting (cf. Proposition~\appref{prop:counterexample}), so $G$ is not acyclic.
In the constructed graph, $f(\pi,\ell)$ equals the graph distance from the node $(\pi,\ell)$ to the closest start node.
Call a path to a start node that realizes this distance \emph{optimal}.

In $G$, consider any path $[(\pi_1,\ell_1),(\pi_2,\ell_2),(\pi_3,\ell_3)]$ with $\ell_3>\ell_2$.
If meeting $\ell_2+1$ fits $\pi_2$, then $[(\pi_1,\ell_1),(\pi_2,\ell_2+1),(\pi_3,\ell_3)]$ is also a path.
Repeating this transformation shows that for all $\pi$, the node $(\pi,n)$ has an optimal path in which every arc maximally increases $\ell$.
Let $G'$ be the graph where we drop all arcs from $G$ that do not maximally increase $\ell$.
Note that $G'$ still contains a path that corresponds to the global optimum.

The graph $G'$ has $O(k!\cdot n)$ nodes and each node has outdegree $O(k^3)$.
Then a breadth-first search from all start nodes to any node $(\pi^*,n)$ achieves the claimed time and space bounds, assuming we can enumerate the outgoing arcs of a node in constant time each.

For a given node $(\pi,\ell)$ we can enumerate all possible block
crossings in constant time each, as before.
In $G'$, we also need to know the maximum $\ell'$ such that all meetings $\ell+1$ up to $\ell'$ fit $\pi'$.
Note that $\ell'$ only depends on $\ell$ and $\pi'$.
We precompute a table $M(\pi,\ell)$ that gives this value.
Computing $M(\pi,\ell)$ for given $\pi$ and all $\ell$ takes a total of $O(kn)$ time: first compute for every $m_i$ whether it fits $\pi$, then compute the implied `forward pointers' using a linear scan.
So using $O(k!\cdot k\cdot n)$ preprocessing time and $O(k!\cdot n)$ space, we have an efficient implementation of the breadth-first search.
The theorem follows.
\end{proof}

\section{SBCM with Meetings of Two Characters}
\label{sec:algorithm}

\paragraph{A Greedy Algorithm.}

To quickly draw good storyline visualizations for 2-SBCM, we develop an
\mbox{$O(k n)$}-time greedy algorithm.
Given an instance $\CS=(C,M)$, we reserve a list $B = [\,]$ that the
algorithm will use to store the block crossings.
The algorithm starts with an arbitrary permutation~$\pi^0$ of $C$.
In every step the algorithm removes all meetings from the beginning
of~$M$ that fit the current permutation $\pi^i$ of the algorithm.
Subsequently, the algorithm picks a block crossing~$b$ such that
the resulting permutation~$\pi^{i+1}=b(\pi^i)$ supports the
maximum number of meetings from the beginning of~$M$.
Then $b$ is appended to the list~$B$.
This process repeats until $M$ is empty.
The algorithm returns %
$(\pi^0,B)$.

Note that there are at most $O(k^3)$ possible block crossings.
Thus to find the appropriate block crossings, the algorithm could simply check
all of them.
Many of those,
however, will result in permutations that do not even support the next
meeting, which would be a bad choice.
Hence, our algorithm considers only \emph{relevant} block crossings,
i.e., block crossings yielding a permutation that supports the next meeting.
Let $\{c,c'\}$ be the next meeting in $M$.
If $x$ and $y$ are the positions of~$c$ and~$c'$ in the current
permutation, i.e., $\pi^i_x = c$ and $\pi^i_y =c'$ (without loss of
generality, assume $x < y$), the relevant block crossings are:
\[ \{ (z, x, y-1) \colon 1 \leq z \leq x\} \cup
   \{ (x, z, y) \colon x \leq z < y \} \cup
   \{ (x+1, y-1, z) \colon y \leq z \leq k \}. \]
So the number of relevant block crossings in each step is $k+1$.
Let $n_i$ be the maximum number of meetings at the beginning of $M$ we can
achieve by one of these block crossings.
We use the data structure in Lemma~\ref{lem:datastruct} and check for each
relevant block crossing how many meetings can be done with this permutation.
Hence, we can identify a block crossing achieving the maximum number in $O(k n_i)$
time since we have to check $k+1$ paths containing up to $n_i$ meetings each.
Clearly, the numbers of meetings $n_i$ in each iteration of the algorithm sum
up to $n$ and therefore the algorithm runs in $O(k n)$ total time.

The way we described the greedy algorithm, it starts with an arbitrary
permutation.  Instead, we could start with a
permutation that supports the maximum number of meetings before the
the first block crossing needs to be done.
In other words, we want to find a maximal prefix $M'$ of $M$ such that
$(C, M')$ can be represented without any block
crossings.
We can find $M'$ in
$O(kn)$ time: we start with an empty graph and add the meetings
successively.  In each step we check whether the graph is still a
collection of paths, which can be done in $O(k)$ time.
It is easy to construct a permutation that supports all meetings in $M'$.
While this
is a sensible heuristic, we do not prove that this reduces the total number of
block crossings.
Indeed, we experimentally observe
that while the heuristic is generally good, this is not always the
case;
see Fig.~\ref{fig:greedy-fpt} for an example that uses the heuristic start permutation.

\begin{figure}[tb]
  \begin{subfigure}[b]{.52\textwidth}
    \centering
    \begin{tikzpicture}[yscale=.65,xscale=.435]
      \storylineset{
        norightlabel,
        leftwidth=0.2,
        rightwidth=0.3,
        distx=3,
        crossinglength=0.75
      }
      \drawstoryline[drawingstyle=char7,leftwidth=0.2,label=$7$]{1,1,4,7,6}
      \drawstoryline[drawingstyle=char5,leftwidth=0.2,label=$5$]{2,2,5,3,3}
      \drawstoryline[drawingstyle=char4,leftwidth=0.2,label=$4$]{3,3,6,4,7}
      \drawstoryline[drawingstyle=char2,leftwidth=0.2,label=$2$]{4,4,1,1,1}
      \drawstoryline[drawingstyle=char6,leftwidth=0.2,label=$6$]{5,5,2,2,2}
      \drawstoryline[drawingstyle=char3,leftwidth=0.2,label=$3$]{6,6,3,6,5}
      \drawstoryline[drawingstyle=char1,leftwidth=0.2,label=$1$]{7,7,7,5,4}

      \drawstorylinemeeting{0.0}{3}{4}
      \drawstorylinemeeting{0.2}{6}{7}
      \drawstorylinemeeting{0.4}{4}{5}
      \drawstorylinemeeting{0.6}{5}{6}
      \drawstorylinemeeting{0.8}{2}{3}
      \drawstorylinemeeting{1.0}{1}{2}

      \drawstorylinemeeting{1.82}{6}{7}
      \drawstorylinemeeting{2.0}{3}{4}

      \drawstorylinemeeting{2.92}{2}{3}

      \drawstorylinemeeting{3.85}{3}{4}
    \end{tikzpicture}

    \caption{Greedy solution}
    \label{fig:heuristic}
  \end{subfigure}
  \hfill
  \begin{subfigure}[b]{.42\textwidth}
    \centering
    \begin{tikzpicture}[yscale=.65,xscale=0.284]
      \storylineset{
        norightlabel,
        leftwidth=0.2,
        rightwidth=0.5,
        distx=3,
        crossinglength=0.75
      }
      \drawstoryline[drawingstyle=char7,leftwidth=0.2,label=$7$]{1,1,1,1,1,1}
      \drawstoryline[drawingstyle=char5,leftwidth=0.2,label=$5$]{2,2,2,2,4,4}
      \drawstoryline[drawingstyle=char4,leftwidth=0.2,label=$4$]{3,3,3,6,6,6}
      \drawstoryline[drawingstyle=char2,leftwidth=0.2,label=$2$]{4,4,4,7,7,7}
      \drawstoryline[drawingstyle=char6,leftwidth=0.2,label=$6$]{5,5,5,3,5,5}
      \drawstoryline[drawingstyle=char3,leftwidth=0.2,label=$3$]{6,6,6,4,2,2}
      \drawstoryline[drawingstyle=char1,leftwidth=0.2,label=$1$]{7,7,7,5,3,3}

      \drawstorylinemeeting{0.3}{3}{4}
      \drawstorylinemeeting{0.6}{6}{7}
      \drawstorylinemeeting{0.9}{4}{5}
      \drawstorylinemeeting{1.2}{5}{6}
      \drawstorylinemeeting{1.5}{2}{3}
      \drawstorylinemeeting{1.8}{1}{2}

      \drawstorylinemeeting{2.92}{5}{6}

      \drawstorylinemeeting{4.2}{1}{2}
      \drawstorylinemeeting{4.55}{4}{5}
      \drawstorylinemeeting{4.95}{3}{4}
    \end{tikzpicture}

    \caption{Optimal solution}
    \label{fig:fpt}
  \end{subfigure}
  \caption{The greedy algorithm is not optimal.}
  \label{fig:greedy-fpt}
\end{figure}
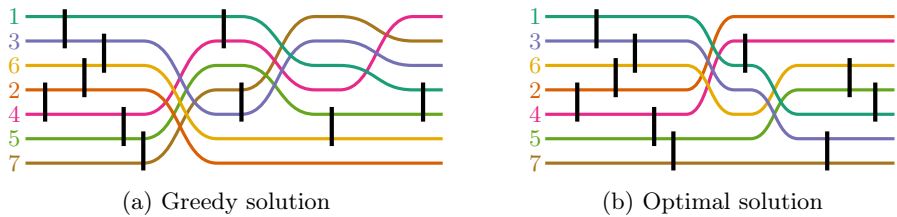

Note that the greedy algorithm yields optimal solutions for special cases of 2-SBCM.
The proof for the following theorem can be found in Appendix~\appref{app:2sbcm}.

\newcommand{\contentThmGreedyOptimal}{%
  For $k = 3$, the greedy algorithm produces optimal solutions.
}
\begin{theorem}
  \label{thm:k3-greedy-optimal}
  \contentThmGreedyOptimal
\end{theorem}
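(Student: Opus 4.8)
The plan is to collapse the six length-$3$ permutations into three equivalence classes and thereby reduce the instance to a one-dimensional interval-partition problem. Since $k=3$, a $2$-character meeting $\{x,y\}$ fits a permutation $\langle a,b,c\rangle$ iff $x,y$ are adjacent, i.e.\ iff the third character $z$ (with $\{x,y,z\}=C$) is \emph{not} in the middle position. Hence $\langle a,b,c\rangle$ supports exactly the two meetings other than its endpoint pair $\{a,c\}$, and a permutation and its reversal (which share the middle character) support exactly the same two meetings. First I would group the permutations into three \emph{classes} indexed by the middle character, and record for each meeting $m_i$ the single class $f_i$ it fails to fit, namely the class whose middle character is the one missing from $m_i$. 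Then supporting $m_i$ means exactly being in some class $\neq f_i$.

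Next I would analyze block crossings modulo class. Checking the four block crossings available on a length-$3$ permutation, $(1,1,2),(2,2,3),(1,1,3),(1,2,3)$, shows that every one of them changes the middle character, hence the class, and that from any permutation both other classes are reachable by a single block crossing (while the reversal, i.e.\ the same class, needs two). Two consequences follow: (i) doing nothing preserves the class, and one crossing always both suffices and is necessary to switch class; (ii) any feasible solution induces a class-over-time sequence with $c_i\neq f_i$, and its number of block crossings is at least the number of class changes in that sequence. This reduces $k=3$ SBCM to partitioning $m_1,\dots,m_n$ into the fewest contiguous \emph{runs}, each servable by a single class avoiding every $f_i$ inside it; the optimum number of crossings equals (number of runs)$\,-1$. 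A run on $[p,q]$ is feasible iff $\{f_p,\dots,f_q\}\neq\{A,B,C\}$, a property that is closed under shrinking the interval.

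I would then show the greedy algorithm realizes the minimum-run partition. Starting from a permutation that supports a maximal initial prefix fixes the first run to be maximal. The decisive finite observation is that, with only three classes, a feasible run contains at most two distinct forbidden classes; therefore every \emph{non-final} run ends exactly when the third distinct forbidden class appears, which forces its serving class to be the unique class avoiding the two that occur, and this class equals the class $\neq f_{q+1}$ demanded by the breaking meeting. So whenever greedy is stuck and picks the block crossing maximizing the number of subsequently consumed meetings, the maximization lands on precisely this forced class, and greedy extends each run maximally. A standard greedy-stays-ahead induction, using sub-interval-closedness of feasibility, shows the maximal-run partition uses the fewest runs; combined with the lower bound from the previous step, greedy's crossing count matches the optimum.

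The delicate point, which I expect to be the main obstacle, is the choice of start permutation. With a genuinely arbitrary start the greedy can pay one unnecessary crossing to escape a bad starting class (for instance the two-meeting instance $m_1=\{1,3\}, m_2=\{1,2\}$, solvable with zero crossings from $\langle 2,1,3\rangle$ but costing one crossing if started in class $A$ or $B$), so the statement must be read with the start that supports a maximal prefix; for $k=3$ this forces the first run to be maximal and selects its forced class. Once the start is fixed this way, the whole trajectory is essentially forced, and the remaining work---verifying the effect of the four block crossings on classes and assembling the greedy-stays-ahead bound---is routine finite checking over the six length-$3$ permutations.
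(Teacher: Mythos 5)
Your proposal is correct and follows essentially the same route as the paper's proof: your ``classes'' indexed by the middle character are exactly the paper's permutation-up-to-reversal equivalence (each supporting two of the three possible meetings), your ``runs'' are the paper's ``epochs,'' and both arguments combine the one-crossing-per-class-change fact with the lower bound that every non-final run/epoch forces a crossing. Your closing caveat is well taken and worth keeping: the theorem indeed fails for a truly arbitrary start permutation (e.g.\ a single meeting $\{1,3\}$ started from $\langle 1,2,3\rangle$), so the statement must be read for the greedy variant whose start permutation supports a maximal prefix---an assumption the paper's proof makes only implicitly via Lemma~\ref{lem:2sbcmUpper}.
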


\paragraph{Experimental Evaluation.}

In this section, we report on some preliminary experimental results.
We only consider 2-SBCM.  We generated random instances as follows.
Given $n$ and $k$, we generate $n$ pairs of characters as meetings,
uniformly at random using rejection sampling to ensure that
consecutive meetings are different. (Repeated meetings are not
sensible.)

First, we consider the exact algorithm of
Theorem~\ref{thm:polySpaceExact}.  As expected, its runtime depends heavily
on~$k$ (Fig.~\ref{fig:greedy-all}, left).  Perhaps unexpectedly, we observe exponential runtime in $n$. This is actually a property of our random
instances, in which $\beta$ tends to increase linearly with $n$.
Note that this does not invalidate the algorithm since we may be
interested in instances for which $\beta$ is indeed small.

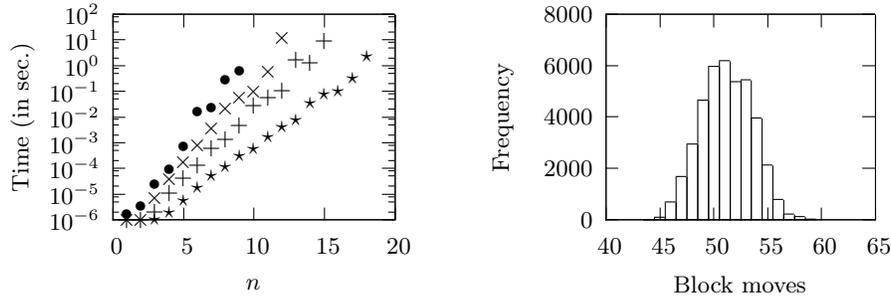
\begin{figure}[tb]
  \centering

\setlength{\unitlength}{0.240900pt}
\ifx\plotpoint\undefined\newsavebox{\plotpoint}\fi
\sbox{\plotpoint}{\rule[-0.200pt]{0.400pt}{0.400pt}}%
\begin{picture}(674,495)(0,0)
\sbox{\plotpoint}{\rule[-0.200pt]{0.400pt}{0.400pt}}%
\put(171.0,131.0){\rule[-0.200pt]{4.818pt}{0.400pt}}
\put(151,131){\makebox(0,0)[r]{$10^{-6}$}}
\put(594.0,131.0){\rule[-0.200pt]{4.818pt}{0.400pt}}
\put(171.0,143.0){\rule[-0.200pt]{2.409pt}{0.400pt}}
\put(604.0,143.0){\rule[-0.200pt]{2.409pt}{0.400pt}}
\put(171.0,159.0){\rule[-0.200pt]{2.409pt}{0.400pt}}
\put(604.0,159.0){\rule[-0.200pt]{2.409pt}{0.400pt}}
\put(171.0,167.0){\rule[-0.200pt]{2.409pt}{0.400pt}}
\put(604.0,167.0){\rule[-0.200pt]{2.409pt}{0.400pt}}
\put(171.0,171.0){\rule[-0.200pt]{4.818pt}{0.400pt}}
\put(151,171){\makebox(0,0)[r]{$10^{-5}$}}
\put(594.0,171.0){\rule[-0.200pt]{4.818pt}{0.400pt}}
\put(171.0,184.0){\rule[-0.200pt]{2.409pt}{0.400pt}}
\put(604.0,184.0){\rule[-0.200pt]{2.409pt}{0.400pt}}
\put(171.0,200.0){\rule[-0.200pt]{2.409pt}{0.400pt}}
\put(604.0,200.0){\rule[-0.200pt]{2.409pt}{0.400pt}}
\put(171.0,208.0){\rule[-0.200pt]{2.409pt}{0.400pt}}
\put(604.0,208.0){\rule[-0.200pt]{2.409pt}{0.400pt}}
\put(171.0,212.0){\rule[-0.200pt]{4.818pt}{0.400pt}}
\put(151,212){\makebox(0,0)[r]{$10^{-4}$}}
\put(594.0,212.0){\rule[-0.200pt]{4.818pt}{0.400pt}}
\put(171.0,224.0){\rule[-0.200pt]{2.409pt}{0.400pt}}
\put(604.0,224.0){\rule[-0.200pt]{2.409pt}{0.400pt}}
\put(171.0,240.0){\rule[-0.200pt]{2.409pt}{0.400pt}}
\put(604.0,240.0){\rule[-0.200pt]{2.409pt}{0.400pt}}
\put(171.0,248.0){\rule[-0.200pt]{2.409pt}{0.400pt}}
\put(604.0,248.0){\rule[-0.200pt]{2.409pt}{0.400pt}}
\put(171.0,252.0){\rule[-0.200pt]{4.818pt}{0.400pt}}
\put(151,252){\makebox(0,0)[r]{$10^{-3}$}}
\put(594.0,252.0){\rule[-0.200pt]{4.818pt}{0.400pt}}
\put(171.0,264.0){\rule[-0.200pt]{2.409pt}{0.400pt}}
\put(604.0,264.0){\rule[-0.200pt]{2.409pt}{0.400pt}}
\put(171.0,280.0){\rule[-0.200pt]{2.409pt}{0.400pt}}
\put(604.0,280.0){\rule[-0.200pt]{2.409pt}{0.400pt}}
\put(171.0,289.0){\rule[-0.200pt]{2.409pt}{0.400pt}}
\put(604.0,289.0){\rule[-0.200pt]{2.409pt}{0.400pt}}
\put(171.0,292.0){\rule[-0.200pt]{4.818pt}{0.400pt}}
\put(151,292){\makebox(0,0)[r]{$10^{-2}$}}
\put(594.0,292.0){\rule[-0.200pt]{4.818pt}{0.400pt}}
\put(171.0,305.0){\rule[-0.200pt]{2.409pt}{0.400pt}}
\put(604.0,305.0){\rule[-0.200pt]{2.409pt}{0.400pt}}
\put(171.0,321.0){\rule[-0.200pt]{2.409pt}{0.400pt}}
\put(604.0,321.0){\rule[-0.200pt]{2.409pt}{0.400pt}}
\put(171.0,329.0){\rule[-0.200pt]{2.409pt}{0.400pt}}
\put(604.0,329.0){\rule[-0.200pt]{2.409pt}{0.400pt}}
\put(171.0,333.0){\rule[-0.200pt]{4.818pt}{0.400pt}}
\put(151,333){\makebox(0,0)[r]{$10^{-1}$}}
\put(594.0,333.0){\rule[-0.200pt]{4.818pt}{0.400pt}}
\put(171.0,345.0){\rule[-0.200pt]{2.409pt}{0.400pt}}
\put(604.0,345.0){\rule[-0.200pt]{2.409pt}{0.400pt}}
\put(171.0,361.0){\rule[-0.200pt]{2.409pt}{0.400pt}}
\put(604.0,361.0){\rule[-0.200pt]{2.409pt}{0.400pt}}
\put(171.0,369.0){\rule[-0.200pt]{2.409pt}{0.400pt}}
\put(604.0,369.0){\rule[-0.200pt]{2.409pt}{0.400pt}}
\put(171.0,373.0){\rule[-0.200pt]{4.818pt}{0.400pt}}
\put(151,373){\makebox(0,0)[r]{$10^{0}$}}
\put(594.0,373.0){\rule[-0.200pt]{4.818pt}{0.400pt}}
\put(171.0,385.0){\rule[-0.200pt]{2.409pt}{0.400pt}}
\put(604.0,385.0){\rule[-0.200pt]{2.409pt}{0.400pt}}
\put(171.0,401.0){\rule[-0.200pt]{2.409pt}{0.400pt}}
\put(604.0,401.0){\rule[-0.200pt]{2.409pt}{0.400pt}}
\put(171.0,410.0){\rule[-0.200pt]{2.409pt}{0.400pt}}
\put(604.0,410.0){\rule[-0.200pt]{2.409pt}{0.400pt}}
\put(171.0,414.0){\rule[-0.200pt]{4.818pt}{0.400pt}}
\put(151,414){\makebox(0,0)[r]{$10^{1}$}}
\put(594.0,414.0){\rule[-0.200pt]{4.818pt}{0.400pt}}
\put(171.0,426.0){\rule[-0.200pt]{2.409pt}{0.400pt}}
\put(604.0,426.0){\rule[-0.200pt]{2.409pt}{0.400pt}}
\put(171.0,442.0){\rule[-0.200pt]{2.409pt}{0.400pt}}
\put(604.0,442.0){\rule[-0.200pt]{2.409pt}{0.400pt}}
\put(171.0,450.0){\rule[-0.200pt]{2.409pt}{0.400pt}}
\put(604.0,450.0){\rule[-0.200pt]{2.409pt}{0.400pt}}
\put(171.0,454.0){\rule[-0.200pt]{4.818pt}{0.400pt}}
\put(151,454){\makebox(0,0)[r]{$10^{2}$}}
\put(594.0,454.0){\rule[-0.200pt]{4.818pt}{0.400pt}}
\put(171.0,131.0){\rule[-0.200pt]{0.400pt}{4.818pt}}
\put(171,90){\makebox(0,0){ 0}}
\put(171.0,434.0){\rule[-0.200pt]{0.400pt}{4.818pt}}
\put(282.0,131.0){\rule[-0.200pt]{0.400pt}{4.818pt}}
\put(282,90){\makebox(0,0){ 5}}
\put(282.0,434.0){\rule[-0.200pt]{0.400pt}{4.818pt}}
\put(392.0,131.0){\rule[-0.200pt]{0.400pt}{4.818pt}}
\put(392,90){\makebox(0,0){ 10}}
\put(392.0,434.0){\rule[-0.200pt]{0.400pt}{4.818pt}}
\put(503.0,131.0){\rule[-0.200pt]{0.400pt}{4.818pt}}
\put(503,90){\makebox(0,0){ 15}}
\put(503.0,434.0){\rule[-0.200pt]{0.400pt}{4.818pt}}
\put(614.0,131.0){\rule[-0.200pt]{0.400pt}{4.818pt}}
\put(614,90){\makebox(0,0){ 20}}
\put(614.0,434.0){\rule[-0.200pt]{0.400pt}{4.818pt}}
\put(171.0,131.0){\rule[-0.200pt]{0.400pt}{77.811pt}}
\put(171.0,131.0){\rule[-0.200pt]{106.719pt}{0.400pt}}
\put(614.0,131.0){\rule[-0.200pt]{0.400pt}{77.811pt}}
\put(171.0,454.0){\rule[-0.200pt]{106.719pt}{0.400pt}}
\put(30,292){\makebox(0,0){\rotatebox{90}{Time (in sec.)}}}
\put(392,29){\makebox(0,0){$n$}}
\put(237,131){\makebox(0,0){$\star$}}
\put(260,143){\makebox(0,0){$\star$}}
\put(282,162){\makebox(0,0){$\star$}}
\put(304,182){\makebox(0,0){$\star$}}
\put(326,200){\makebox(0,0){$\star$}}
\put(348,215){\makebox(0,0){$\star$}}
\put(370,232){\makebox(0,0){$\star$}}
\put(392,243){\makebox(0,0){$\star$}}
\put(415,261){\makebox(0,0){$\star$}}
\put(437,277){\makebox(0,0){$\star$}}
\put(459,288){\makebox(0,0){$\star$}}
\put(481,314){\makebox(0,0){$\star$}}
\put(503,329){\makebox(0,0){$\star$}}
\put(525,333){\makebox(0,0){$\star$}}
\put(548,353){\makebox(0,0){$\star$}}
\put(570,388){\makebox(0,0){$\star$}}
\put(193,131){\makebox(0,0){$+$}}
\put(215,131){\makebox(0,0){$+$}}
\put(237,143){\makebox(0,0){$+$}}
\put(260,173){\makebox(0,0){$+$}}
\put(282,196){\makebox(0,0){$+$}}
\put(304,216){\makebox(0,0){$+$}}
\put(326,243){\makebox(0,0){$+$}}
\put(348,257){\makebox(0,0){$+$}}
\put(370,279){\makebox(0,0){$+$}}
\put(392,310){\makebox(0,0){$+$}}
\put(415,323){\makebox(0,0){$+$}}
\put(437,334){\makebox(0,0){$+$}}
\put(459,382){\makebox(0,0){$+$}}
\put(481,378){\makebox(0,0){$+$}}
\put(503,412){\makebox(0,0){$+$}}
\sbox{\plotpoint}{\rule[-0.400pt]{0.800pt}{0.800pt}}%
\sbox{\plotpoint}{\rule[-0.200pt]{0.400pt}{0.400pt}}%
\put(193,131){\makebox(0,0){$\times$}}
\put(215,131){\makebox(0,0){$\times$}}
\put(237,165){\makebox(0,0){$\times$}}
\put(260,195){\makebox(0,0){$\times$}}
\put(282,221){\makebox(0,0){$\times$}}
\put(304,248){\makebox(0,0){$\times$}}
\put(326,274){\makebox(0,0){$\times$}}
\put(348,305){\makebox(0,0){$\times$}}
\put(370,323){\makebox(0,0){$\times$}}
\put(392,333){\makebox(0,0){$\times$}}
\put(415,364){\makebox(0,0){$\times$}}
\put(437,417){\makebox(0,0){$\times$}}
\sbox{\plotpoint}{\rule[-0.500pt]{1.000pt}{1.000pt}}%
\sbox{\plotpoint}{\rule[-0.200pt]{0.400pt}{0.400pt}}%
\put(193,143){\raisebox{-.8pt}{\makebox(0,0){$\bullet$}}}
\put(215,155){\raisebox{-.8pt}{\makebox(0,0){$\bullet$}}}
\put(237,189){\raisebox{-.8pt}{\makebox(0,0){$\bullet$}}}
\put(260,213){\raisebox{-.8pt}{\makebox(0,0){$\bullet$}}}
\put(282,249){\raisebox{-.8pt}{\makebox(0,0){$\bullet$}}}
\put(304,303){\raisebox{-.8pt}{\makebox(0,0){$\bullet$}}}
\put(326,309){\raisebox{-.8pt}{\makebox(0,0){$\bullet$}}}
\put(348,353){\raisebox{-.8pt}{\makebox(0,0){$\bullet$}}}
\put(370,368){\raisebox{-.8pt}{\makebox(0,0){$\bullet$}}}
\put(171.0,131.0){\rule[-0.200pt]{0.400pt}{77.811pt}}
\put(171.0,131.0){\rule[-0.200pt]{106.719pt}{0.400pt}}
\put(614.0,131.0){\rule[-0.200pt]{0.400pt}{77.811pt}}
\put(171.0,454.0){\rule[-0.200pt]{106.719pt}{0.400pt}}
\end{picture} \hfill
\setlength{\unitlength}{0.240900pt}
\ifx\plotpoint\undefined\newsavebox{\plotpoint}\fi
\sbox{\plotpoint}{\rule[-0.200pt]{0.400pt}{0.400pt}}%
\begin{picture}(674,495)(0,0)
\sbox{\plotpoint}{\rule[-0.200pt]{0.400pt}{0.400pt}}%
\put(191.0,131.0){\rule[-0.200pt]{4.818pt}{0.400pt}}
\put(171,131){\makebox(0,0)[r]{ 0}}
\put(594.0,131.0){\rule[-0.200pt]{4.818pt}{0.400pt}}
\put(191.0,212.0){\rule[-0.200pt]{4.818pt}{0.400pt}}
\put(171,212){\makebox(0,0)[r]{ 2000}}
\put(594.0,212.0){\rule[-0.200pt]{4.818pt}{0.400pt}}
\put(191.0,293.0){\rule[-0.200pt]{4.818pt}{0.400pt}}
\put(171,293){\makebox(0,0)[r]{ 4000}}
\put(594.0,293.0){\rule[-0.200pt]{4.818pt}{0.400pt}}
\put(191.0,373.0){\rule[-0.200pt]{4.818pt}{0.400pt}}
\put(171,373){\makebox(0,0)[r]{ 6000}}
\put(594.0,373.0){\rule[-0.200pt]{4.818pt}{0.400pt}}
\put(191.0,454.0){\rule[-0.200pt]{4.818pt}{0.400pt}}
\put(171,454){\makebox(0,0)[r]{ 8000}}
\put(594.0,454.0){\rule[-0.200pt]{4.818pt}{0.400pt}}
\put(191.0,131.0){\rule[-0.200pt]{0.400pt}{4.818pt}}
\put(191,90){\makebox(0,0){ 40}}
\put(191.0,434.0){\rule[-0.200pt]{0.400pt}{4.818pt}}
\put(276.0,131.0){\rule[-0.200pt]{0.400pt}{4.818pt}}
\put(276,90){\makebox(0,0){ 45}}
\put(276.0,434.0){\rule[-0.200pt]{0.400pt}{4.818pt}}
\put(360.0,131.0){\rule[-0.200pt]{0.400pt}{4.818pt}}
\put(360,90){\makebox(0,0){ 50}}
\put(360.0,434.0){\rule[-0.200pt]{0.400pt}{4.818pt}}
\put(445.0,131.0){\rule[-0.200pt]{0.400pt}{4.818pt}}
\put(445,90){\makebox(0,0){ 55}}
\put(445.0,434.0){\rule[-0.200pt]{0.400pt}{4.818pt}}
\put(529.0,131.0){\rule[-0.200pt]{0.400pt}{4.818pt}}
\put(529,90){\makebox(0,0){ 60}}
\put(529.0,434.0){\rule[-0.200pt]{0.400pt}{4.818pt}}
\put(614.0,131.0){\rule[-0.200pt]{0.400pt}{4.818pt}}
\put(614,90){\makebox(0,0){ 65}}
\put(614.0,434.0){\rule[-0.200pt]{0.400pt}{4.818pt}}
\put(191.0,131.0){\rule[-0.200pt]{0.400pt}{77.811pt}}
\put(191.0,131.0){\rule[-0.200pt]{101.901pt}{0.400pt}}
\put(614.0,131.0){\rule[-0.200pt]{0.400pt}{77.811pt}}
\put(191.0,454.0){\rule[-0.200pt]{101.901pt}{0.400pt}}
\put(30,292){\makebox(0,0){\rotatebox{90}{Frequency}}}
\put(402,29){\makebox(0,0){Block moves}}
\put(250,131){\usebox{\plotpoint}}
\put(250.0,131.0){\rule[-0.200pt]{4.095pt}{0.400pt}}
\put(250.0,131.0){\rule[-0.200pt]{4.095pt}{0.400pt}}
\put(267.0,131.0){\rule[-0.200pt]{0.400pt}{0.964pt}}
\put(267.0,135.0){\rule[-0.200pt]{4.095pt}{0.400pt}}
\put(284.0,131.0){\rule[-0.200pt]{0.400pt}{0.964pt}}
\put(267.0,131.0){\rule[-0.200pt]{4.095pt}{0.400pt}}
\put(284.0,131.0){\rule[-0.200pt]{0.400pt}{6.745pt}}
\put(284.0,159.0){\rule[-0.200pt]{4.095pt}{0.400pt}}
\put(301.0,131.0){\rule[-0.200pt]{0.400pt}{6.745pt}}
\put(284.0,131.0){\rule[-0.200pt]{4.095pt}{0.400pt}}
\put(301.0,131.0){\rule[-0.200pt]{0.400pt}{16.381pt}}
\put(301.0,199.0){\rule[-0.200pt]{4.095pt}{0.400pt}}
\put(318.0,131.0){\rule[-0.200pt]{0.400pt}{16.381pt}}
\put(301.0,131.0){\rule[-0.200pt]{4.095pt}{0.400pt}}
\put(318.0,131.0){\rule[-0.200pt]{0.400pt}{28.667pt}}
\put(318.0,250.0){\rule[-0.200pt]{4.095pt}{0.400pt}}
\put(335.0,131.0){\rule[-0.200pt]{0.400pt}{28.667pt}}
\put(318.0,131.0){\rule[-0.200pt]{4.095pt}{0.400pt}}
\put(335.0,131.0){\rule[-0.200pt]{0.400pt}{45.289pt}}
\put(335.0,319.0){\rule[-0.200pt]{4.095pt}{0.400pt}}
\put(352.0,131.0){\rule[-0.200pt]{0.400pt}{45.289pt}}
\put(335.0,131.0){\rule[-0.200pt]{4.095pt}{0.400pt}}
\put(352.0,131.0){\rule[-0.200pt]{0.400pt}{58.057pt}}
\put(352.0,372.0){\rule[-0.200pt]{4.095pt}{0.400pt}}
\put(369.0,131.0){\rule[-0.200pt]{0.400pt}{58.057pt}}
\put(352.0,131.0){\rule[-0.200pt]{4.095pt}{0.400pt}}
\put(369.0,131.0){\rule[-0.200pt]{0.400pt}{60.225pt}}
\put(369.0,381.0){\rule[-0.200pt]{4.095pt}{0.400pt}}
\put(386.0,131.0){\rule[-0.200pt]{0.400pt}{60.225pt}}
\put(369.0,131.0){\rule[-0.200pt]{4.095pt}{0.400pt}}
\put(386.0,131.0){\rule[-0.200pt]{0.400pt}{52.275pt}}
\put(386.0,348.0){\rule[-0.200pt]{4.095pt}{0.400pt}}
\put(403.0,131.0){\rule[-0.200pt]{0.400pt}{52.275pt}}
\put(386.0,131.0){\rule[-0.200pt]{4.095pt}{0.400pt}}
\put(403.0,131.0){\rule[-0.200pt]{0.400pt}{52.998pt}}
\put(403.0,351.0){\rule[-0.200pt]{3.854pt}{0.400pt}}
\put(419.0,131.0){\rule[-0.200pt]{0.400pt}{52.998pt}}
\put(403.0,131.0){\rule[-0.200pt]{3.854pt}{0.400pt}}
\put(419.0,131.0){\rule[-0.200pt]{0.400pt}{38.544pt}}
\put(419.0,291.0){\rule[-0.200pt]{4.095pt}{0.400pt}}
\put(436.0,131.0){\rule[-0.200pt]{0.400pt}{38.544pt}}
\put(419.0,131.0){\rule[-0.200pt]{4.095pt}{0.400pt}}
\put(436.0,131.0){\rule[-0.200pt]{0.400pt}{20.717pt}}
\put(436.0,217.0){\rule[-0.200pt]{4.095pt}{0.400pt}}
\put(453.0,131.0){\rule[-0.200pt]{0.400pt}{20.717pt}}
\put(436.0,131.0){\rule[-0.200pt]{4.095pt}{0.400pt}}
\put(453.0,131.0){\rule[-0.200pt]{0.400pt}{7.709pt}}
\put(453.0,163.0){\rule[-0.200pt]{4.095pt}{0.400pt}}
\put(470.0,131.0){\rule[-0.200pt]{0.400pt}{7.709pt}}
\put(453.0,131.0){\rule[-0.200pt]{4.095pt}{0.400pt}}
\put(470.0,131.0){\rule[-0.200pt]{0.400pt}{2.168pt}}
\put(470.0,140.0){\rule[-0.200pt]{4.095pt}{0.400pt}}
\put(487.0,131.0){\rule[-0.200pt]{0.400pt}{2.168pt}}
\put(470.0,131.0){\rule[-0.200pt]{4.095pt}{0.400pt}}
\put(487.0,131.0){\rule[-0.200pt]{0.400pt}{1.204pt}}
\put(487.0,136.0){\rule[-0.200pt]{4.095pt}{0.400pt}}
\put(504.0,131.0){\rule[-0.200pt]{0.400pt}{1.204pt}}
\put(487.0,131.0){\rule[-0.200pt]{4.095pt}{0.400pt}}
\put(504.0,131.0){\usebox{\plotpoint}}
\put(504.0,132.0){\rule[-0.200pt]{4.095pt}{0.400pt}}
\put(521.0,131.0){\usebox{\plotpoint}}
\put(504.0,131.0){\rule[-0.200pt]{4.095pt}{0.400pt}}
\put(191.0,131.0){\rule[-0.200pt]{0.400pt}{77.811pt}}
\put(191.0,131.0){\rule[-0.200pt]{101.901pt}{0.400pt}}
\put(614.0,131.0){\rule[-0.200pt]{0.400pt}{77.811pt}}
\put(191.0,454.0){\rule[-0.200pt]{101.901pt}{0.400pt}}
\end{picture}

  \caption{Left: Runtime of the exact algorithm of
    Theorem~\ref{thm:polySpaceExact} on random instances with $k=4
    (\star), 5 (+),6 (\times),7 (\bullet)$.  Each data point is the
    average of $50$ random instances.  Right: Histogram of the number of
    block crossings used by the greedy algorithm for all $k!$
    different start permutations, on a single random instance with
    $n=100$ and $k=8$.}%
  \label{fig:greedy-all}
\end{figure}

Since the exact algorithm is feasible only for rather small instances,
we now shift our focus to the greedy algorithm.
Recall that it starts with an
arbitrary permutation and proceeds greedily.
The histogram in Fig.~\ref{fig:greedy-all} (right) shows the
number of block
crossings used by the greedy algorithm depending on the start permutation, for a single random
instance:  this bell curve is typical.  We see that there are
``rare'' start permutations that do strictly better than almost all others.  Indeed,
for the reported instance, a random start permutation does $7.2$ block
crossings worse in expectation than the best possible start permutation.

\begin{figure}[tb]
  \centering
\setlength{\unitlength}{0.240900pt}
\ifx\plotpoint\undefined\newsavebox{\plotpoint}\fi
\sbox{\plotpoint}{\rule[-0.200pt]{0.400pt}{0.400pt}}%
\begin{picture}(674,360)(0,0)
\sbox{\plotpoint}{\rule[-0.200pt]{0.400pt}{0.400pt}}%
\put(151.0,131.0){\rule[-0.200pt]{4.818pt}{0.400pt}}
\put(131,131){\makebox(0,0)[r]{ 0}}
\put(594.0,131.0){\rule[-0.200pt]{4.818pt}{0.400pt}}
\put(151.0,178.0){\rule[-0.200pt]{4.818pt}{0.400pt}}
\put(131,178){\makebox(0,0)[r]{ 10}}
\put(594.0,178.0){\rule[-0.200pt]{4.818pt}{0.400pt}}
\put(151.0,225.0){\rule[-0.200pt]{4.818pt}{0.400pt}}
\put(131,225){\makebox(0,0)[r]{ 20}}
\put(594.0,225.0){\rule[-0.200pt]{4.818pt}{0.400pt}}
\put(151.0,273.0){\rule[-0.200pt]{4.818pt}{0.400pt}}
\put(131,273){\makebox(0,0)[r]{ 30}}
\put(594.0,273.0){\rule[-0.200pt]{4.818pt}{0.400pt}}
\put(151.0,320.0){\rule[-0.200pt]{4.818pt}{0.400pt}}
\put(131,320){\makebox(0,0)[r]{ 40}}
\put(594.0,320.0){\rule[-0.200pt]{4.818pt}{0.400pt}}
\put(151.0,131.0){\rule[-0.200pt]{0.400pt}{4.818pt}}
\put(151,90){\makebox(0,0){-5}}
\put(151.0,300.0){\rule[-0.200pt]{0.400pt}{4.818pt}}
\put(244.0,131.0){\rule[-0.200pt]{0.400pt}{4.818pt}}
\put(244,90){\makebox(0,0){ 0}}
\put(244.0,300.0){\rule[-0.200pt]{0.400pt}{4.818pt}}
\put(336.0,131.0){\rule[-0.200pt]{0.400pt}{4.818pt}}
\put(336,90){\makebox(0,0){ 5}}
\put(336.0,300.0){\rule[-0.200pt]{0.400pt}{4.818pt}}
\put(429.0,131.0){\rule[-0.200pt]{0.400pt}{4.818pt}}
\put(429,90){\makebox(0,0){ 10}}
\put(429.0,300.0){\rule[-0.200pt]{0.400pt}{4.818pt}}
\put(521.0,131.0){\rule[-0.200pt]{0.400pt}{4.818pt}}
\put(521,90){\makebox(0,0){ 15}}
\put(521.0,300.0){\rule[-0.200pt]{0.400pt}{4.818pt}}
\put(614.0,131.0){\rule[-0.200pt]{0.400pt}{4.818pt}}
\put(614,90){\makebox(0,0){ 20}}
\put(614.0,300.0){\rule[-0.200pt]{0.400pt}{4.818pt}}
\put(151.0,131.0){\rule[-0.200pt]{0.400pt}{45.530pt}}
\put(151.0,131.0){\rule[-0.200pt]{111.537pt}{0.400pt}}
\put(614.0,131.0){\rule[-0.200pt]{0.400pt}{45.530pt}}
\put(151.0,320.0){\rule[-0.200pt]{111.537pt}{0.400pt}}
\put(30,225){\makebox(0,0){\rotatebox{90}{Frequency}}}
\put(382,29){\makebox(0,0){Excess block moves}}
\put(234.0,131.0){\rule[-0.200pt]{0.400pt}{14.695pt}}
\put(234.0,192.0){\rule[-0.200pt]{4.577pt}{0.400pt}}
\put(253.0,131.0){\rule[-0.200pt]{0.400pt}{14.695pt}}
\put(234.0,131.0){\rule[-0.200pt]{4.577pt}{0.400pt}}
\put(253.0,131.0){\rule[-0.200pt]{0.400pt}{20.476pt}}
\put(253.0,216.0){\rule[-0.200pt]{4.336pt}{0.400pt}}
\put(271.0,131.0){\rule[-0.200pt]{0.400pt}{20.476pt}}
\put(253.0,131.0){\rule[-0.200pt]{4.336pt}{0.400pt}}
\put(271.0,131.0){\rule[-0.200pt]{0.400pt}{28.426pt}}
\put(271.0,249.0){\rule[-0.200pt]{4.577pt}{0.400pt}}
\put(290.0,131.0){\rule[-0.200pt]{0.400pt}{28.426pt}}
\put(271.0,131.0){\rule[-0.200pt]{4.577pt}{0.400pt}}
\put(290.0,131.0){\rule[-0.200pt]{0.400pt}{38.785pt}}
\put(290.0,292.0){\rule[-0.200pt]{4.336pt}{0.400pt}}
\put(308.0,131.0){\rule[-0.200pt]{0.400pt}{38.785pt}}
\put(290.0,131.0){\rule[-0.200pt]{4.336pt}{0.400pt}}
\put(308.0,131.0){\rule[-0.200pt]{0.400pt}{40.953pt}}
\put(308.0,301.0){\rule[-0.200pt]{4.577pt}{0.400pt}}
\put(327.0,131.0){\rule[-0.200pt]{0.400pt}{40.953pt}}
\put(308.0,131.0){\rule[-0.200pt]{4.577pt}{0.400pt}}
\put(327.0,131.0){\rule[-0.200pt]{0.400pt}{25.054pt}}
\put(327.0,235.0){\rule[-0.200pt]{4.336pt}{0.400pt}}
\put(345.0,131.0){\rule[-0.200pt]{0.400pt}{25.054pt}}
\put(327.0,131.0){\rule[-0.200pt]{4.336pt}{0.400pt}}
\put(345.0,131.0){\rule[-0.200pt]{0.400pt}{14.695pt}}
\put(345.0,192.0){\rule[-0.200pt]{4.577pt}{0.400pt}}
\put(364.0,131.0){\rule[-0.200pt]{0.400pt}{14.695pt}}
\put(345.0,131.0){\rule[-0.200pt]{4.577pt}{0.400pt}}
\put(364.0,131.0){\rule[-0.200pt]{0.400pt}{22.645pt}}
\put(364.0,225.0){\rule[-0.200pt]{4.336pt}{0.400pt}}
\put(382.0,131.0){\rule[-0.200pt]{0.400pt}{22.645pt}}
\put(364.0,131.0){\rule[-0.200pt]{4.336pt}{0.400pt}}
\put(382.0,131.0){\rule[-0.200pt]{0.400pt}{11.322pt}}
\put(382.0,178.0){\rule[-0.200pt]{4.577pt}{0.400pt}}
\put(401.0,131.0){\rule[-0.200pt]{0.400pt}{11.322pt}}
\put(382.0,131.0){\rule[-0.200pt]{4.577pt}{0.400pt}}
\put(401.0,131.0){\rule[-0.200pt]{0.400pt}{3.373pt}}
\put(401.0,145.0){\rule[-0.200pt]{4.577pt}{0.400pt}}
\put(420.0,131.0){\rule[-0.200pt]{0.400pt}{3.373pt}}
\put(401.0,131.0){\rule[-0.200pt]{4.577pt}{0.400pt}}
\put(420.0,131.0){\rule[-0.200pt]{0.400pt}{3.373pt}}
\put(420.0,145.0){\rule[-0.200pt]{4.336pt}{0.400pt}}
\put(438.0,131.0){\rule[-0.200pt]{0.400pt}{3.373pt}}
\put(420.0,131.0){\rule[-0.200pt]{4.336pt}{0.400pt}}
\put(457.0,131.0){\rule[-0.200pt]{0.400pt}{1.204pt}}
\put(457.0,136.0){\rule[-0.200pt]{4.336pt}{0.400pt}}
\put(475.0,131.0){\rule[-0.200pt]{0.400pt}{1.204pt}}
\put(457.0,131.0){\rule[-0.200pt]{4.336pt}{0.400pt}}
\put(475.0,131.0){\rule[-0.200pt]{0.400pt}{2.168pt}}
\put(475.0,140.0){\rule[-0.200pt]{4.577pt}{0.400pt}}
\put(494.0,131.0){\rule[-0.200pt]{0.400pt}{2.168pt}}
\put(475.0,131.0){\rule[-0.200pt]{4.577pt}{0.400pt}}
\put(151.0,131.0){\rule[-0.200pt]{0.400pt}{45.530pt}}
\put(151.0,131.0){\rule[-0.200pt]{111.537pt}{0.400pt}}
\put(614.0,131.0){\rule[-0.200pt]{0.400pt}{45.530pt}}
\put(151.0,320.0){\rule[-0.200pt]{111.537pt}{0.400pt}}
\end{picture}
  \hfill \input{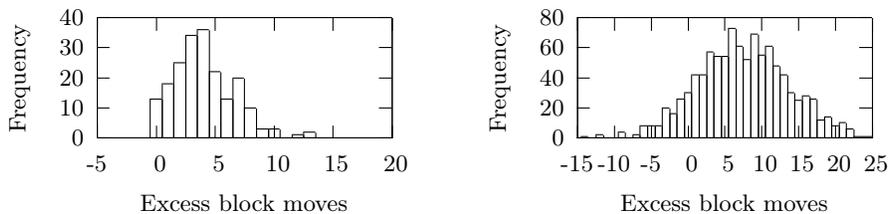}

  \caption{Left: histogram of \algo{HeuristicGreedy} minus \algo{BestGreedy}, $200$ instances with
    with $k=7$ and $n=100$.
    Right: histogram of \algo{RandomGreedy} minus \algo{HeuristicGreedy}, $1000$ instances with $k=30$ and
    $n=200$.}
  \label{fig:random-heuristic}
\end{figure}

We call the best possible result of the greedy algorithm over all start permutations \algo{BestGreedy}, which we calculate by brute force.
Let \algo{RandomGreedy} start with a permutation chosen uniformly at
random, and let \algo{HeuristicGreedy} start with the heuristic
start permutation that we have described above.
The histogram in Fig.~\ref{fig:random-heuristic} (left) shows
how many more block crossings \algo{HeuristicGreedy} uses than \algo{BestGreedy} on random instances.
This distribution is heaviest near zero, but there are instances where performance is poor.
Note that we do not know how to compute \algo{BestGreedy} efficiently.
Compared to \algo{RandomGreedy}, we see that \algo{HeuristicGreedy}
fares well (Fig.~\ref{fig:random-heuristic}, right).

Lastly, we compare the greedy algorithm to the
optimum, which we can only do for small $k$ and $n$.
On $1000$ random instances with
$k=5$ and $n=12$, \algo{HeuristicGreedy}
was optimal $56\%$ of the time.
It was sometimes off by one ($38\%$), two ($5\%$), or three ($1\%$), but never worse.
This is a promising behavior, but clearly cannot be extrapolated
verbatim to larger instances. 

Based on these experiments, we recommend \algo{HeuristicGreedy} as an efficient, reasonable heuristic.

\section{Approximation Algorithm}
\label{sec:approximation}
We now develop a constant-factor approximation algorithm for
$d$-SBCM where $d$ is a constant.
We initially assume that each group meeting occurs exactly once, but later
show how to extend our results to the setting where the same group can
meet a bounded number of times.

\paragraph{Overview.}
Our approximation algorithm has the following three main steps.
\begin{enumerate} \advance\itemsep by 2pt
\item Reduce the input group hypergraph $\hyper=(C,\hedges)$ to an
  \emph{interval hypergraph} $\hypersp = (C,~ \hedges \setminus
  \hedgespp)$ by deleting a subset $\hedgespp \subseteq
  \hedges$ of the edges of \hyper. 

\item Choose a permutation $\pi^0$ of the characters that supports all
  groups of this interval hypergraph $\hypersp$. Thus, $\pi^0$ is the
  order of characters at the beginning of the timeline.

\item Incrementally create support for each deleted meeting of
  $\hedgespp$ in order of increasing time, as follows. Suppose that $g
  \in \hedgespp$ is the group meeting to support. Keep one of the
  character lines involved in this meeting fixed and bring, for the
  duration of the meeting, the remaining (at most $d-1$) lines close
  to it.  Then retract those lines to their original position
  in~$\pi^0$; see Fig.~\ref{fig:bc-alg-example}.
\end{enumerate}

\begin{wrapfigure}[7]{r}{.27\textwidth}
  \centering
  \includegraphics{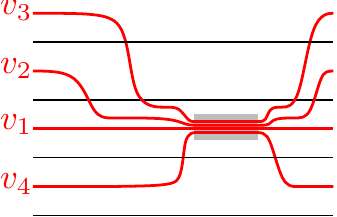}

  \parbox{.7\linewidth}{\caption{meeting $\{v_1, v_2, v_3, v_4 \}$
      \label{fig:bc-alg-example}}}
\end{wrapfigure}
Step 2 is straightforward: Section~\ref{sec:preliminaries} shows how
to find a permutation supporting all the groups for an interval hypergraph.
In Step~3, we introduce at most $2(d-1)$ block crossings for each
meeting 
$g \in \hedgespp$ not initially supported. The main technical 
parts of the algorithm are Step 1 and an analysis to charge at most a 
constant number of block crossings in Step 3 to a block crossing in the 
optimal visualization.
Step 1 requires solving a hypergraph problem; this is technically the
most challenging part, and consumes the entire
Section~\ref{sec:hyperedge-removal-approx}.

\paragraph{Bounds and Analysis.}
We call $\hedgespp$ \emph{paid} edges, and the remainder $\hedgesp
\;=\; \hedges \setminus \hedgespp$ 
\emph{free} edges. Intuitively, free edges can be realized without block 
crossings because $\hypersp$ is an interval hypergraph, while the edges 
of $\hedgespp$ must be charged to block crossings of the optimal
drawing.  We initialize the drawing by placing the characters in the
vertical order $\pi^0$, which supports all the groups in~$\hedgesp$.
Now we consider the paid edges in left-to-right order.
Suppose that the next meeting involves a group $g'
\in \hedgespp$. We have $|g'| \le d$.
We arbitrarily fix one of its characters, leaving its line intact,
and bring the remaining $(d-1)$ lines in its vicinity to realize the
meeting. This creates at most $(d-1)$ block crossings, one per line.
When the meeting is over, we again use up to $(d-1)$ block
crossings to revert the lines back to their original position
prescribed by~$\pi^0$; see Fig.~\ref{fig:bc-alg-example}.

We do this for each paid hyperedge, giving rise to at most
$2(d-1)|\hedgespp|$ block 
crossings. We now prove that this bound is within a constant factor of optimal.
We first establish a lower bound on the optimal number of block crossings
\emph{assuming} that $\pi^0$ is the optimal start permutation.

\begin{lemma}
  \label{lem:hyper-story-lower-bound}
  Let $\pi$ be a permutation of the characters, let $\hedgesp$ be the
  groups supported by~$\pi$, and let $\hedgespp \;=\; \hedges
  \setminus \hedgesp$.  Any storyline visualization that uses $\pi$ as
  the start permutation has at least $4|\hedgespp|/(3d^2)$ block
  crossings.
\end{lemma}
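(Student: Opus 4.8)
The plan is to lower-bound the number of block crossings by a charging argument. Fix any visualization that starts from $\pi$ and write its sequence of permutations as $\pi = \pi_0, \pi_1, \dots, \pi_\beta$, where consecutive permutations differ by a single block crossing and $\beta$ is the total number of block crossings; the meetings are supported, in their given order, by the permutations along this sequence. By definition every group in $\hedgespp$ fails to be supported by $\pi_0 = \pi$, yet each such group has a meeting and therefore must form an interval in some $\pi_t$. Hence for every $g \in \hedgespp$ there is a smallest index $t(g) \ge 1$ at which $g$ first becomes contiguous, and I would charge $g$ to the block crossing that produces $\pi_{t(g)}$ from $\pi_{t(g)-1}$. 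Since every paid group is charged to exactly one block crossing, it suffices to bound the number $C$ of groups that a single block crossing can be charged with: this gives $|\hedgespp| \le C\beta$, i.e. $\beta \ge |\hedgespp|/C$.

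The heart of the argument is to show $C \le \tfrac34 d^2$. The key structural observation is that a block crossing $(a,b,c)$, which swaps the consecutive blocks $B_1 = \langle a,\dots,b\rangle$ and $B_2 = \langle b+1,\dots,c\rangle$, alters the left-to-right adjacencies of the permutation in only three places: where the element before position $a$ now meets the first element of $B_2$, where the last element of $B_2$ now meets the first element of $B_1$, and where the last element of $B_1$ now meets the element after position $c$. A group $g$ that is contiguous after the crossing but was not contiguous before must, viewed as a block of positions in the new permutation, properly contain at least one of these three new junctions; otherwise its span would already have been an interval before the crossing. Because $|g| \le d$, the interval occupied by $g$ in the new permutation has length at most $d$, so I only need to count intervals of length at most $d$ that straddle a fixed junction. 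A short calculation bounds this number by roughly $d^2/4$ per junction, and summing over the three junctions yields $C \le \tfrac34 d^2$.

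Combining the two parts gives $\beta \ge |\hedgespp|/C \ge 4|\hedgespp|/(3d^2)$, as claimed. I expect the main obstacle to be the local counting that pins down the constant $\tfrac34 d^2$: one has to argue carefully that each newly supported group corresponds to a distinct length-at-most-$d$ interval straddling one of the three junctions, so that the union bound over the three junctions is both correct and not too lossy. The only other point that needs care is the well-definedness of the charging, namely that every paid group really does become contiguous at some step (which follows because its meeting must be supported) and that charging to the \emph{first} such step assigns each paid group to a genuine block crossing rather than to the start permutation.
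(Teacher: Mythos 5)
Your charging scheme is exactly the paper's argument: each group of $\hedgespp$ is charged to the block crossing at which it first becomes contiguous, a block crossing creates at most three new adjacent pairs of lines, every newly supported group must contain one of these pairs, and the bound follows from a per-pair count times three. The well-definedness points you raise at the end (every paid group must become contiguous before its meeting, and charging to the first such step hits a genuine block crossing) are handled identically in the paper, so up to that point there is nothing to add.

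The step you defer --- ``a short calculation bounds this number by roughly $d^2/4$ per junction'' --- is precisely where the lemma's constant lives, and the calculation you describe does not deliver it. The number of intervals of length at most $d$ that straddle a fixed junction between positions $p$ and $p+1$ is $\sum_{\ell=2}^{d}(\ell-1)=\binom{d}{2}\approx d^2/2$, not $d^2/4$; plugging that into your union bound gives $C\le 3\binom{d}{2}$ and hence only $\beta\ge 2|\hedgespp|/\bigl(3d(d-1)\bigr)$. The paper reaches $d^2/4$ by a different accounting: it describes a newly supported group by the new pair together with the numbers $d_1,d_2$ of its characters above and below that pair, notes $d_1+d_2\le d$, and invokes the bound $d_1(d-d_1)\le d^2/4$ on the \emph{product}. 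You should work this out explicitly rather than wave at it, because the two counts genuinely disagree (already for $d=3$ one can exhibit a block crossing after which more than $3d^2/4$ intervals of length at most $3$ are newly contiguous), and you need to decide which one you can actually prove. For the downstream use in Theorem~\ref{thm:storyline-gen-norep-approx} any per-crossing bound that is $O(d^2)$ suffices and only changes the constant in the approximation factor, but to establish the lemma with the stated constant $4/(3d^2)$ you must justify the $d^2/4$ per-pair count, not the $\binom{d}{2}$ one.
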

\begin{proof}
  Let $g \in \hedgespp$. Since $g$ is not supported by
  $\pi$, the optimal drawing does not contain the
  characters of $g$ as a contiguous block initially. However, in order to
  support this meeting, these characters must eventually become
  contiguous before the meeting starts. The order changes only
	through (block) crossings; we bound the number of
  groups that can become supported after each block crossing.

  After a block crossing, at most three pairs of lines that were not
  neighbors before can become neighbors in the permutation: after the
  blocks $C_1, C_2 \subseteq C$ cross, there is one position in
  the permutation where a line of $C_1$ is next to a line of
  $C_2$, and two positions with a line of $C_1$ ($C_2$, respectively)
  and a line of $C \setminus (C_1 \cup C_2)$. Any group
  that was not supported, but is supported
  after the block crossing, must contain one of these pairs.  
  We can describe each such group in the new permutation by specifying
  the new pair and the numbers $d_1$ and $d_2$ of characters of the
  group above and below the new pair in the permutation.
  Since the group size is at most $d$, we have $d_1 + d_2 \le d$.
  The product $d_1 (d-d_1)$ achieves its maximum value for $d_1 = d_2 = d/2$, 
  and so there are at most $d^2/4$ possible groups for each new pair.
  Thus, the total number of newly supported groups after a block crossing is at most 
  $3 d^2/4$, which shows that the optimal number of block crossings is
  at least $4|\hedgespp|/(3d^2)$, completing the proof.
\end{proof}

We now bound the loss of optimality caused by not knowing the initial permutation
used by the optimal solution.
The key idea here is to use a constant-factor approximation for the
problem of deleting the minimum number of hyperedges from $\hyper$ so
that it becomes an interval hypergraph (\textsc{Interval Hypergraph
Edge Deletion}).
We prove the following theorem in
Section~\ref{sec:hyperedge-removal-approx}.

\begin{theorem}
  We can find a $(d+1)$-approximation for \textsc{Interval Hypergraph Edge
  Deletion} on group hypergraphs with $n$ meetings of rank~$d$ in $O(n^2)$ time.
  \label{thm:int-hyper-edge-removal-approx-runtime}
\end{theorem}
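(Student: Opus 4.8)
The plan is to phrase \textsc{Interval Hypergraph Edge Deletion} as a covering problem over \emph{obstructions} --- minimal sub-collections of $\hedges$ that are themselves not interval hypergraphs --- and to solve it by a two-phase greedy with a matching lower bound. The foundation is a forbidden-configuration characterization of interval hypergraphs, in the spirit of Tucker's consecutive-ones matrices, the $O(k^2)$ test of Theorem~\ref{obs:crossing-free}~\cite{trotter1976characterization}, and the path-support viewpoint~\cite{DBLP:journals/jgaa/BuchinKMSV11}. The structural claim I would aim for is a \emph{dichotomy}: every minimal obstruction is either (i) \emph{local}, using at most $d+1$ hyperedges that pairwise force incompatible relative orders (the rank-$d$ analogue of several edges crowding around one character; for $d=2$ this is exactly a claw $K_{1,3}$, i.e.\ a character in three meetings), or (ii) \emph{cyclic}, a chain of overlapping edges that closes up (the analogue of a cycle $C_\ell$ for $d=2$), which may use arbitrarily many edges but is destroyed by deleting any single one. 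The point of the dichotomy is that a hypergraph with \emph{no} local and \emph{no} cyclic obstruction is already an interval hypergraph.

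Given the dichotomy, the algorithm and its analysis split accordingly. In phase one I would greedily extract a \emph{maximal edge-disjoint} family of local obstructions and delete all $\le d+1$ edges of each; interval-ness is hereditary under edge deletion, so removing edges never creates new obstructions, and maximality guarantees that no local obstruction survives. In phase two the residual instance contains only cyclic obstructions; by the dichotomy these are vertex-disjoint (for $d=2$, the residual simply has maximum degree $\le 2$), so I would keep a maximal interval-consistent spanning forest and delete exactly one edge per independent cycle --- this is the minimum feedback deletion on the residual and is therefore \emph{optimal} there, not merely constant-factor. For the charging, observe that the local obstructions together with the cycles form a single family of pairwise edge-disjoint infeasible sets; any solution must delete at least one edge from each, and edge-disjointness makes these distinct, so $\opt$ is at least the number $t$ of sets. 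Since the algorithm deletes at most $(d+1)$ edges per local obstruction and exactly one per cycle, it deletes at most $(d+1)\,t \le (d+1)\,\opt$ edges, which is the claimed ratio (and dovetails with the lower bound of Lemma~\ref{lem:hyper-story-lower-bound} in the surrounding analysis).

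The main obstacle, which I expect to consume the bulk of the section, is the structural dichotomy itself: proving that in the \emph{absence} of any bounded-size local obstruction all remaining conflicts are genuinely cycle-like and vertex-disjoint, so that one deletion per cycle suffices. Equally delicate is realizing this incrementally within the $O(n^2)$ budget: I would process the hyperedges while maintaining the forced consecutive arrangement of the edges kept so far (a PQ-tree--style order-maintenance structure over the $k$ characters), spending amortized $O(n)$ per edge, so that upon each insertion failure the structure either exposes a local obstruction of size $\le d+1$ or certifies that the new edge closes a breakable cycle. Getting the data structure to make this decision correctly is exactly where both the validity of the charging and the running time are at stake. A secondary subtlety is bookkeeping the edge-disjointness across the two phases --- cycles must be sought only in the residual left after the local-obstruction edges are removed --- so that no optimal deletion is counted against two different members of the packing.
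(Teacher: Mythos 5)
Your overall architecture is the same as the paper's: both arguments rest on the Trotter--Moore forbidden-subhypergraph characterization, both observe that for rank $d$ the only obstructions of unbounded size are long cycles while all others ($O_1$, $O_2$, $F_k$, $M_k$, short cycles) use at most $d+1$ hyperedges, and both obtain the factor $d+1$ by charging an edge-disjoint collection of bounded obstructions (each of which the optimum must hit) while handling the cyclic residue optimally. The genuine gap is in your phase two. You assert that once the bounded obstructions are removed, the surviving cyclic obstructions are vertex-disjoint and that deleting ``one edge per independent cycle'' is a minimum feedback deletion and hence optimal on the residual. For $d=2$ this is true (the residual has maximum degree at most $2$, so its components are paths and simple cycles), but for $d\ge 3$ it fails: the paper shows that in an \ffree hypergraph two cycles are either vertex-disjoint or cover \emph{exactly the same} vertex set, so a single component can carry many cycles that share edges in intricate ways. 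Deleting one edge from each cycle of a maximal edge-disjoint packing need not destroy all cycles (a cycle outside the packing can survive the deletions), and a component may genuinely require the removal of several edges, namely all hyperedges realizing one particular adjacency. So as stated your phase two is neither clearly feasible nor clearly optimal.

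The paper's substitute for your missing step --- and the bulk of its Section~\ref{sec:hyperedge-removal-approx} --- is the cell-structure analysis: each residual component with a cycle admits a cyclic \emph{cell order} on which every remaining hyperedge spans a single interval (Lemmas~\ref{lemma:cycle-edge-interval-relation} and~\ref{lemma:cell-edge-interval-relation}), whence making the component an interval hypergraph is equivalent to choosing a pair of adjacent cells and deleting \emph{all} hyperedges containing both; minimizing this count over adjacent pairs is then provably optimal. Your proposal has no analogue of this, and it is exactly what is needed to justify the claim you flag as ``the main obstacle.'' A secondary concern is the PQ-tree scheme for the running time: a consecutive-ones insertion failure does not automatically expose an obstruction with at most $d+1$ hyperedges (Tucker-type certificates can be large), so you would still need the paper's explicit bounded-size search, which anchors each pattern at its unique large edge and exploits the prior removal of short cycles to stay within $O(n^2)$.
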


Let $\hedges_{\opt}$ be the set of paid edges in the optimal solution,
and $\hedgespp$ the set of paid edges in our algorithm. 
By Theorem~\ref{thm:int-hyper-edge-removal-approx-runtime}, we have
$|\hedgespp| \;\le\; (d+1)|\hedges_{\opt}|$.
Let $\alg$ and $\opt$ be the numbers of block crossings for our algorithm and the 
optimal solution, respectively.
By Lemma~\ref{lem:hyper-story-lower-bound}, we have 
$\opt \: \ge \: 4|\hedges_{\opt}|/(3d^2)$, which gives
$|\hedges_{\opt}| \:\le\: 3d^2/4 \cdot \opt$. 
On the other hand, we have $\alg \:\le\: 2(d-1)|\hedgespp| \:\le\: 2(d-1)(d+1) |\hedges_{\opt}|$.
Combining the two inequalities, we get
$\alg \:\le\: 3(d^2-1)d^2/2 \cdot \opt$, which establishes our main result.

\begin{theorem}
  \label{thm:storyline-gen-norep-approx}
  $d$-SBCM admits a $(3(d^2-1)d^2/2)$-approximation algorithm.
\end{theorem}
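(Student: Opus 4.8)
The plan is to assemble the ingredients already in hand into a single chain of inequalities relating $\alg$ to $\opt$. The algorithm is the three-step procedure described above: compute a deletion set $\hedgespp$ of paid edges so that $\hypersp=(C,\hedgesp)$ is an interval hypergraph, realize the free edges $\hedgesp$ with one fixed start permutation $\pi^0$ (Step~2, via Section~\ref{sec:preliminaries}), and then spend at most $2(d-1)$ block crossings per paid meeting to bring its at most $d-1$ non-fixed lines together and afterwards return them to $\pi^0$. Summing over the paid edges gives the upper bound $\alg \le 2(d-1)\,|\hedgespp|$, so it remains to control $|\hedgespp|$ in terms of $\opt$.

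The heart of the argument is to route this comparison through the optimal solution's own structure. I would fix an optimal storyline visualization and let $\pi_{\opt}$ be its start permutation. The groups supported by $\pi_{\opt}$ all fit a single permutation, hence form an interval hypergraph, which means the set $\hedges_{\opt}$ of groups \emph{not} supported by $\pi_{\opt}$ is a feasible solution to \textsc{Interval Hypergraph Edge Deletion}. Consequently the minimum deletion set has size at most $|\hedges_{\opt}|$, and since our algorithm uses the $(d+1)$-approximation of Theorem~\ref{thm:int-hyper-edge-removal-approx-runtime}, the set it produces satisfies $|\hedgespp| \le (d+1)\,|\hedges_{\opt}|$.

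Next I would apply Lemma~\ref{lem:hyper-story-lower-bound} with $\pi=\pi_{\opt}$. Because $\hedges_{\opt}$ is precisely the set of groups unsupported by $\pi_{\opt}$, the lemma yields $\opt \ge 4\,|\hedges_{\opt}|/(3d^2)$, i.e.\ $|\hedges_{\opt}| \le \tfrac{3d^2}{4}\,\opt$. Chaining the three bounds gives
\[
  \alg \;\le\; 2(d-1)\,|\hedgespp|
       \;\le\; 2(d-1)(d+1)\,|\hedges_{\opt}|
       \;\le\; 2(d-1)(d+1)\cdot\tfrac{3d^2}{4}\,\opt
       \;=\; \tfrac{3(d^2-1)d^2}{2}\,\opt ,
\]
which is exactly the claimed ratio. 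The running time is dominated by the $O(n^2)$ edge-deletion step of Theorem~\ref{thm:int-hyper-edge-removal-approx-runtime}, so the algorithm is polynomial for constant $d$.

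The step I expect to be the real obstacle is the middle one, and in particular the \emph{dual role} played by $\hedges_{\opt}$: the same set must simultaneously serve as a feasible deletion set (to invoke the $(d+1)$-approximation) and as the set of groups unsupported by $\pi_{\opt}$ (to invoke the lower bound). The care needed is that Lemma~\ref{lem:hyper-story-lower-bound} is stated for an \emph{arbitrary} start permutation and is applied here to $\pi_{\opt}$ rather than to the algorithm's $\pi^0$; this is precisely what lets us sidestep the fact that we never know the optimal start permutation. Once one checks that a permutation's supported groups always form an interval hypergraph, both links snap into place and the remaining calculation is routine.
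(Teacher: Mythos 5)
Your proposal is correct and follows essentially the same route as the paper: bound $\alg \le 2(d-1)|\hedgespp|$, use the $(d+1)$-approximation for \textsc{Interval Hypergraph Edge Deletion} against the feasible deletion set $\hedges_{\opt}$ induced by the optimal start permutation, and apply Lemma~\ref{lem:hyper-story-lower-bound} to that permutation to get $|\hedges_{\opt}| \le \tfrac{3d^2}{4}\opt$. The only difference is that you spell out explicitly why $\hedges_{\opt}$ is a feasible deletion set, a step the paper leaves implicit.
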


\paragraph{Remark.}
We assumed that each group meets only once, but we can extend 
the result if each group can meet $c$ times, for constant $c$.
Our algorithm then yields a $(c \cdot 3(d^2-1)d^2/2)$-factor
approximation; each repetition of a meeting may trigger a
constant number of block crossings not present in the optimal
solution.

\paragraph{Runtime Analysis.}
We have to consider the permutation (of length $k$) of characters
before and after each of the $n$ meetings, as well as after each of
the $O(n)$ block crossings.  This results in $O(kn)$ time for the last
part of the algorithm, but this is dominated by the time ($O(n^2)$)
needed for finding~$\hedgespp$ and for determining the start permutation.

We can improve the running time to $O(kn)$ by a slight modification:
using the
approximation algorithm for \textsc{Interval Hypergraph Edge Deletion} is only
necessary for sparse instances. If $\hyper$ has
sufficiently many edges, any start permutation will yield a good
approximation. Since no meeting
involves more than $d$ characters, no start permutation can support
more than $dk$ meetings. If $n \ge 2dk$, then even the optimal
solution must therefore remove at least half of the edges. Hence,
taking an arbitrary start permutation yields an approximation factor of
at most $2 < d+1$.

We now change the algorithm to use an arbitrary start permutation if $n
\ge 2dk$ and only use the approximation for \textsc{Interval Hypergraph Edge
Deletion} otherwise, i.e., especially only if there are
$O(k)$ edges. Hence, for sparse instances we have $O(n^2) =
O(k^2)$, and for dense instances, the $O(n^2)$ runtime is not
necessary. We get the following improved result.
(The runtime is worst-case optimal since the output complexity is of
the same order.)

\begin{theorem}
  \label{thm:storyline-gen-norep-approx-faster}
  $d$-SBCM admits an $O(kn)$-time $(3(d^2-1)d^2/2)$-approximation
  algorithm.
\end{theorem}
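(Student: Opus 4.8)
The plan is to observe that every step of the $O(n^2)$-time algorithm behind Theorem~\ref{thm:storyline-gen-norep-approx} already runs in $O(kn)$ time \emph{except} the call to the $(d+1)$-approximation for \textsc{Interval Hypergraph Edge Deletion} (Theorem~\ref{thm:int-hyper-edge-removal-approx-runtime}): once the start permutation $\pi^0$ and the paid set $\hedgespp$ are fixed, Steps~2 and~3 touch each of the $O(n)$ permutations of length $k$ and cost only $O(kn)$. So the whole task reduces to producing a start permutation whose paid set still satisfies $|\hedgespp| \le (d+1)|\hedges_{\opt}|$, but within an $O(kn)$ budget. The route I would take is a dichotomy on the density of $\hyper$, running the (possibly expensive) edge-deletion approximation only when the instance is sparse enough that its quadratic cost is already $O(kn)$.

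First I would prove the counting bound that underlies the dense case: a single permutation supports only few groups. A supported group occupies a contiguous block of length between $2$ and $d$, and a permutation of $k$ characters contains fewer than $dk$ such blocks in total (summing $k-\ell+1$ over $2 \le \ell \le d$). Hence any fixed permutation supports at most $dk$ groups. Since the edges of $\hyper$ that survive an optimal deletion must all be supported by one common permutation, at most $dk$ of the $n$ edges can remain (no repetitions, so $|\hedges| = n$), giving the key inequality $|\hedges_{\opt}| \ge n - dk$.

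For the dense case $n \ge 2dk$ this yields $|\hedges_{\opt}| \ge n/2$, and I would simply pick an arbitrary start permutation and pay for every unsupported edge. Then $|\hedgespp| \le n \le 2|\hedges_{\opt}| \le (d+1)|\hedges_{\opt}|$ (using $d \ge 2$), so the very inequality invoked in the proof of Theorem~\ref{thm:storyline-gen-norep-approx} still holds; the remaining two inequalities in that chain, namely $\opt \ge 4|\hedges_{\opt}|/(3d^2)$ from Lemma~\ref{lem:hyper-story-lower-bound} and $\alg \le 2(d-1)|\hedgespp|$, are structural and independent of how $\pi^0$ is chosen. The same arithmetic therefore delivers the factor $3(d^2-1)d^2/2$, and fixing an arbitrary $\pi^0$ plus building the drawing costs $O(kn)$. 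For the sparse case $n < 2dk$ I would run the $O(n^2)$-time edge-deletion approximation verbatim; since $d$ is a constant this forces $n = O(k)$, so $O(n^2) = O(kn)$, and the drawing again costs $O(kn)$.

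Combining the branches, the algorithm tests whether $n \ge 2dk$, spends $O(kn)$ time either way, and in both branches guarantees $|\hedgespp| \le (d+1)|\hedges_{\opt}|$, so the approximation factor $3(d^2-1)d^2/2$ carries over unchanged. The main point requiring care — rather than a genuine obstacle — is the counting estimate that a permutation supports at most $dk$ groups, together with the check that the crude factor-$2$ bound obtained for dense instances is dominated by $d+1$; once these are in place, the entire inequality chain of Theorem~\ref{thm:storyline-gen-norep-approx} applies without modification.
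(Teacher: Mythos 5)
Your proposal is correct and follows essentially the same route as the paper: the same dichotomy on $n \ge 2dk$ (arbitrary start permutation in the dense case, where at most $dk$ supported groups forces the optimum to pay for at least half the edges, versus running the quadratic edge-deletion approximation in the sparse case where $n = O(k)$), combined with the observation that the drawing phase itself costs only $O(kn)$. You merely spell out the counting bound of $dk$ supported groups and the check $2 \le d+1$, which the paper states without detail.
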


Using some special properties of the 2-character case, we can
improve the approximation factor for 2-SBCM from 18 to 12; see
Appendix~\appref{app:2-SBCM-approx}.

\section{Interval Hypergraph Edge Deletion}
\label{sec:hyperedge-removal-approx}
We now describe the main missing piece from our approximation
algorithm: how to approximate the
minimum number of edges whose deletion reduces a hypergraph to an interval 
hypergraph, i.e., how to solve the following problem.
\begin{problem}[\textsc{Interval Hypergraph Edge Deletion}]
  Given a hypergraph $\hyper = (V,~\hedgese)$ find a smallest set
  $\hedgesepp \subseteq \hedgese$ such that $\hypersp =
  (V,~\hedgese \setminus \hedgesepp)$ is an interval hypergraph.
\end{problem}

Note that a graph contains a Hamiltonian path if and only if one can
remove all but $n-1$ edges so that only vertex-disjoint paths (here, a
single path) remain; hence, our problem is hard even for graphs.
\begin{theorem}
  \label{thm:hypergraph-hardness}
  \textsc{Interval Hypergraph Edge Deletion} is NP-hard.
\end{theorem}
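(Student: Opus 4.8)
The plan is to prove NP-hardness of \textsc{Interval Hypergraph Edge Deletion} by a reduction from the \textsc{Hamiltonian Path} problem on general graphs, exactly as the hint immediately preceding the theorem suggests. First I would recall the structural characterization used throughout the paper: a hypergraph is an interval hypergraph precisely when there is a linear order of its vertices in which every hyperedge forms a contiguous block. Specializing this to an ordinary graph $G=(V,E)$ viewed as a $2$-uniform hypergraph, each edge $\{u,v\}$ forms a contiguous block in an order $\pi$ if and only if $u$ and $v$ are adjacent in $\pi$. Hence a set of edges is an interval hypergraph exactly when it is a disjoint union of simple paths (a linear forest): the surviving edges must admit a single linear layout in which each edge connects consecutive vertices, and a $2$-uniform hypergraph has this property iff every vertex has degree at most $2$ and the edges contain no cycle, i.e.\ the edge set forms vertex-disjoint paths.

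Given an $n$-vertex connected graph $G$, I would feed $G$ itself (as a $2$-uniform hypergraph on $V$) to \textsc{Interval Hypergraph Edge Deletion} and ask whether one can delete at most $|E|-(n-1)$ edges to obtain an interval hypergraph. The key equivalence to establish is: \emph{$G$ has a Hamiltonian path if and only if one can delete $|E|-(n-1)$ edges so that the remainder is an interval hypergraph.} For the forward direction, if $G$ has a Hamiltonian path $P$, then $P$ is a single path on all $n$ vertices using exactly $n-1$ edges; keeping $P$'s edges and deleting the other $|E|-(n-1)$ edges leaves a single path, which is trivially an interval hypergraph (its own vertex order realizes it). For the reverse direction, suppose we delete $|E|-(n-1)$ edges and the surviving $n-1$ edges form an interval hypergraph; by the characterization above these edges form a collection of vertex-disjoint paths. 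A linear forest on $n$ vertices with exactly $n-1$ edges has no cycles, so its number of connected components equals $n-(n-1)=1$, meaning it is a single path spanning all $n$ vertices, i.e.\ a Hamiltonian path of $G$.

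Thus the reduction is correct, and since building the $2$-uniform hypergraph instance from $G$ is clearly polynomial (linear) time and \textsc{Hamiltonian Path} is NP-hard, NP-hardness of \textsc{Interval Hypergraph Edge Deletion} follows. The main technical obstacle, and the only place requiring real care, is the reverse direction's component-counting argument: I must be certain that "interval hypergraph on a $2$-uniform hypergraph" is equivalent to "vertex-disjoint union of simple paths." This needs the two observations that (i) in any valid linear order each vertex can be adjacent to at most two others, bounding degrees by $2$, and (ii) a cycle can never be laid out so that all its edges are simultaneously contiguous, ruling out cycles; together these force the linear-forest structure, and the exact edge count $n-1$ then forces connectivity. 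I would state these two facts explicitly (they follow directly from the interval/contiguity definition) rather than treat them as obvious, since they carry the entire weight of the reduction.
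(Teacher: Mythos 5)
Your proposal is correct and matches the paper's argument exactly: the paper proves this theorem by the same one-line reduction from \textsc{Hamiltonian Path}, observing that a graph has a Hamiltonian path if and only if all but $n-1$ edges can be removed so that only vertex-disjoint paths (here, a single spanning path) remain. You have merely filled in the details (the linear-forest characterization of $2$-uniform interval hypergraphs and the component count) that the paper leaves implicit.
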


We now present a $(d+1)$-approximation algorithm for rank-$d$
hypergraphs, in which each hyperedge has at most $d$ vertices.
In this section we give all main ideas. Detailed proofs
can be found in Appendix~\appref{sec:hyperedge-removal-approx-appendix};
they are mostly not too hard to obtain, but require the
distinction of many cases.

For our algorithm, we use the following characterization: A hypergraph
is an interval hypergraph if and only if it
contains none of the hypergraphs shown in
Fig.~\ref{fig:forbidden-subhypergraphs} as a
subhypergraph~\cite{trotter1976characterization,MooreJr1977173}.
\begin{figure}[tb]
  \centering
  \includegraphics{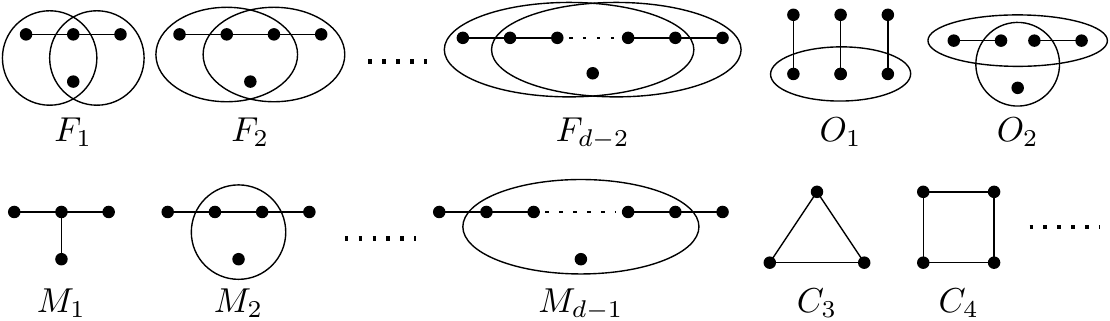}
  \caption{Forbidden subhypergraphs for interval hypergraphs
	(edges represent pairwise hyperedges, circles/ellipses show
	hyperedges of higher cardinality).}
  \label{fig:forbidden-subhypergraphs}
\end{figure}
Due to the bounded rank, the families of $F_k$ and $M_k$ are finite
with $F_{d-2}$ and $M_{d-1}$ as largest members. Cycles are
the only arbitrarily large forbidden subhypergraphs in our
setting.
Let $\forbiddenh = \left\{  O_1, O_2, F_1, \ldots, F_{d-2}, M_1,
\ldots, M_{d-1}, C_3, \ldots, C_{d+1}\right\}$. A hypergraph is
\ffree if it does not contain any hypergraph of $\forbiddenh$ as a
subhypergraph. Note that a cycle in a hypergraph consists of hyperedges $e_1, \ldots, e_k$
so that there are vertices $v_1, \ldots, v_k$ with $v_i \in
e_{i-1} \cap e_{i}$ for $2 \le i \le k$ (and $v_1 \in e_1 \cap
e_{k}$) and no edge $e_i$ contains a vertex of $v_1, \ldots, v_k$
except for $v_i$ and $v_{i+1}$.

Our algorithm consists of two steps. First, we search
for subhypergraphs contained in $\forbiddenh$, and remove all edges involved
in these hypergraphs. In the second step, we break remaining (longer) cycles by
removing some more hyperedges after carefully analyzing the structure
of connected components.
Subhypergraphs in \forbiddenh consist of at most $d+1$
hyperedges. A given optimal solution must remove at
least one of the hyperedges; removing all of them instead yields a factor
of at most $d+1$. The second step will not negatively affect this
approximation factor.

Intuitively, allowing long cycles, but forbidding
subhypergraphs of $\forbiddenh$, results in a generalization
of interval hypergraphs where the vertices may be placed on a cycle
instead of a vertical line. This is not exactly true, but we will see
that the connected components after the first step have a structure similar to
this, which will help us find a set of edges whose
removal destroys all remaining long cycles.

Lemma~\appref{lemma:hyper-cycle-shape}
(Appendix~\appref{sec:hyperedge-removal-approx-appendix}) shows that any
vertex is contained in at most three
hyperedges of a cycle, where the case of three hyperedges with a
common vertex occurs only if a hyperedge is contained in the union of
its two neighbors in the cycle.
Assume that $e_1, e_2$, and $e_3$ are consecutive
edges of a cycle $C$. If all three edges are present in an
interval representation, we know
that we will first encounter vertices that are only
contained in $e_1$, then vertices that are in $(e_1 \cap e_2) \setminus
e_3$, then vertices in $e_1 \cap e_2 \cap e_3$, followed by vertices
of $(e_2 \cap e_3) \setminus e_1$, and vertices of $e_3 \setminus (e_1
\cup e_2)$. Some of the sets (except for pairwise intersections) may
be empty. We do not know the order of vertices
within one set, but we know the relative order of any pair
of vertices of different sets.
By generalizing this to the
whole cycle, we get a cyclic order---describing the local order in a
possible interval representation---of sets defined by containment in
1, 2, or 3 hyperedges. We call these sets \emph{cycle-sets} and their
cyclic order the \emph{cycle-order} of $C$.

We can analyze how an edge~$e \notin C$ relates to the order of
cycle-sets; $e$ can contain a cycle-set
completely, can be disjoint from it, or can contain only part of
its vertices. We call a consecutive sequence of cycle-sets contained
in edge~$e$---potentially starting and ending with cycle-sets
partially contained in $e$---an \emph{interval} of $e$ on $C$.
The following lemma shows that every edge forms only a single interval
on a given cycle.

\begin{lemma}
	\label{lemma:cycle-edge-interval-relation}
  If a hyperedge $e \in \hedgese$ intersects two cycle-sets of a
	cycle~$C$, then $e$ fully contains all cycle-sets lying in between
	in one of the two directions along~$C$.
\end{lemma}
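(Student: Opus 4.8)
The plan is to argue by contradiction, turning a violation of the single-interval property into a forbidden subhypergraph from \forbiddenh inside the (by assumption \ffree) hypergraph. Suppose $e$ meets two cycle-sets $A$ and $B$ of $C$, witnessed by $a \in e \cap A$ and $b \in e \cap B$, but that neither of the two arcs of the cycle-order from $A$ to $B$ is fully contained in $e$. Then each arc carries a certificate of failure: a cycle-set $P$ strictly between $A$ and $B$ on the first arc, and a cycle-set $Q$ strictly between $B$ and $A$ on the second, each only partially covered by $e$, so that we may fix $p \in P \setminus e$ and $q \in Q \setminus e$. The goal is to assemble the four vertices $a, b, p, q$ together with a bounded number of cycle edges into a member of \forbiddenh.

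First I would record the structural facts I am entitled to use. By Lemma~\appref{lemma:hyper-cycle-shape}, every vertex lies in at most three cycle edges, so the cycle-sets and their cyclic order are well defined, and each cycle edge occupies a contiguous arc of cycle-sets, running from its overlap with the predecessor to its overlap with the successor. In particular, the contact vertex $a$ lies in the two cycle edges whose arcs meet at $A$, and likewise for $b$; this identifies short chains of cycle edges realizing the two arcs and exhibits $e$ as a chord that "jumps over" both $p$ and $q$.

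The core mechanism is that $p$ and $q$ certify non-intervalness on both sides of $e$ at once. In the cyclic interval representation of $C$ given by the cycle-order, the contact vertices $a \in A$ and $b \in B$ are separated on one side by $p \in P \setminus e$ and on the other by $q \in Q \setminus e$; hence no way of cutting the circle into a line can make the vertices of $e$ contiguous, so the subhypergraph formed by $C$ and $e$ admits no interval representation and, by Trotter's characterization, contains a forbidden subhypergraph. The decisive point is that this obstruction is localized: only $e$ and the few cycle edges whose arcs cover $a$, $p$, $b$, and $q$ are needed to witness the separation, so the extracted subhypergraph has rank at most $d$ and bounded size, forcing it to be one of $O_1, O_2, F_1, \ldots, F_{d-2}, M_1, \ldots, M_{d-1}$ or a short cycle $C_3, \ldots, C_{d+1}$ rather than an (allowed) long cycle---contradicting the standing assumption that the hypergraph is \ffree.

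I expect the main obstacle to be exactly this localization-and-identification step. Passing from "admits no interval representation" to a concrete member of \forbiddenh requires a case analysis over the type of the contact cycle-sets $A$ and $B$ (single-, double-, or triple-overlap in the sense of Lemma~\appref{lemma:hyper-cycle-shape}), over whether $e$ happens to contain the bridging vertices $w_i$ of the adjacent cycle edges, and over the degenerate situations in which $P$ or $Q$ is adjacent to $A$ or $B$; in each case one must verify the "no spurious containment" condition so that the configuration read off is genuinely minimal and not merely some larger non-interval structure. This bookkeeping is routine but lengthy, matching the paper's remark that the argument ``require[s] the distinction of many cases''; I would organize it by first treating the generic case where $a, b, p, q$ lie in four distinct single-overlap cycle-sets and then reducing the remaining configurations to it.
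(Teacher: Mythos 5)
Your overall strategy (contradiction, producing a member of \forbiddenh) matches the paper's, but the decisive step you rely on --- the ``localization'' of the obstruction --- does not work, and it is precisely the step where all the content of the lemma lives. You claim that $e$ together with ``the few cycle edges whose arcs cover $a$, $p$, $b$, and $q$'' already witnesses non-intervalness and has bounded size. Neither half survives scrutiny. First, the cyclic separation of $a,b$ by $p,q$ is enforced only by the \emph{entire} cycle $C$; once you discard the intermediate cycle edges, the cyclic order is no longer forced, and the retained configuration (a handful of short, possibly pairwise disjoint paths around $a$, $p$, $b$, $q$, plus $e$) can easily be an interval hypergraph --- e.g.\ if the four witness vertices are far apart on $C$, the edges around $p$ and $q$ sit in separate components and impose nothing. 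Second, even if you keep enough of $C$ to preserve non-intervalness, invoking the Trotter--Moore characterization only tells you the configuration contains \emph{some} forbidden subhypergraph of the full characterization, which includes cycles of every length; a cycle of length greater than $d+1$ is not in \forbiddenh, so you get no contradiction with \ffree-ness. Indeed $e$ plus a sub-path of $C$ from $a$ to $b$ can itself form exactly such a long cycle.

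What the paper's proof does instead, and what your proposal is missing, is the quantitative comparison $|e|\le d$ versus $|C|\ge d+2$ (the latter because all short cycles $C_3,\dots,C_{d+1}$ are already excluded). Starting from a vertex of $e$ in the intersection of two consecutive cycle edges, one walks along $C$ in both directions \emph{as long as the intersection vertices stay inside $e$}; since $e$ has only $d$ vertices this walk terminates quickly and yields a path of bounded length whose internal vertices lie in $e$ but whose endpoints do not, and adding $e$ plus one more vertex produces $M_k$ with $k\le d-1$ --- a configuration that genuinely is in \forbiddenh. The degenerate cases (no spare vertex of $e$ off the path; $e$ meeting no intersection of consecutive cycle edges) are then handled separately and yield $M_1$, $M_2$, $O_2$, or $O_1$. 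Your proof never exploits $|e| < |C|-1$, and without it there is no mechanism forcing the extracted obstruction to be short; so the gap is not ``routine bookkeeping'' postponed to a case analysis, but the absence of the argument that makes any case analysis close.
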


We now know that by opening the cycle at a single position within
a cycle-set not contained in $e$,
$C+e$ forms an interval hypergraph. Edge $e$ adds further
information: If only part
of the vertices of a cycle-set are contained in $e$ and also vertices
of the next cycle-set in one direction, we know that the vertices of
$e$ in the first cycle-set should be next to the second cycle-set.
We use this to refine the cycle-sets to a cyclic
order of \emph{cells}, the \emph{cell order} (a cell is a set of
vertices that should be contiguous in the cyclic order). Initially,
the cells are the
cycle-sets. In each
step we refine the cell-order by inserting an edge containing vertices
of more than one cell, possibly splitting two cells into two subcells
each. The following lemma shows that during this process of
refinements, as an invariant each remaining edge forms a single
interval on the cell order.

\begin{lemma}
	\label{lemma:cell-edge-interval-relation}
  If a hyperedge $e \in \hedgese$ intersects two cells, then $e$
  fully contains all cells lying in between in one of
  the two directions along the cyclic order.
\end{lemma}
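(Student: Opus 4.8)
The plan is to establish the statement as an invariant of the refinement procedure and prove it by induction on the number of edges inserted so far. Before any refinement the cells are exactly the cycle-sets, so the base case is precisely Lemma~\ref{lemma:cycle-edge-interval-relation}. For the inductive step, assume the invariant holds for the current cell order and consider inserting the next edge~$e'$ that meets more than one cell. By the inductive hypothesis $e'$ already forms a single interval, with at most two boundary cells that it covers only partially; for such a cell~$c$ we split $c$ into $c^{\mathrm{in}}=c\cap e'$ and $c^{\mathrm{out}}=c\setminus e'$ and place $c^{\mathrm{in}}$ towards the interior of $e'$'s interval, as prescribed by the refinement. It remains to show that after these (at most two) splits every hyperedge~$e$ still forms a single interval.

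Since only the split cells change, I would argue locally at each such cell~$c$. The routine cases are immediate: if $e$ is disjoint from~$c$ or contains~$c$ entirely, then $e$ treats the adjacent subcells $c^{\mathrm{in}},c^{\mathrm{out}}$ uniformly and its interval is unaffected, and if $e$'s interval does not reach~$c$ there is nothing to check. The substantive case is when $e$ covers~$c$ only partially, so that $c$ is a boundary cell of $e$'s interval as well. Writing $c^{\alpha}$ for the subcell lying on $e$'s interior side, the interval of~$e$ survives the split exactly when $e\cap c$ is nested with~$c^{\alpha}$: if $e\cap c\subseteq c^{\alpha}$ the boundary of~$e$ simply stays at~$c^{\alpha}$, and if $c^{\alpha}\subseteq e\cap c$ then $e$ covers $c^{\alpha}$ fully and the adjacent subcell partially. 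A short check confirms that in either nesting direction $e$ covers a consecutive run of the new cells with partial coverage only at one end.

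This reduces everything to ruling out the remaining, non-nested situation---a \emph{crossing}, in which $e\cap c$ and $c^{\alpha}$ are incomparable---and establishing that it never occurs is the step I expect to be the main obstacle. In a crossing one can select a vertex of~$c$ lying in $e\cap c$ but outside~$c^{\alpha}$ and another lying in $c^{\alpha}$ but outside~$e$, together with a vertex of the cell neighbouring~$c$ on $e$'s interior side (which lies in~$e$) and one neighbouring on $e'$'s interior side (which lies in~$e'$). Together with suitable cycle hyperedges incident to~$c$, these vertices witness a forbidden subhypergraph from~$\forbiddenh$, contradicting $\forbiddenh$-freeness. The number of sub-configurations---depending on whether $c^{\alpha}$ is $c^{\mathrm{in}}$ or $c^{\mathrm{out}}$, on the relative sizes of the traces $e\cap c$ and $e'\cap c$, and on how many cycle edges pass through~$c$---is finite but sizeable, and pinning down which forbidden subhypergraph arises in each case is the careful bookkeeping deferred to the appendix. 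Once the crossing case is excluded, the inductive step closes and the lemma follows.
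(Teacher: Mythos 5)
Your inductive framework is exactly the paper's: induct on the insertions, take Lemma~\ref{lemma:cycle-edge-interval-relation} as the base case, localize any failure to a cell~$c$ that both the newly inserted edge~$e'$ and the offending edge~$e$ cover only partially, and observe that the interval property survives unless $e\cap c$ and $c\cap e'$ are ``crossing'' (incomparable relative to the side on which $e$ enters~$c$). Your two crossing configurations correspond precisely to the paper's two main cases ($e$ and $e'$ entering $c$ from the same direction, resp.\ from opposite directions). Up to this point the reduction is correct.

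The gap is that the exclusion of the crossing configuration --- which you yourself flag as ``the main obstacle'' --- is the entire substance of the proof, and the recipe you sketch for it would not work as stated. You propose to build a forbidden subhypergraph from vertices of $c$ and its neighbouring cells ``together with suitable cycle hyperedges incident to~$c$.'' But after several rounds of refinement the boundaries of $c$ need not be boundaries of cycle edges at all: they may have been created by earlier inserted edges, and $e$, $e'$ and the cycle alone generally do not contain a member of $\forbiddenh$. The paper's argument has to bring in auxiliary hyperedges $\tilde e$, $\bar e$, $\bar e'$ --- the edge opening or closing at the relevant boundary of $c$, an edge containing all of $c$, edges overlapping these --- whose \emph{existence} is itself derived from invariants of the refinement (every inserted edge spans at least two cells, so every edge responsible for a cell boundary must be overlapped by some other edge reaching further), and in several subcases it must iteratively grow a path of overlapping hyperedges until it escapes $\tilde e$ or is trapped inside it, yielding $M_k$ or $F_k$ respectively. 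None of this machinery is present or even signalled in your sketch, so the claim that a crossing always yields a forbidden subhypergraph remains unproven; as written, the induction does not close.
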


After refining cells as long as possible,
each edge of the connected component that we did not insert lies
completely within a single cell. Several edges can lie within the same
cell, forming a hypergraph that imposes restrictions on the order of
vertices within the cell.  However, the cell contains fewer than $d$
vertices. Hence, this small hypergraph cannot contain any cycles,
since we removed all short cycles, and must be an interval hypergraph.

With this cell-structure, it is not too hard to show that the
following strategy to make the connected component an interval
hypergraph is optimal (see Lemmas~\appref{lemma:cell-neighboring-cycle},
\appref{lemma:cell-neighboring-break}
and~\appref{lemma:cell-neighboring-cycle-destruction} in
Appendix~\appref{sec:hyperedge-removal-approx-appendix}): For each pair
of adjacent cells we
determine the number of edges containing both cells, select the pair
minimizing that number, and remove all edges containing both.
The cell order then yields an order of the connected
component's vertices that supports all remaining edges.
Since this last step of the algorithm is done optimally, we do not
further change the approximation ratio, which, overall, is $d+1$,
because we never remove more than $d+1$ edges for at least one edge that
the optimal solution removes.

\paragraph{Runtime.} 
Our algorithm can be implemented to run in $O(m^2)$ time for
$m$ hyperedges. We give the
main ideas here and present details in
Appendix~\appref{sec:hyperedge-deletion-runtime}.
When searching for forbidden subhypergraphs, we
first remove all cycles of length
$k \le d$ using a modified breadth-first search in $O(m^2)$ time.
The remaining types of forbidden subhypergraphs each contain an edge
that contains all but one ($O_2$ and $F_k$), two ($M_k$), or three
($O_1$) vertices of the 
subhypergraph. We always start searching from such an edge and use
that all short cycles have already been removed.
In the second phase, we determine the connected components and initialize the cell
order for each of them, in $O(n+m)$ time. Stepwise refinement requires $O(m^2)$ time.
Counting hyperedges between adjacent cells, determining
optimal splitting points, and finding the final order can all
be done in linear time.

\begin{repeattheorem}{thm:int-hyper-edge-removal-approx-runtime}
  We can find a $(d+1)$-approximation for \textsc{Interval Hypergraph Edge
  Deletion} on hypergraphs with $m$ hyperedges of rank~$d$ in $O(m^2)$ time.
\end{repeattheorem}

\bibliographystyle{splncs03}
\bibliography{slbc,abbrv,lit}

\begin{thebibliography}{10}
\providecommand{\url}[1]{\texttt{#1}}
\providecommand{\urlprefix}{URL }

\bibitem{DBLP:journals/jgaa/BuchinKMSV11}
Buchin, K., van Kreveld, M.J., Meijer, H., Speckmann, B., Verbeek, K.: On
  planar supports for hypergraphs. J. Graph Algorithms Appl.  15(4),  533--549
  (2011)

\bibitem{bfr-sbtid-DM12}
Bulteau, L., Fertin, G., Rusu, I.: Sorting by transpositions is difficult. SIAM
  J. Discrete Math.  26(3),  1148--1180 (2012)

\bibitem{Chen20061346}
Chen, J., Huang, X., Kanj, I.A., Xia, G.: Strong computational lower bounds via
  parameterized complexity. J. Comp. System Sciences  72(8),  1346--1367 (2006)

\bibitem{eriksson2001sorting}
Eriksson, H., Eriksson, K., Karlander, J., Svensson, L., W{\"a}stlund, J.:
  Sorting a bridge hand. Discrete Math.  241(1),  289--300 (2001)

\bibitem{DBLP:journals/jgaa/FinkPW15}
Fink, M., Pupyrev, S., Wolff, A.: Ordering metro lines by block crossings. J.
  Graph Algorithms Appl.  19(1),  111--153 (2015)

\bibitem{kch-tgdt-AVI10}
Kim, N.W., Card, S.K., Heer, J.: Tracing genealogical data with timenets. In:
  Proc. Int. Conf. Adv. Vis. Interfaces (AVI'10). pp. 241--248 (2010)

\bibitem{k-dfid-AI85}
Korf, R.E.: Depth-first iterative-deepening: An optimal admissible tree search.
  Artif. Intell.  27(1),  97--109 (1985)

\bibitem{DBLP:conf/gd/KostitsynaNP0S15}
Kostitsyna, I., N{\"{o}}llenburg, M., Polishchuk, V., Schulz, A., Strash, D.:
  On minimizing crossings in storyline visualizations. In: Giacomo, E.D.,
  Lubiw, A. (eds.) Int. Symp. Graph Drawing (GD'15). LNCS, vol. 9411, pp.
  192--198. Springer, Heidelberg (2015)

\bibitem{MooreJr1977173}
{Moore Jr.}, J.I.: Interval hypergraphs and {$D$}-interval hypergraphs.
  Discrete Math.  17(2),  173--179 (1977)

\bibitem{mcsm-esval-BD13}
Muelder, C., Crnovrsanin, T., Sallaberry, A., Ma, K.: Egocentric storylines for
  visual analysis of large dynamic graphs. In: Proc. {IEEE} Int. Conf. Big
  Data. pp. 56--62 (2013)

\bibitem{xkcd-storylines}
Munroe, R.: Movie narrative charts, \url{https://xkcd.com/657/}

\bibitem{tm-dcosv-TVCG12}
Tanahashi, Y., Ma, K.: Design considerations for optimizing storyline
  visualizations. {IEEE} Trans. Vis. Comput. Graph.  18(12),  2679--2688 (2012)

\bibitem{trotter1976characterization}
Trotter, W.T., Moore, J.I.: Characterization problems for graphs, partially
  ordered sets, lattices, and families of sets. Discrete Math.  16(4),
  361--381 (1976)

\end{thebibliography}

\newpage
\appendix
\chapter*{\appendixname}

\section{Preliminaries: Proofs}
\label{app:preliminaries}

\begin{repeatobservation}{obs:badupper}
  \contentObsBadupper
\end{repeatobservation}
\begin{proof}
  Let $\pi'$ be an arbitrary permutation and $m = \{c, c'\} \in M$ the next
  meeting.
  Let $i$ and $j$ be the positions of the characters in the permutation, that
  is, $\pi'_i = c$ and $\pi'_j =c'$.
  Without loss of generality, assume $i < j$.
  If $\pi'$ does not support $m$, we can realize it using the block
	crossings $(i,i,j-1)$, that is, moving the line of $c$ directly
	above that of $c'$.
\end{proof}

\begin{proposition}
  \label{prop:counterexample}
  There is an instance $\CS$ of 2-SBCM and a start permutation~$\pi^0$
  such that there is no optimal solution $(\pi^0, B)$ of $\CS$ that
  starts with $\pi^0$ and uses at most one block crossing before the
  first and between each pair of consecutive meetings.
\end{proposition}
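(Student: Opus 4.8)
The plan is to establish the proposition by exhibiting one explicit instance $\CS$ together with a prescribed start permutation $\pi^0$, and then comparing, for this fixed start, the best unrestricted solution against the best one-block-crossing-per-gap solution. Writing $t$ for the minimum number of block crossings over all solutions that begin with $\pi^0$, I would prove (i) that $t$ is attained by a solution which performs two block crossings before some meeting, the first of which realizes no new meeting (a ``setup'' crossing of the kind anticipated in the proof of Theorem~\ref{thm:fpt}), and (ii) that every solution beginning with $\pi^0$ and using at most one block crossing before the first meeting and between consecutive meetings needs strictly more than $t$ crossings. Together these show that no optimal solution for the prescribed start $\pi^0$ has the one-per-gap form, which is exactly the assertion.

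The design of $\CS$ is guided by the fact that a single block crossing moves a whole block at once, so that realizing the next $2$-meeting $\{c,c'\}$ and preparing for the subsequent meetings can be incompatible in one move: the $k+1$ relevant block crossings listed in Section~\ref{sec:algorithm} only rearrange the block between $c$ and $c'$, whereas the cheapest continuation may first require rearranging characters lying outside that span. I would therefore choose $\pi^0$ and the meeting order so that the optimal trajectory first applies a crossing $b_1$ that leaves the next meeting unsupported, reaching an intermediate permutation $\pi^1$, and then a crossing $b_2$ that both realizes the meeting and places the remaining characters so that all later meetings fit; this exhibits the value $t$ and the upper bound for part~(i).

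Verifying the upper bound is routine: I would write down $b_1,b_2,\dots$ explicitly and check, using the bookkeeping of Lemma~\ref{lem:datastruct}, that the induced permutations support the meetings in the prescribed order. For the matching lower bound that $t$ is really optimal for start $\pi^0$, I would argue as in Section~\ref{sec:np-hard}: Observation~\ref{obs:permutationsolve} bounds from below the number of crossings needed to reach a permutation supporting a prescribed meeting, and for the concrete small instance the exact algorithm of Theorem~\ref{thm:polySpaceExact} can confirm the value directly.

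The main obstacle is part~(ii), the lower bound ruling out all one-per-gap solutions from $\pi^0$. By Observation~\ref{obs:badupper} such a solution realizes every not-yet-supported meeting by exactly one of the $k+1$ relevant crossings, so its reachable permutations form a finitely branching tree; the difficulty is that a relevant crossing also reorders a block, so an intermediate permutation reached in one relevant crossing frequently already supports several later meetings, letting a one-per-gap solution slip onto the optimal trajectory and match $t$. The instance must be engineered precisely to defeat this: the setup intermediate $\pi^1$ has to be essentially the unique cheap bridge to the configuration that the optimum reaches, and $\pi^1$ must itself fail to support the meeting at that gap, so that no single meeting-realizing crossing from $\pi^0$ lands within one further relevant crossing of that configuration. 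Establishing this non-existence is a finite case analysis over the $O(k)$ relevant crossings at each of the first few gaps, which I would either carry out by hand from the explicit list of relevant crossings or verify with the exact algorithm of Theorem~\ref{thm:polySpaceExact}; making the instance small enough for this check to be clean while keeping the setup crossing genuinely forced is the heart of the argument.
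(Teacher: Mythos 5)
Your plan matches the paper's strategy in every structural respect: the paper also proves the proposition by exhibiting one explicit instance and a fixed start permutation, showing the optimum from that start is achieved by a two-crossing solution in which the first crossing is a pure ``setup'' move realizing no meeting, and then running a finite case analysis over the meeting-realizing crossings available at the first gap to show that no one-crossing-per-gap solution can match that count. You have also correctly identified the two delicate points --- that the intermediate permutation must fail to support the pending meeting, and that none of the $O(k)$ relevant first crossings may land within one further crossing of the configuration needed for the remaining meetings.

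The gap is that you never produce the witness, and for an existential statement the witness \emph{is} the proof. Everything you defer --- ``making the instance small enough for this check to be clean while keeping the setup crossing genuinely forced is the heart of the argument'' --- is precisely the content the paper supplies: it takes $C=\{1,\dots,8\}$, $\pi^0=\langle 1,\dots,8\rangle$, and the meeting sequence $[\{6,3\},\{7,2\},\{1,5\},\{5,6\},\{6,3\},\{3,4\},\{4,8\},\{8,7\}]$, exhibits the two-crossing solution $B=[(2,4,7),(4,5,8)]$ (both crossings before the first meeting), lists the nine feasible first crossings that support $\{6,3\}$, checks that none supports $\{7,2\}$, observes that the second crossing of a hypothetical two-crossing one-per-gap solution would have to reach the unique permutation $\sigma=\langle 1,5,6,3,4,8,7,2\rangle$ (or its reverse) supporting all remaining meetings, and verifies case by case that no single block crossing gets there from any of the nine. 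Without a concrete instance there is nothing to verify, and it is not a priori obvious that an instance with all your required properties exists --- indeed your own discussion of how a relevant crossing ``frequently already supports several later meetings'' shows the construction is nontrivial. As written, the proposal is a correct and well-motivated search strategy, not a proof; to complete it you must commit to a specific $(\CS,\pi^0)$ and carry out the finite check.
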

\begin{proof}[by contradiction]
  Consider the instance $\CS = (C, M)$ with
  \begin{align*}
    C &= \{1, 2, 3, 4, 5, 6, 7, 8\} \text{ and}  \\
    M &=[\{6,3\}, \{7,2\}, \{1,5\}, \{5,6\}, \{6,3\}, \{3,4\}, \{4,8\}, \{8,7\}].
  \end{align*}
  Let $\pi^0=\langle 1,2,3,4,5,6,7, 8 \rangle$ be the
  start permutation.  There is a solution that performs only two block
  crossings, namely $(\pi^0, B)$ with
  $B = [(2,4,7), (4,5,8)]$, see Fig.~\ref{fig:counterexample}.
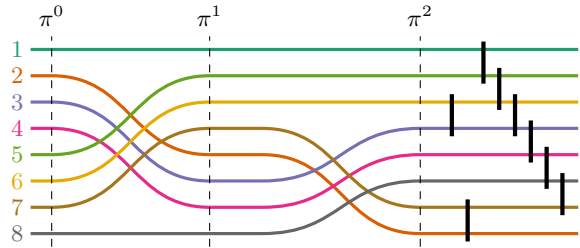
\begin{figure}[tb]
    \centering
    \begin{tikzpicture}[scale=.7]
      \storylineset{
        norightlabel,
        leftwidth=0.1,
        rightwidth=0.75,
        distx=4,
        crossinglength=0.75
      }

      \drawstoryline[drawingstyle=char1,label=$1$]{8,8,8}
      \drawstoryline[drawingstyle=char2,label=$2$]{7,4,1}
      \drawstoryline[drawingstyle=char3,label=$3$]{6,3,5}
      \drawstoryline[drawingstyle=char4,label=$4$]{5,2,4}
      \drawstoryline[drawingstyle=char5,label=$5$]{4,7,7}
      \drawstoryline[drawingstyle=char6,label=$6$]{3,6,6}
      \drawstoryline[drawingstyle=char7,label=$7$]{2,5,2}
      \drawstoryline[drawingstyle=char8,label=$8$]{1,1,3}

      \drawstorylinemeeting{1.9}{5}{6}
      \drawstorylinemeeting{1.975}{1}{2}
      \drawstorylinemeeting{2.05}{7}{8}
      \drawstorylinemeeting{2.125}{6}{7}
      \drawstorylinemeeting{2.2}{5}{6}
      \drawstorylinemeeting{2.275}{4}{5}
      \drawstorylinemeeting{2.35}{3}{4}
      \drawstorylinemeeting{2.425}{2}{3}

      \drawstorylinepermutation{0}{1}{8}{$\pi^0$}
      \drawstorylinepermutation{0.75}{1}{8}{$\pi^1$}
      \drawstorylinepermutation{1.75}{1}{8}{$\pi^2$}
    \end{tikzpicture}
    \caption{Optimal solution for $\CS$ from the proof of
      Proposition~\ref{prop:counterexample}.}
    \label{fig:counterexample}
\end{figure}
  Let $\pi^1$ be the permutation after the first block crossing of $B$
  on $\pi^0$, and $\pi^2$ the permutation after both block
  crossings.
  The permutation $\pi^2$ supports all meetings in $M$.
  The first meeting $\{6,3\}$ in $M$ does not fit $\pi^0$ or $\pi^1$, that is,
  both block crossings occur before the first meeting.

  Now assume there is another solution $(\pi^0, B')$ with $|B'| \leq 2$ that
  has at most one block crossing before each meeting.
  Starting from $\pi^0$ there are exactly nine feasible block
  crossings that allow the first meeting.
  They yield the following permutations:
  \begin{multicols}{2}
    \begin{itemize}
      \item $\langle 1, 2, 4, 5, 6, 3, 7, 8 \rangle$
      \item $\langle 1, 2, 5, 6, 3, 4, 7, 8 \rangle$
      \item $\langle 1, 2, 6, 3, 4, 5, 7, 8 \rangle$
      \item $\langle 4, 5, 1, 2, 3, 6, 7, 8 \rangle$
      \item $\langle 1, 4, 5, 2, 3, 6, 7, 8 \rangle$
      \item $\langle 1, 2, 4, 5, 3, 6, 7, 8 \rangle$
      \item $\langle 1, 2, 3, 6, 4, 5, 7, 8 \rangle$
      \item $\langle 1, 2, 3, 6, 7, 4, 5, 8 \rangle$
      \item $\langle 1, 2, 3, 6, 7, 8, 4, 5 \rangle$
    \end{itemize}
  \end{multicols}
  None of these permutations supports the second meeting $\{7, 2\}$.
  So we need the second block crossing before this meeting.
  This second block crossing needs to prepare all of the remaining meetings,
  because otherwise $|B'| > 2$.
  These meetings can only be supported by the permutation $\sigma = (1, 5, 6,
  3, 4, 8, 7, 2)$ or its reverse permutation $\sigma^R$.
  It remains to show that none of the permutations yielded by the feasible
  first block crossing can be transformed to $\sigma$ or $\sigma^R$ by one
  additional block crossing.
  All permutations containing $\langle 3,6 \rangle$ as a subsequence are
  infeasible because there is only one block crossing that swaps two neighboring
  characters and it does not produce $\sigma$.
  For permutations starting with $\langle 1, 2 \rangle$ there is only one
  possible block crossing to bring $2$ to the end of the permutation while $1$ stays
  at the first position,
  which also does not yield $\sigma$.
  Similarly, we can show that there is also no block crossing after any of the
  feasible block crossing for the first step that leads to $\sigma^R$.
\end{proof}

\section{NP-Hardness without Repetitions}
\label{sec:hardness-no-rep}
With arbitrarily large meetings, we can slightly modify our hardness
proof, and show that minimizing the number of block crossings is also
hard without repeating the same meeting many times. The idea to
change our reduced instance, is to replace the repeated sequence of
2-character meetings so that in each repetition the group size is
increased by one for all meetings; see
Fig.~\ref{fig:hardness-general-meetings}.
\begin{figure}[b]
  \centering
  \includegraphics[scale=1.4]{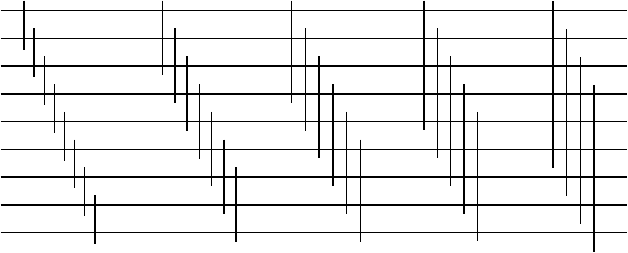}
  \caption{Simulating repeated 2-character meetings using groups of
  increasing size}
  \label{fig:hardness-general-meetings}
\end{figure}

Due to the overlapping structure of the groups in a single sequence,
they can only be all supported at the same time if also the
2-character
meetings that they replaced are supported. The only thing that we have
to be careful about is that when the groups get larger than
$k/2$ there is a growing set of characters in the middle that are
contained in exactly the same groups, and their relative
order does not matter. We will avoid that this happens.

Since we have $k+1$ sequences of repeated meetings at the beginning as
well as at the end of the timeline, and we keep increasing the group
sizes, we have groups of $2k+3$ characters in the end. We
replace $c_1, \ldots, c_{2k}$ by a new sequence $c_1, \ldots,
u_{5k}$ of characters without changing anything else on the structure.
Then, we can increase the group size up to $2k+3$ while in the end
still less than half of all characters are involved in each group.
Since the growing meetings completely simulate the desired 2-character
meetings, the rest of the reduction and its proof stay the same, and
we get the following result.
\begin{theorem}
  SBCM is NP-hard even if meetings are not repeated.
  \label{thm:general-norepeat-np-hard}
\end{theorem}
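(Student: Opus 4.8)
The plan is to reuse the reduction from \SBT in Section~\ref{sec:np-hard} verbatim, isolating its only source of repetition: the $(k{+}1)$-fold duplication of the meeting sequences $M_{\pi'}$ and $M_{\iota'}$. I would replace each identical copy by a distinct \emph{layer} obtained by letting group sizes grow, so that no meeting is ever repeated. Pad the characters so that $C$ has $N = 6k$ elements ($5k$ fillers plus the $k$ characters of the \SBT{} instance), and form $\pi'$, $\iota'$ as before with the fillers in a fixed common order on the left. For the start block I would use, in its $j$-th layer ($1\le j\le k+1$), \emph{all} consecutive windows of $\pi'$ of size $j{+}1$; for the end block I would use all consecutive windows of $\iota'$ of sizes $k{+}3,\dots,2k{+}3$. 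Because the two blocks draw from \emph{disjoint} size ranges, and within a block windows differ in size (across layers) or in position (within a layer), all $\Theta(k^2)$ meetings are pairwise distinct. The maximum group size $2k{+}3$ stays below $N/2=3k$ for $k\ge 3$.

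The main obstacle is a forcing lemma replacing the role of repetition: \emph{for a permutation $\rho$ of $N$ characters and a window size $s\le N/2$, the only permutations simultaneously supporting all consecutive size-$s$ windows of $\rho$ are $\rho$ and $\rho^R$.} I would prove this by the overlapping-interval argument: two consecutive windows $\{\rho_i,\dots,\rho_{i+s-1}\}$ and $\{\rho_{i+1},\dots,\rho_{i+s}\}$, both contiguous, force their union to be contiguous with $\rho_i$ and $\rho_{i+s}$ at the two ends; sliding this along $\rho$ pins the linear order up to global reversal. The only ambiguity would come from \emph{twin} characters lying in exactly the same set of windows; a short index computation shows twins occur only when $s>N/2$, so the size bound rules them out. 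This is why the padding to $5k$ fillers (keeping $2k{+}3<3k$) is needed: it guarantees every layer in both blocks has $s\le N/2$, so a constant permutation supporting one layer must be $\pi'$ (resp.\ $\iota'$) or its reverse.

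With the lemma in hand I would re-run the original counting argument. Since $\pi$ is sortable with at most $k$ block crossings, any solution to $\CS$ uses at most $k$; as the start block has $k{+}1$ layers, some layer must contain no block crossing, so its constant permutation is $\pi'$ or $\pi'^R$, and likewise the end block yields $\iota'$ or $\iota'^R$. By the global-reversal symmetry I may assume $\pi'$ (unreversed) occurs; if $\iota'^R$ occurred at the end, then transforming $\pi'$ to $\iota'^R$ would reverse the $5k$-character filler subsequence, which by the Eriksson et al.~\cite{eriksson2001sorting} bound needs $\lceil(5k{+}1)/2\rceil>k$ crossings (using Observation~\ref{obs:permutationsolve}, since $\pi'$ and $\iota'^R$ contain the filler sequence and its reverse as subsequences), a contradiction. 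Hence $\iota'$ occurs, and the crossings between the two pinned permutations sort $\pi'$ to $\iota'$, i.e.\ sort $\pi$ to $\iota$. Conversely, starting at $\pi'$ (which supports every layer of the start block) and applying an optimal \SBT{} sequence reaches $\iota'$ (which supports every layer of the end block) without extra crossings.

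Thus the optimum for $\CS$ equals the \SBT{} optimum for $\pi$, and since no meeting is repeated, this establishes NP-hardness of SBCM without repeated meetings, proving Theorem~\ref{thm:general-norepeat-np-hard}. The delicate points I expect to spend the most care on are the twin-freeness computation underlying the forcing lemma and the bookkeeping that keeps all meetings distinct while preserving both the per-layer forcing property and the prohibitive cost of reversing the fillers.
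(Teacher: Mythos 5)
Your proposal is essentially the paper's own construction: the appendix proof likewise replaces each of the $k{+}1$ repetitions by a layer of consecutive windows whose size grows by one per layer (reaching $2k{+}3$), pads the filler characters from $2k$ to $5k$ precisely so that every group stays at most half of all characters (avoiding the ``twin'' ambiguity you identify), and then reruns the original counting and Eriksson-et-al.\ reversal argument unchanged. You have simply spelled out the forcing lemma and the bookkeeping that the paper leaves implicit; the only nit is that $2k+3\le 3k$ needs $k\ge 3$ (with equality) rather than strict inequality, which is immaterial since small instances are trivial.
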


\section{Exact Algorithms: Proofs}
\label{app:exact}

\begin{repeatlemma}{lem:datastruct}
  \contentLemDatastruct
\end{repeatlemma}
\begin{proof}
Represent the permutation as a doubly-linked list. %
Then it takes constant time to check whether a 2-meeting fits: check the previous/next pointers.
Since a block crossing changes at most 6 adjacencies, a \method{BlockMove} can update the linked list in constant time.

Now we look at a meeting of cardinality $m$.
Interpret the linked list as a path and consider the subgraph induced by the nodes in the meeting.
If the meeting fits the permutation, this subgraph is connected and, being a path, has $m-1$ edges; if the meeting does not fit, this subgraph has more components and therefore fewer edges.
The \method{Check} operation on a meeting of size $m$ can be performed in $O(m)$ time by counting at every node in the meeting whether zero, one or two of its neighbors are also in the meeting.
For the amortized runtime over a sequence of operations, remember this count: \method{BlockMove} can update it in constant time, since again at most 6 adjacencies change.

In terms of space, there is only the doubly linked list and the count.
\end{proof}

\section{SBCM with Meetings of Two Characters: Proofs}
\label{app:2sbcm}

For the following lemma, we assume that no two subsequent meetings in the input are the same.
We call an instance \emph{normal} if this is the case.
An instance can be normalized by simply dropping the repeated meetings.
This does not affect the optimum number of block crossings or the behavior of the greedy algorithm, but note that it does lower $n$.

\begin{lemma}\label{lem:2sbcmUpper}
A normal instance of 2-SBCM with $k=3$ can be solved using at most
$\lceil n/2 \rceil-1$ block crossings.
\end{lemma}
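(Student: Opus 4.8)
The plan is to reduce the $k=3$ case to a simple combinatorial problem about a ternary sequence, exploiting the very limited structure available for three characters. With $\{1,2,3\}$, every permutation has a well-defined \emph{middle} element, and among the three possible pairs exactly the two containing the middle are adjacent. Thus a permutation supports precisely the two $2$-meetings that contain its middle character; equivalently, the meeting $\{a,b\}$, whose third character is $c$, is supported if and only if the middle element is not $c$. As far as which meetings fit is concerned, a permutation is therefore fully described by its middle element, which I call its \emph{state} $s\in\{1,2,3\}$, and meeting $m_i$ imposes the single constraint $s\neq f_i$, where $f_i$ is the third character of $m_i$. Normality ($m_i\neq m_{i+1}$) gives $f_i\neq f_{i+1}$.

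Next I would analyse the four block crossings available for $k=3$, namely $(1,1,2)$, $(2,2,3)$, $(1,1,3)$ and $(1,2,3)$. A direct check shows that, applied to any permutation $\langle a,b,c\rangle$, they yield exactly the four permutations whose middle differs from $b$ --- precisely \emph{both} permutations of each of the two other states. Hence a single block crossing moves from any state to either of the other two states, and the particular permutation reached within a target state is immaterial, both for supporting meetings and for any later transition (the two permutations of a state are reverses of each other and, one verifies, reach the same set of permutations under a further block crossing). Staying in a state costs nothing. This turns the problem into choosing states $s_1,\dots,s_n$ with $s_i\neq f_i$ so as to minimise the number of \emph{switches} $|\{i: s_i\neq s_{i+1}\}|$: since the start permutation can be chosen freely to realise $s_1$, the total number of block crossings equals the number of switches.

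To obtain the bound I would pair the meetings. For any two consecutive meetings $i,i+1$ we have $f_i\neq f_{i+1}$ by normality, so there is a \emph{unique} state avoiding both forbidden values; assigning this common state to both meetings of a pair introduces no switch inside the pair. Covering the $n$ meetings by $\lceil n/2\rceil$ consecutive blocks of size at most two (all pairs, plus one singleton if $n$ is odd) and using a constant state on each block yields at most $\lceil n/2\rceil-1$ switches between blocks, hence at most $\lceil n/2\rceil-1$ block crossings, as claimed. (Small cases confirm the count: for $n\le 2$ one state serves all meetings, giving $0$ crossings.)

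The main obstacle --- really the crux of the argument --- is the structural analysis of the second step: verifying that for $k=3$ a single block crossing reaches exactly the permutations of the two other states and that the within-state permutation is irrelevant, so that the geometric block-crossing problem collapses to the clean switching-minimisation problem. Once that reduction is in place, the pairing argument and the use of normality to guarantee a common feasible state for each adjacent pair are routine.
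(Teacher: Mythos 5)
Your proof is correct and follows essentially the same route as the paper's: both rest on the observation that for $k=3$ a permutation is characterized, up to reversal, by which one of the three possible meetings it fails to support (your ``middle element'' state), that a single block crossing can move between any two of these three classes, and that normality guarantees any two consecutive meetings share a supporting class. The only substantive difference is the final bookkeeping: you partition the meeting sequence into fixed consecutive pairs (plus a possible singleton), each served by one state, whereas the paper greedily forms maximal \emph{epochs} that alternate between two distinct meetings and observes that every epoch except possibly the last has length at least two; both yield $\lceil n/2\rceil-1$. The paper's epoch decomposition has the added benefit that it doubles as the matching lower-bound certificate in the proof that the greedy algorithm is optimal for $k=3$ (Theorem~\ref{thm:k3-greedy-optimal}), which a fixed pairing does not directly provide.
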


\begin{proof}
Note that there are only three possible meetings, namely $\{1,2\}$, $\{1,3\}$, and $\{2,3\}$.
Any permutation supports precisely two of these and not the third, and is equivalent in this sense to its reverse.
For example, the permutation $\langle1,2,3\rangle$ and its reverse support the meetings $\{1,2\}$ and $\{2,3\}$, but not $\{1,3\}$.
Let $\pi$ and $\pi'$ be distinct permutations.
Case distinction shows that it is always possible in a single block
crossing to get from $\pi$ to either $\pi'$ or its reverse.

For the analysis, we partition the sequence of meetings into \emph{epochs} as follows.
We start from the first meeting and keep going until the third distinct meeting occurs: these meetings form the first epoch.
That is, an epoch alternates between two different meetings.
Repeating this process partitions the entire sequence of meetings into epochs, possibly with a single remaining meeting as final epoch.
A solution can choose the start permutation $\pi^0$ that supports the first epoch.
After that it can always get to a permutation that supports the entire
next epoch in one block crossing.
In the worst case all epochs have length~2, and we need $\lceil n/2
\rceil-1$ block crossings.
\end{proof}

\begin{repeattheorem}{thm:k3-greedy-optimal}
  \contentThmGreedyOptimal
\end{repeattheorem}
\begin{proof}
We look at the epochs from Lemma~\ref{lem:2sbcmUpper} again.
The greedy algorithm produces one block crossing fewer than the number of epochs.

Consider any epoch except the last one and include the meeting after it.
By construction, this is the third distinct meeting and therefore these meetings together cannot fit a single permutation.
Then in any solution to the problem, a block crossing must occur after at least one of the meetings in the epoch.
This holds for all epochs except the last one and since they are
disjoint, the number of epochs reduced by one is a lower bound for the
optimum number of block crossings.
The result of the greedy algorithm realizes this bound.
\end{proof}

\section{Improved Approximation for 2-SBCM}
\label{app:2-SBCM-approx}
By using specific
structures for 2-character meetings we can improve approximation
factor and runtime (the general algorithm yields an
18-approximation).

Note that for 2-character meetings the group hypergraph is a graph,
and an interval hypergraph here is a collection of
vertex-disjoint paths. Our algorithm for \textsc{Interval Hypergraph
Edge Deletion} for $d=2$ yields a 3-approximation.
We develop a better approximation using the following
observation. Consider a character $c$ in the collection of paths
supported in the beginning of some solution. If $c$ has two neighbors
$c_1$ and $c_2$ in its path, but $c$'s first meeting is with a
character $c_3 \notin \{c_1, c_2\}$, then at the beginning of that
meeting $c$ can only be neighbor to one of the two, say, to $c_1$, even
in an optimal solution; the meeting with $c_2$ then must later be
reconstructed by block crossings. Hence, the
effective set of meetings supported in the beginning is in fact a collection
of paths with the additional restriction that each character is adjacent
to at most one character except for the one he meets first.
Without changing the rest of the analysis, we can approximate this new
problem for finding the start permutation.

We first consider, for each vertex $c$, all edges incident to $c$ except
for the one describing $c$'s first meeting. If there are $\ell \ge 2$ such
edges, we know that even the optimal solution can support at most one
of them and, hence, has to remove $\ell-1$ of them. We remove all
$\ell$ of them, which yields an approximation factor of $\ell/(\ell-1) \le 2$.
Eventually, all vertices have degree~2 or less and the connected
components are paths and cycles. For each cycle, we remove one arbitrary
edge, so that we end up with a collection of paths. This second step
does not change the approximation factor since the optimal solution
has to remove at least one edge per cycle as well. This
algorithm easily runs in linear time,
which speeds up the runtime of the complete algorithm to $O(kn)$.

\begin{theorem}
  \label{thm:storyline-pairwise-norep-approx-improved}
  We can find a $12$-approximate solution
  for $2$-SBCM
  without repetitions in $O(kn)$ time.
\end{theorem}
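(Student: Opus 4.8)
The plan is to reuse, verbatim, the three-step approximation framework of Section~\ref{sec:approximation} specialised to $d=2$, and to improve \emph{only} the factor that enters through the choice of the start permutation. Recall that for $d=2$ the general bound is obtained by chaining $\alg \le 2(d-1)\cdot(\text{paid edges}) = 2\cdot(\text{paid edges})$, the edge-deletion guarantee $(\text{our paid edges})\le (d+1)\cdot(\text{optimal paid edges})$, and the lower bound $\opt \ge 4\cdot(\text{optimal paid edges})/(3d^2)=(\text{optimal paid edges})/3$ from Lemma~\ref{lem:hyper-story-lower-bound}; the product of the constants $2$, $3$, $3$ is $18$. To reach $12$ I would leave the ``cost per paid meeting'' ($2$ block crossings: realise the meeting, then retract) and the lower bound untouched, and replace the $3$-approximation for the deletion subproblem by a $2$-approximation. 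This is possible because for $2$-character meetings the group hypergraph is an ordinary graph and an interval hypergraph is a disjoint union of paths, which gives extra structure to exploit.

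First I would isolate the structural observation that sharpens the subproblem. Fix a character $c$ whose chronologically first meeting is with some $c_3$. At the instant of that meeting $c$ is adjacent to $c_3$, and since $c$ has at most two neighbours in any permutation, at that instant $c$ is adjacent to at most one further character. Hence, among the edges incident to $c$ other than its first-meeting edge, at most one can be carried for free all the way to its meeting, \emph{even in an optimal solution}; any further such edge must be produced by a block crossing. I would record this as: the meetings a single start permutation can realise for free form a subgraph that is acyclic and in which every vertex has \emph{non-first-meeting degree} at most one. The refined subproblem is then to delete as few edges as possible to obtain such a subgraph.

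Next I would state and analyse a two-phase removal and feed it into the framework. In phase one, for every vertex with $\ell\ge 2$ incident edges other than its first-meeting edge, delete all $\ell$ of them; since any refined-free subgraph keeps at most one of these, the optimum deletes at least $\ell-1$, so this step loses a factor at most $\ell/(\ell-1)\le 2$. After phase one every vertex has total degree at most two, so the components are paths and cycles; in phase two I delete one edge from each cycle, which the optimum must also break, so the ratio is unaffected. Both phases are linear, and the remaining work (building the graph, selecting $\pi^0$, and realising each paid meeting with two block crossings and retracting) is $O(kn)$, exactly as in Theorem~\ref{thm:storyline-gen-norep-approx-faster}. Writing $p$ for the number of meetings the optimal start permutation fails to realise for free and $q$ for the number of edges our algorithm deletes, the two-approximation together with the fact that the optimal start permutation induces a valid refined deletion of size $p$ gives $q\le 2p$, while the counting argument underlying Lemma~\ref{lem:hyper-story-lower-bound} (a block crossing creates at most three new adjacencies, and each of the $p$ effectively-paid meetings needs a fresh one) gives $\opt\ge p/3$. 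Thus $\alg\le 2q\le 4p\le 12\,\opt$, and the runtime is $O(kn)$.

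The step I expect to demand the most care is the accounting in phase one. The estimate $\ell/(\ell-1)\le 2$ is argued per vertex, but a single edge can be a non-first-meeting edge of \emph{both} of its endpoints, so summing the local bounds naively double-counts both the edges we remove and the edges the optimum removes. I would close this gap with an explicit charging that sends each deleted edge to a distinct edge of the optimal refined-deletion set while charging no optimal edge more than twice; the invariant that every vertex ultimately keeps at most its first-meeting edge plus one free edge, together with the acyclicity enforced in phase two, is what makes such a charging consistent. A secondary point to confirm is that the lower bound may indeed be read against the \emph{effectively}-paid set $p$ rather than merely against the meetings unsupported at the start: this holds because no meeting repeats, so the $p$ effectively-paid meetings are distinct pairs, each of which is non-adjacent just before its meeting and must therefore be among the at most three adjacencies created by some block crossing.
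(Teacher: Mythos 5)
Your overall route is the same as the paper's: specialize the three-step framework of Section~\ref{sec:approximation} to $d=2$, keep the cost of $2$ block crossings per paid meeting and the lower bound $\opt\ge p/3$ from Lemma~\ref{lem:hyper-story-lower-bound}, and replace the generic $(d+1)=3$-approximation for the deletion subproblem by a $2$-approximation for the \emph{refined} subproblem (union of paths in which every character keeps at most one edge besides its first-meeting edge). The structural observation justifying the refinement, the two-phase removal (per-vertex pruning, then breaking residual cycles), and the final chain $\alg\le 2q\le 4p\le 12\,\opt$ all match the paper.

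The one genuine gap is exactly the accounting issue you flag in phase one, and the repair you sketch does not work for the algorithm as you state it. If phase one deletes, \emph{simultaneously}, all $\ell_c$ non-first-meeting edges at every vertex $c$ with $\ell_c\ge 2$, then no charging can bound each optimal edge's load by two: take meetings $\{u,x\},\{v,z\},\{u,w\},\{v,y\},\{u,v\}$ in this order. Here $u$'s non-first edges are $\{u,v\},\{u,w\}$ and $v$'s are $\{u,v\},\{v,y\}$, so the simultaneous rule deletes three edges, while deleting the single shared edge $\{u,v\}$ already yields a feasible refined solution (two paths $x$--$u$--$w$ and $z$--$v$--$y$, each vertex with at most one non-first edge). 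That is a local ratio of $3$, which would only reproduce the factor $18$. The fix is to make phase one \emph{sequential}: process the vertices one at a time, recomputing the number $\ell$ of \emph{surviving} non-first edges at the current vertex, and delete all $\ell$ of them only when $\ell\ge 2$. Then the deletion events produce pairwise disjoint edge sets $D_1,D_2,\dots$ with $|D_i|\ge 2$, and any feasible refined solution must delete at least $|D_i|-1$ edges of each $D_i$; summing gives $q_1\le 2\sum_i(|D_i|-1)\le 2\,(\text{refined }\opt)$ with no double-counting, and the cycle-breaking deletions of phase two are charged to further, disjoint optimal deletions. With that modification your argument is complete; on the example above the sequential rule deletes only two edges. (Everything else you say --- reading the lower bound against the effectively-paid set $p$, which is legitimate because meetings do not repeat and each such pair must be made adjacent afresh by some block crossing, and the $O(kn)$ runtime as in Theorem~\ref{thm:storyline-gen-norep-approx-faster} --- is fine.)
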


\section{Interval Hypergraph Edge Deletion}
\label{sec:hyperedge-removal-approx-appendix}
\begin{lemma}
  Let $\hyper = (V, \hedgese)$ be an \ffree hypergraph. Let $C$ be a
  cycle appearing as a subhypergraph in $\hyper$. Then two edges of
  $C$ have a common vertex if and only if they are consecutive in $C$
  or they share a common neighbor in $C$.
  \label{lemma:hyper-cycle-shape}
\end{lemma}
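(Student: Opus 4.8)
The plan is to prove the substantive (forward) direction by contraposition: if two edges of the cycle lie at cyclic distance at least three yet share a vertex, I would exhibit a member of \forbiddenh\ as a subhypergraph, contradicting that \hyper\ is \ffree. The reverse direction needs almost no work, since by the definition of a cycle two consecutive edges $e_p,e_{p+1}$ already share the connecting vertex $v_{p+1}$; the real content is therefore the bound that a shared vertex cannot span a cyclic distance of three or more.

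So fix $C=(e_1,\dots,e_k)$ with connecting vertices $v_1,\dots,v_k$, where $v_p\in e_{p-1}\cap e_p$ and each $e_p$ meets $\{v_1,\dots,v_k\}$ only in $\{v_p,v_{p+1}\}$. Since \hyper\ is \ffree\ it contains none of $C_3,\dots,C_{d+1}$, so $k\ge d+2$. Suppose for contradiction that $w\in e_i\cap e_j$ with $\delta:=\mathrm{dist}_C(i,j)\ge 3$. First I would observe that $w$ cannot be a connecting vertex of $C$: by cleanness, $v_p$ lies in exactly the two consecutive edges $e_{p-1},e_p$, whereas $e_i$ and $e_j$ are non-consecutive. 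Hence $w$ is an ``extra'' vertex shared by two far-apart edges.

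The engine of the argument is that such a vertex lets us short-circuit an arc of $C$. Take the shorter arc $e_i,e_{i+1},\dots,e_j$ (of $\delta\ge 3$ steps). If no interior edge $e_p$ ($i<p<j$) contains $w$, then $e_i,\dots,e_j$, together with the designated vertices $v_{i+1},\dots,v_j$ and $w$ closing $e_j$ back to $e_i$, satisfy all conditions of a cycle; the cleanness check reduces to the cleanness of $C$ plus $w\notin e_{i+1}\cup\dots\cup e_{j-1}$. This clean cycle has length $\delta+1$, so whenever $3\le\delta+1\le d+1$ it is one of the forbidden $C_\ell$, an immediate contradiction. Any interior edge that does contain $w$ only exposes a pair sharing $w$ at strictly smaller span, so it suffices to treat the extremal case in which $w$ lies only in the two endpoints and the clean-cycle construction applies; this already disposes of every chord whose short arc has at most $d$ steps.

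The main obstacle is the residual regime in which the chord is long on both arcs (each arc has more than $d$ edges) or $w$ lies in three or more edges of $C$: here closing an arc yields only long cycles, so no forbidden $C_\ell$ appears directly. In these cases I would instead read off one of the sporadic obstructions $O_1,O_2$ or a member of the bounded families $F_1,\dots,F_{d-2}$, $M_1,\dots,M_{d-1}$ from the cycle-plus-chord, taking the two endpoint edges sharing $w$ together with one witnessing edge from each side carrying a connecting vertex outside $e_i\cup e_j$; such a constant-size configuration cannot be placed on a line. Because every member of \forbiddenh\ has at most $d+1$ edges and the rank is $d$, this is a finite case analysis, and matching each configuration to the correct forbidden subhypergraph is where essentially all of the effort lies. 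Once every case produces a member of \forbiddenh, the contradiction gives $\delta\le 2$, i.e.\ $e_i$ and $e_j$ are consecutive or share a common neighbour, which establishes the lemma.
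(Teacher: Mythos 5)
There is a genuine gap, and it sits exactly where you concede ``essentially all of the effort lies.'' Your argument is complete only when the shorter arc between $e_i$ and $e_j$ has at most $d$ steps: there the closed-up cycle of length $\delta+1\le d+1$ is indeed a forbidden $C_{\delta+1}$, and your cleanness check for it is correct. But when both arcs are longer than $d$ (possible as soon as $k>2d$), no forbidden cycle appears, and for that regime you only gesture at ``reading off'' a member of \forbiddenh from ``the two endpoint edges together with one witnessing edge from each side.'' That four-edge configuration is never verified to be $O_1$, $O_2$, $F_\ell$ or $M_\ell$ --- and it cannot be verified without first ruling out degeneracies such as $w$ lying in a neighbour of $e_i$, or a neighbour of $e_i$ meeting $e_j$. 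Moreover, your reduction ``any interior edge containing $w$ only exposes a pair at strictly smaller span, so it suffices to treat the case where $w$ lies only in the two endpoints'' does not go through: the smaller span may be $1$ or $2$, which is perfectly consistent with the lemma's conclusion, so minimality of the span does not force $w$ out of the interior edges. You need a separate argument that three pairwise non-consecutive edges of $C$ cannot share a vertex at all.

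For comparison, the paper's proof supplies precisely these two missing pieces, and does so uniformly in the arc length. First, if three cycle edges share a vertex and each has a private vertex, they form $M_1$; if one has no private vertex it is contained in the union of the other two, which forces the three edges to be consecutive. This disposes of your ``$w$ in three or more edges'' case and also shows that the shared vertex $v$ of two far-apart edges $e,e'$ lies in no neighbour of $e$ or $e'$ on $C$. Second, taking $e$, its two cycle-neighbours $e_1,e_2$, and $e'$: if $e_1$ or $e_2$ meets $e'$ one finds $C_3$; otherwise $e_1$, $e_2$ and $e'$ each carry a vertex private to the four-edge configuration and one finds $O_1$. Your short-arc observation is a valid alternative route for part of the statement, but as written the proposal proves the lemma only for chords whose short arc is at most $d$, and the remaining cases are asserted rather than proved.
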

\begin{proof}
  No edge of the $C$ can fully contain another edge of $C$.
	Let $e_1, e_2, e_3 \in C$  be three
  edges of $C$, and assume that $e_1 \cap e_2 \cap e_3 \supseteq
  \left\{ v \right\} \neq \emptyset$. If there are vertices $v_1 \in
  e_1 \setminus (e_2 \cup e_3)$, $v_2 \in e_2 \setminus (e_1 \cup
  e_3)$, and $v_3 \in e_3 \setminus (e_1 \cup e_2)$, the three
  hyperedges form a subhypergraph of type $M_1$ (with $v_1, v_2, v_3$,
  and $v$ serving as vertices); see
  Fig.~\ref{fig:hypergraph-cycle-edge-structure-claw}.
  \begin{figure}[tb]
    \begin{subfigure}[t]{.25\textwidth}
      \centering
      \includegraphics{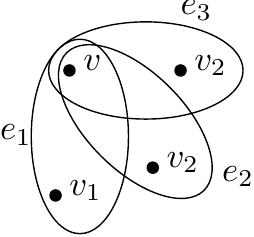}
      \caption{3 hyperedges forming $M_1$.}
      \label{fig:hypergraph-cycle-edge-structure-claw}
    \end{subfigure}
    \hfill
    \begin{subfigure}[t]{.31\textwidth}
      \centering
      \includegraphics[page=2]{img/hypergraph-cycle-edge-structure}
      \caption{3 edges consecutive on $C$.}
      \label{fig:hypergraph-cycle-edge-structure-consecutive}
    \end{subfigure}
    \hfill
    \begin{subfigure}[t]{.36\textwidth}
      \centering
      \includegraphics[page=3]{img/hypergraph-cycle-edge-structure}
      \caption{$O_1$ as a subhypergraph.}
      \label{fig:hypergraph-cycle-edge-structure-o3}
    \end{subfigure}
    \caption{Illustrations of the proof of Lemma~\ref{lemma:hyper-cycle-shape}.}
    \label{fig:hypergraph-cycle-edge-structure}
  \end{figure}

  On the other hand, if one of the
  three, say, $v_2$ does not exist, we have $e_2 \subseteq e_1 \cup
  e_3$ and one easily checks that this can only be the case if the
  three edges are consecutive on the cycle, since every vertex of
  $e_2$ must also be a vertex of $e_1$ or $e_3$; see
  Fig.~\ref{fig:hypergraph-cycle-edge-structure-consecutive}.

  Now, assume that there are two edges $e, e' \in C$ with $e \cap e'
  \supseteq \left\{ v \right\} \neq \emptyset$ that are neither
  consecutive nor have a common neighboring hyperedge in $C$. As we
  have seen, $v$ can be contained in none of the neighbors of
	$e$ and $e'$ in $C$. Let $e_1$ and $e_2$ be the neighbors of
	$e$ in $C$. If either of the two intersects with $e'$, we find
	$C_3$ as a subhypergraph, a contradiction. Hence, there are elements
	$v_1 \in e_1$, $v_2 \in e_2$, and $v' \in e'$ so that each of the
	vertices is contained in no other of the four involved hyperedges.
	With these vertices, we have found $O_1$ as a subhypergraph; see
  Fig.~\ref{fig:hypergraph-cycle-edge-structure-o3}.
\end{proof}

With this lemma, we know about the structure of the vertices contained
in  hyperedges of a cycle: A vertex can be contained in at most three
hyperedges of the cycle, where the case of three hyperedges with a
common vertex occurs only if a hyperedge is contained in the union of
its two neighbors in the cycle.

Assume that $e_1, e_2$, and $e_3$ are three consecutive
edges of a cycle $C$. If all three edges are present in an
interval representation for part of the edges of \hyper, we know
that in the order we will first encounter vertices that are only
contained in $e_1$, then vertices that are in $(e_1 \cap e_2) \setminus
e_3$, then vertices that are in $e_1 \cap e_2 \cap e_3$, followed by vertices of $(e_2 \cap e_3) \setminus e_1$, and vertices of $e_3 \setminus (e_1
\cup e_2)$. Some of these sets (except for the pairwise intersections) may
be empty. We do not know anything about the relative order of vertices
within one of these sets, but we know the relative order of any pair
of vertices of different sets; see Fig.~\ref{fig:cycle-sets-order}.
\begin{figure}[tb]
  \centering
  \includegraphics{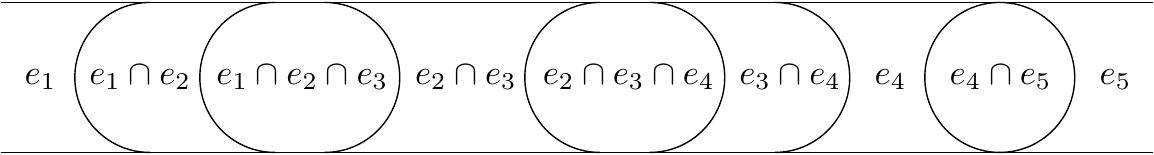}
  \caption{Cycle-sets and their relative order.}
  \label{fig:cycle-sets-order}
\end{figure}
By generalizing this to the
whole cycle, we get a cyclic order---describing the local order in a
possible interval representation---of sets defined by containment in
1, 2, or 3 hyperedges. We call these sets \emph{cycle-sets}, and their
cyclic order the \emph{cycle-order} of $C$.

\begin{lemma}
  Let $\hyper = (V, \hedgese)$ be an \ffree hypergraph and let $C$ be a
  cycle appearing as a subhypergraph in $\hyper$. There is no
  hyperedge $e \in \hedgese$ that contains both vertices of edges of
  $C$ and at least one vertex $v \notin \bigcup_{e' \in C} e'$.
	\label{lemma:hypergraph-two-cycles}
\end{lemma}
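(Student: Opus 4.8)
The plan is to argue by contradiction: suppose some $e \in \hedgese$ contains a vertex $v \notin \bigcup_{e' \in C} e'$ together with at least one vertex $a \in \bigcup_{e' \in C} e'$. Since ``contains as a subhypergraph'' only asks us to \emph{select} a suitable set of vertices and a suitable set of edges, it suffices to exhibit, in each case, a constant number of cycle edges that, together with $e$, trace out one of the forbidden hypergraphs $M_1$ or $O_1$ on a handful of chosen vertices; this contradicts that $\hyper$ is \ffree. Throughout, the external vertex $v$ plays the role of a \emph{private} vertex (the free petal of the claw $M_1$, or the private vertex of the transversal edge of $O_1$), which is exactly where the hypothesis $v \notin \bigcup_{e' \in C} e'$ enters. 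I first note that $e$ is not itself an edge of $C$ (otherwise it would have no external vertex), and that $C$ has at least four edges: because $C_3, \ldots, C_{d+1} \in \forbiddenh$, every cycle occurring in an \ffree hypergraph is long, so for any index $i$ the edges $e_{i-1}, e_i, e_{i+1}$ are distinct and the notion of cycle-neighbours is well defined.

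I would organize the argument by the cycle-set to which $a$ belongs, i.e.\ by how many cycle edges contain $a$, which by Lemma~\ref{lemma:hyper-cycle-shape} is one, two, or three (the last only when a cycle edge is contained in the union of its two neighbours). If $a$ lies in a single cycle edge $e_i$, then $a$ is an interior vertex of $e_i$ shared with neither $e_{i-1}$ nor $e_{i+1}$, and I build $O_1$ from the consecutive triple $e_{i-1}, e_i, e_{i+1}$ together with the transversal edge $e$ through $a$, taking $v$ as the private vertex of $e$ and the linking vertices $v_i, v_{i+1}$ together with a private vertex of each neighbour as the remaining witnesses. If $a$ lies in two consecutive edges $e_{i-1}, e_i$, then these two edges and $e$ all contain $a$, and I read off the claw $M_1$ on the four vertices $a$, $v$, and one private vertex of each of $e_{i-1}$ and $e_i$. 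The three-edge case reduces to the two-edge case by restricting attention to two of the three edges through $a$.

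The step I expect to be the main obstacle is verifying that the required private vertices actually exist, i.e.\ that each chosen cycle edge contributes a witness vertex lying outside the other chosen cycle edges \emph{and} outside $e$. The no-containment property of cycle edges (no edge of $C$ contains another, established inside the proof of Lemma~\ref{lemma:hyper-cycle-shape}) supplies a vertex private to the neighbouring cycle edges, but $e$ may additionally intersect those neighbours, so a first choice of witness can be swallowed by $e$. Handling this is the bookkeeping heart of the proof: whenever $e$ meets a neighbour $e_{i\pm 1}$ in an extra vertex or in a further linking vertex, that extra intersection either yields a more direct claw $M_1$ (three edges sharing a common vertex of $e$) or lets me slide the chosen triple to an adjacent position where a genuine private vertex survives. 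Because the external vertex $v$ remains available throughout, every such reconfiguration still produces a member of \forbiddenh, and since the cycle is finite the re-selection terminates, giving the desired contradiction.
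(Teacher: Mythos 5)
Your skeleton is the right one---argue by contradiction, split on whether the vertex that $e$ shares with the cycle is a linking vertex (lying in two or three cycle edges) or a private vertex of a single cycle edge, and aim for a member of the $M$-family in the first case and for $O_1$ in the second---and you have correctly located the difficulty: the private witnesses of the neighbouring cycle edges can be swallowed by $e$. But your proposed repair does not work, and this is a genuine gap rather than bookkeeping. You commit to producing only $M_1$ or $O_1$, and claim that when a witness is swallowed you can either find a ``more direct'' $M_1$ or slide the chosen triple to an adjacent position. Take $d\ge 3$, a graph cycle $C_{d+2}$ on vertices $v_1,\dots,v_{d+2}$ with edges $e_i=\{v_i,v_{i+1}\}$, and $e=\{v_1,\dots,v_{d-1},v\}$ with $v$ external. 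Each cycle edge consists of exactly its two linking vertices, so an $M_1$ centred at $v_j$ with petals $e_{j-1},e_j,e$ requires both $v_{j-1}\notin e$ and $v_{j+1}\notin e$; since $e$ meets the cycle in the single run $\{v_1,\dots,v_{d-1}\}$, no such $j$ exists, and sliding along the run never terminates at a valid $M_1$. The forbidden structure that actually occurs here is $M_{d-1}$, the largest member of the family in \forbiddenh.

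What the paper's argument supplies, and yours is missing, is the general $M_k$ construction together with a counting argument that replaces your appeal to ``the cycle is finite, so the re-selection terminates.'' If $e$ contains a linking vertex, follow the cycle in both directions collecting the consecutive linking vertices that lie in $e$; this run cannot wrap around the whole cycle because $e$ has at most $d$ vertices, one of which is the external $v$, so it contains at most $d-1$ cycle vertices, whereas the cycle has length at least $d+2$ (all of $C_3,\dots,C_{d+1}$ are forbidden). The two linking vertices just beyond the run are therefore not in $e$, and the path of cycle edges spanning the run, together with $e$ (which contains every internal vertex of that path plus the pendant $v$), is exactly $M_k$ for some $k\le d-1$. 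Only after this case is exhausted may you assume that $e$ avoids every linking vertex, and that assumption is precisely what makes the $O_1$ construction in your remaining case sound: the witnesses $v_{i-1}\in e_{i-1}$ and $v_{i+2}\in e_{i+1}$ are themselves linking vertices and hence automatically outside $e$. As written, your case split is on the type of one chosen vertex $a$ rather than on whether $e$ contains \emph{any} linking vertex, so your $O_1$ case lacks this guarantee.
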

\begin{proof}
	Assume to the contrary that such a hyperedge $e$ exists. If
	$e$ contains at least one vertex that lies in the intersection of
	two edges of $C$, then we find a subhypergraph $M_k$
	(with a $k \le d-1$) as follows. Assume $v' \in e \cap e_1 \cap e_2$
	with edges $e_1, e_2$ consecutive on $C$. From $v'$ on we
	follow $C$ in both directions as long as as we find vertices in the
	intersection of consecutive cycle edges that also belong to
	$v$. This process must stop eventually, since $e$ can contain at
	most $d-1$ vertices of cycle edges, while $C$ has length at
	least $d+2$. Together with two more vertices of the next intersections of
	cycle edges (that are not in $e$), we have found a path that, with
	$e$ and $v$, forms a subhypergraph of the type $M_k$; see
  Fig.~\ref{fig:hypergraph-edge+cycle}
  \begin{figure}[tb]
    \centering
    \includegraphics{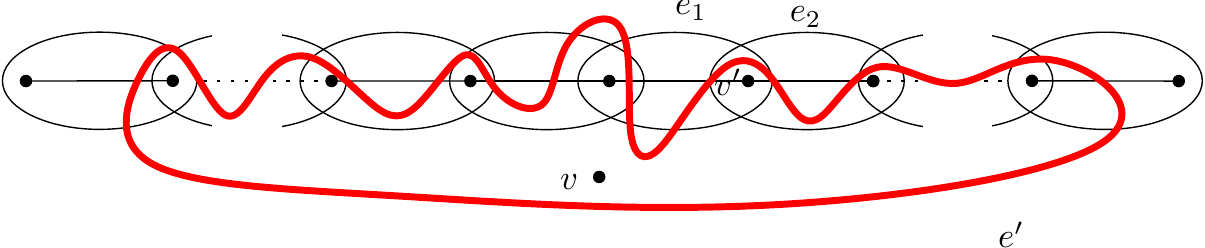}
    \caption{$M_k$ as a subhyperedge if $e'$ contains vertices
    involved in the cycle.}
    \label{fig:hypergraph-edge+cycle}
  \end{figure}

	Now, we know that $e$ cannot contain a vertex that lies in two
	cycle edges, but there can still be an edge $e'$ of $C$ with
	a vertex $v' \in e \cap e'$. However, by using $e$, $e'$, and the
	two neighbors of $e'$ in $C$ we immediately find $O_1$ as a
  subhypergraph (just as in
  Fig.~\ref{fig:hypergraph-cycle-edge-structure-o3}).
\end{proof}

As a consequence of the previous lemma, the hyperedges of two
different cycles either cover the exactly same set of vertices, or
their sets of vertices are disjoint. This also means that each
connected component is either acyclic, or forms a ground for a set of
cycles. We now try to analyze the structure of cycles on such a
connected component in order to break all remaining cycles optimally.

If two cycles share their vertex sets, we can analyze how an edge of
the one cycle relates to the structure---the cycle sets and their
order---of the other cycle. Recall that we know about the relative
order of the cycle-sets, but not of the internal order of vertices
within the same cycle-set. Another edge can contain a cycle-set
completely, can be disjoint from it, or can contain only part of
its vertices. We call a consecutive sequence of cycle-sets contained
in edge~$e$---potentially starting and ending with cycle-sets
partially contained in $e$---an \emph{interval} of $e$ on $C$.
The following lemma shows that every edge forms only a single interval
on a given cycle.

\begin{repeatlemma}{lemma:cycle-edge-interval-relation}
  Let $\hyper = (V, \hedgese)$ be an \ffree hypergraph.
  Let $C$ be a cycle appearing as
  a subhypergraph in $\hyper$ and let $e \in
	\hedgese$ be a hyperedge on the same vertex set $\bigcup_{e' \in C} e'$.
  If $e$ intersects two cycle-sets, then $e$ must fully contain the
  vertices of all cycle-sets lying in between in one of the two
  directions along the cycle.
\end{repeatlemma}
\begin{proof}
	Assume that the claim is not true, that is, $e$ consists of a
  collection of at least two intervals of (partially) contained
  cycle-sets, where any two such intervals are separated by a vertex
  not in $e$ lying in a cycle-set. We distinguish cases similar to the
  proof of the previous lemma. First, assume that one such interval
  contains a vertex of the cycle. We follow the cycle in both
  directions from that vertex, as long as we find a vertex of
  $e$ in the intersection of the current edge with the next one along
  the cycle. Since $e$ has at most $d$ vertices but $C$ has length at
  least $d+2$, this process will eventually stop, thus forming a path
  of length at least two, whose first and last vertices are not in
  $e$, but all internal vertices are. Now, assume that there is
  another vertex $v \in e$ that is contained in none of the edges of
  the path. Then, we have found $M_k$ as a subhypergraph. 

  On the other hand, if there is no such vertex $v$, we still know that
  there must be more than one interval formed by $e$. Hence, there
  especially must be a vertex $v'$ in a cycle-set separating two
  consecutive internal vertices $v_1, v_2$ of the path that is not contained in
  $e$. Let $e'$ be the edge of $C$ connecting $v_1$ and $v_2$; see
  Fig.~\ref{fig:cycle-structure-interval-case1}.
  \begin{figure}[tb]
    \begin{subfigure}[t]{.32\textwidth}
      \centering
      \includegraphics{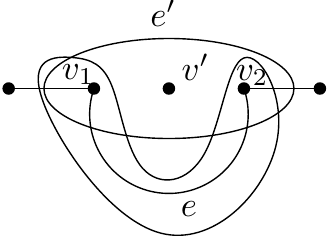}
      \caption{$M_2$ as a subhypergraph.}
      \label{fig:cycle-structure-interval-case1}
    \end{subfigure}
    \hfill
    \begin{subfigure}[t]{.32\textwidth}
      \centering
      \includegraphics[page=2]{img/cycle-structure-interval-lemma}
      \caption{$M_1$ as a subhypergraph.}
      \label{fig:cycle-structure-interval-case2}
    \end{subfigure}
    \hfill
    \begin{subfigure}[t]{.34\textwidth}
      \centering
      \includegraphics[page=3]{img/cycle-structure-interval-lemma}
      \caption{$O_2$ as a subhypergraph.}
      \label{fig:cycle-structure-interval-case3}
    \end{subfigure}
    \caption{Vertex $v' \notin e$ in a gap between two intervals of
    $e$.}
    \label{fig:cycle-structure-interval-cases}
  \end{figure}
  If none of the neighbors of $v_1$ and $v_2$ along the cycle lies in
  $e$, we have found a $M_2$-subhypergraph as in
  Fig.~\ref{fig:cycle-structure-interval-case1}.
  If the neighbor of only one of them, say, $v_1$ is in $e$ but the
  neighbor of $v_2$ isn't, then by disregarding $v_1$ we find an
  $M_1$-subhypergraph centered on $v_2$; see
  Fig.~\ref{fig:cycle-structure-interval-case2}. On the other
  hand, if both neighbors lie in $e$, then we have $O_2$ as a
  subhypergraph; see Fig.~\ref{fig:cycle-structure-interval-case3}.

  If $e$ contains no element in the intersection of any two
  consecutive cycle edges, then we take vertices $v$ and $v'$ from
  two different intervals; $v \in e_1$ and $v' \notin e_1$ for a cycle
  edge $e_1$. Let $e_0$ and $e_2$ be the neighbors of $e_1$ in
  $C$. We have, $v' \notin e_0 \cup e_2$ since otherwise there would be a
  triangle. Then, $e_0$, $e_1$, $e_2$, and $e$ (via $v'$) form $O_1$
  as a subhypergraph.
\end{proof}

Since $e$ forms only a single interval of cycle-sets, we know that by
opening the cycle at a single position within a cycle-set not
contained in $e$,
$C+e$ forms an interval hypergraph. $e$ adds further
information on the relative order within some cycle-sets. If only part
of the vertices of a cycle-set are contained in $e$ and also vertices
of the next cycle-set in one direction, we know that the vertices of
$e$ in the first cycle-set should be next to the second cycle-set.

We use this to refine the cycle-sets to a cell structure with a cyclic
order of cells, the \emph{cell order}.
A cell is just a set of vertices that must be contiguous in the cyclic
order prescribed by hyperedges.
Initially, the cells are the cycle-sets. Then, in each
step we refine the cell-order by inserting an edge containing vertices
of more than one cell, possibly splitting two cells into two subcells
each. If after refining the cell order, it is still true that each
remaining edge forms a single interval, then this results in a final
refined cell order, where each remaining edge of the connected
component must be fully contained in one of the cells. The following
lemma shows that the interval property is indeed preserved during the
process of refinements.

\begin{repeatlemma}{lemma:cell-edge-interval-relation}
  Let $\hyper = (V, \hedgese)$ be an \ffree hypergraph. Let $C$ be a
  cycle appearing as a subhypergraph in $\hyper$. If we initialize the
  cell order with
  the cycle-sets of $C$ and keep refining the structure by considering
  edges that contain vertices of at least two different cells, then
  the following \emph{interval property} holds for any hyperedge $e
  \in \hedgese$ on the vertex set $\bigcup_{e' \in C} e'$:

	If $e$ intersects two cells, then $e$ must fully contain the vertices of all
  cells lying in between in one of the two directions along the
  cyclic order.
\end{repeatlemma}
\begin{proof}
  We show the property by induction over the insertions. Due to
  Lemma~\ref{lemma:cycle-edge-interval-relation} it holds
  in the beginning. Now, assume that the interval property holds for
  the cell order after inserting a set of edges. We show that after
  refining the cells by considering another edge $e'$, the property
  still holds.

  Assume that for the refined cells the interval property does not hold for an
  edge $e$. Since the property did hold for the cells of the previous
  step, the only problem can occur in a cell $c$ of the previous
  step that is only partially contained by both $e$ and $e'$. Without
  loss of generality, we can assume that $e'$ also contains elements of
  the cell right of $c$; let $c_1$ and $c_2$ in this order be the
  (nonempty) cells resulting from splitting $c$, i.e., $c_1 = c
  \setminus e'$ and $c_2 = c \cap e'$. There are two basic cases in
  which the interval property could be violated for $e$.

  First, if $e$ contains also elements of the cell right of $c$, then
  we have a violation only if there are vertices $v_1 \in c_1 \cap e$
  and $v_2 \in c_2 \setminus e$. We distinguish cases based on the
  right boundary of cell $c$, which---from left to right---can either
  be closing or opening one (or more) hyperedge $\tilde{e}$.
  
  First assume that it is
  closing $\tilde{e}$; $\tilde{e}$ fully contains $c$ and at
  least also the cell left of $c$. If there is a common vertex of
  $e$ and $e'$ in the cell right of $c$, then we find $C_3$ as a
  subhypergraph with $e$, $e'$, and $\tilde{e}$; see
  Fig.~\ref{fig:int-ind-case1-a}. Otherwise, there are vertices in the next
  cell that are unique for $e$ and $e'$, respectively.
  Since we never inserted an edge completely contained in cells, this
  must also have held for $\tilde{e}$. Therefore, there must be an
  edge (apart from $e$ and $e'$) containing some (but not all) cells from
  $\tilde{e}$ and cells either left or right of $\tilde{e}$. If such
  an edge $\bar{e}$ contains cells to the right, then it must
  especially contain cell $c$ and the cell
  right of it. Together with a vertex in $\tilde{e}$ not contained in
  $\bar{e}$, we have found $O_2$ as a subhypergraph;
  see Fig.~\ref{fig:int-ind-case1-b}. On the other hand, if $\bar{e}$ contains cells
  of $\tilde{e}$ and cells left of it, then we find $O_1$ as a
  subhypergraph by adding a vertex in $\tilde{e} \cap \bar{e}$ (not in
  $c$) and a vertex in $\bar{e} \setminus
  \tilde{e}$; see Fig.~\ref{fig:int-ind-case1-c}.
  \begin{figure}[tb]
    \begin{subfigure}[t]{.27\textwidth}
      \centering
      \includegraphics[scale=1.2,page=1]{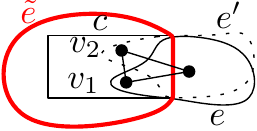}
      \caption{$C_3$.}
      \label{fig:int-ind-case1-a}
    \end{subfigure}
    \hfill
    \begin{subfigure}[t]{.31\textwidth}
      \centering
      \includegraphics[scale=1.2,page=2]{img/interval-lemma-induction-1-sided-cases}
      \caption{$O_2$.}
      \label{fig:int-ind-case1-b}
    \end{subfigure}
    \hfill
    \begin{subfigure}[t]{.33\textwidth}
      \centering
      \includegraphics[scale=1.2,page=3]{img/interval-lemma-induction-1-sided-cases}
      \caption{$O_1$.}
      \label{fig:int-ind-case1-c}
    \end{subfigure}
    \caption{$e$ and $e'$ overlapping in cell $c$ from the same
    direction.}
    \label{fig:int-ind-cases1-1}
  \end{figure}

  Now, assume that $\tilde{e}$ is opening on the right boundary of
  $c$. If the cell right of $c$ contains no common element of
  $e$ and $e'$, the situation is symmetric to the one we had before by
  exchanging the role of $c$ with the cell right of $c$; see
  Fig.~\ref{fig:int-ind-case1-d}. Otherwise,
  there is an element of $e \cap e'$ in the next cell. If
  $\tilde{e}$ contains an element not in $e \cup e'$, then we have
  found $M_1$ as a subhypergraph; see Fig.~\ref{fig:int-ind-case1-e}. We know that
  there must be at least one previously inserted hyperedge
  $\bar{e}$ overlapping with $\tilde{e}$. Assume that $\bar{e}$ is
  overlapping from the left. If there is a vertex of $e \cap e'$ in $\tilde{e}
  \setminus \bar{e}$, we have found a $C_3$-subhypergraph; see
  Fig.~\ref{fig:int-ind-case1-f}.
  \begin{figure}[tb]
    \begin{subfigure}[t]{.32\textwidth}
      \centering
      \includegraphics[scale=1.2,page=4]{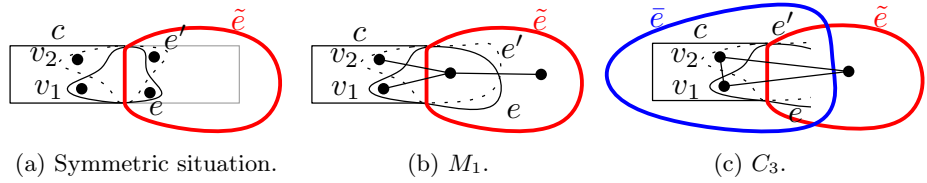}
      \caption{Symmetric situation.}
      \label{fig:int-ind-case1-d}
    \end{subfigure}
    \begin{subfigure}[t]{.32\textwidth}
      \centering
      \includegraphics[scale=1.2,page=5]{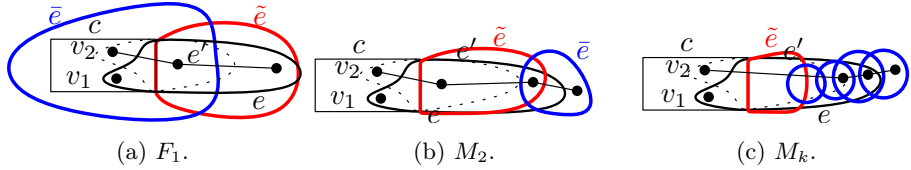}
      \caption{$M_1$.}
      \label{fig:int-ind-case1-e}
    \end{subfigure}
    \begin{subfigure}[t]{.32\textwidth}
      \centering
      \includegraphics[scale=1.2,page=6]{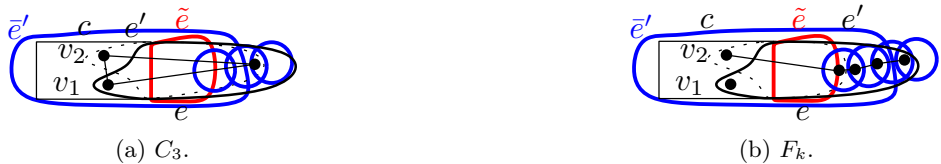}
      \caption{$C_3$.}
      \label{fig:int-ind-case1-f}
    \end{subfigure}
    \caption{$e$ and $e'$ overlapping in cell $c$ from the same
    direction.}
    \label{fig:int-ind-cases1-2}
  \end{figure}
  Otherwise, there must be a vertex in $\tilde{e}
  \setminus \bar{e}$ that is contained in only one of $e$ and
  $e'$, say, in $e$, and we find $F_1$ as a subhypergraph; see
  Fig.~\ref{fig:int-ind-case1-g}. Now, assume that $\bar{e}$ is overlapping with
  $\tilde{e}$ coming from the right. If $\bar{e} \cap
  \tilde{e}$ contains a vertex of only one of the sets, say, $e$, we
  consider $\bar{e} \setminus \tilde{e}$. If there is a vertex not in
  $e$ (and not in $e'$), we have found $M_2$; see
  Fig.~\ref{fig:int-ind-case1-h}. (If $\tilde{e} \cap \bar{e}$
  contains a vertex of $e \cap e'$, we find $M_1$ instead). Otherwise, since
  $\tilde{e} \cup \bar{e}$ must overlap with at least one more edge,
  we can continue to explore more edges. As long as there is a
  hyperedge overlapping with the hyperedges starting from
  $\tilde{e}$ and $\bar{e}$ to the right, we choose the one ending
  rightmost, thus forming a path of hyperedges that is extending to
  the right. If this process eventually finds a vertex that is neither
  in $e$ nor in $e'$, we find $M_k$ as a subhypergraph; see
  Fig.~\ref{fig:int-ind-case1-i}.
  \begin{figure}[tb]
    \begin{subfigure}[t]{.32\textwidth}
      \centering
      \includegraphics[scale=1.1,page=7]{img/interval-lemma-induction-1-sided-cases}
      \caption{$F_1$.}
      \label{fig:int-ind-case1-g}
    \end{subfigure}
    \begin{subfigure}[t]{.32\textwidth}
      \centering
      \includegraphics[scale=1.1,page=8]{img/interval-lemma-induction-1-sided-cases}
      \caption{$M_2$.}
      \label{fig:int-ind-case1-h}
    \end{subfigure}
    \hfill
    \begin{subfigure}[t]{.32\textwidth}
      \centering
      \includegraphics[scale=1.1,page=9]{img/interval-lemma-induction-1-sided-cases}
      \caption{$M_k$.}
      \label{fig:int-ind-case1-i}
    \end{subfigure}
    \caption{$e$ and $e'$ overlapping in cell $c$ from the same
    direction.}
    \label{fig:int-ind-cases1-3}
  \end{figure}
  If we do not reach a vertex not in $e$ or $e'$ with the path because
  there are no more edges overlapping from the right, we know that
  there must be an edge overlapping the whole path from the left
  (otherwise, the edges of the path would not have been inserted
  before). Let $\bar{e}'$ be this edge. Now, if there is a vertex of
  $e \cap e'$ not contained in $\bar{e}'$, we have found $C_3$ as a
  subhypergraph; see Fig.~\ref{fig:int-ind-case1-j}. Otherwise, the
  part of the path outside of $\bar{e}'$ contains a vertex that is
  only in one of the hyperedges, say, in $e$. Then, the
  forbidden subhypergraph that we find is $F_k$ with $\bar{e}'$ and
  $e$ as the big hyperedges; see Fig.~\ref{fig:int-ind-case1-k}.
  \begin{figure}[tb]
    \begin{subfigure}[t]{.32\textwidth}
      \centering
      \includegraphics[scale=1.2,page=10]{img/interval-lemma-induction-1-sided-cases}
      \caption{$C_3$.}
      \label{fig:int-ind-case1-j}
    \end{subfigure}
    \hfill
    \begin{subfigure}[t]{.32\textwidth}
      \centering
      \includegraphics[scale=1.2,page=11]{img/interval-lemma-induction-1-sided-cases}
      \caption{$F_k$.}
      \label{fig:int-ind-case1-k}
    \end{subfigure}
    \caption{$e$ and $e'$ overlapping in cell $c$ from the same
    direction.}
    \label{fig:int-ind-cases1-4}
  \end{figure}

  Now, we can consider the second case in which we get a contradiction
  to the interval property after inserting $e'$: Again, let $e'$ split
  a cell $c$ into $c_1$ and $c_2$ as before. Then, $e$ contains
  vertices from the cell left of $c$, at least one vertex $v_2$ of
  $c_2$, but there is also a vertex $v_1 \in c_1 \setminus e$, i.e.,
  $e$ does not completely contain $c_1$. We know that there must be at
  least one edge containing cell $c$. First, assume that such an
  edge~$\tilde{e}$ exists and there are vertices $v \in e \setminus
  \tilde{e}$ and $v' \in e' \setminus \tilde{e}$. Then, we find
  $M_1$ as a subhypergraph; see Fig.~\ref{fig:int-ind-case2-a}.

  If no such $\tilde{e}$ exists, we know that any edge containing
  $c$ must fully contain at least one of $e$ and $e'$ as a subset. On
  the other hand, we know that there must be at least one edge
  overlapping with $e$ and one edge overlapping with $e'$ (and by now,
  these two edges must be different). Assume that there are edges
  $\tilde{e}$ overlapping with $e$ and fully containing $e'$ and
  $\bar{e}$ overlapping with $e'$ and fully containing $e$.
	\begin{figure}[t]
    \begin{subfigure}[t]{.35\textwidth}
      \centering
      \includegraphics[page=2]{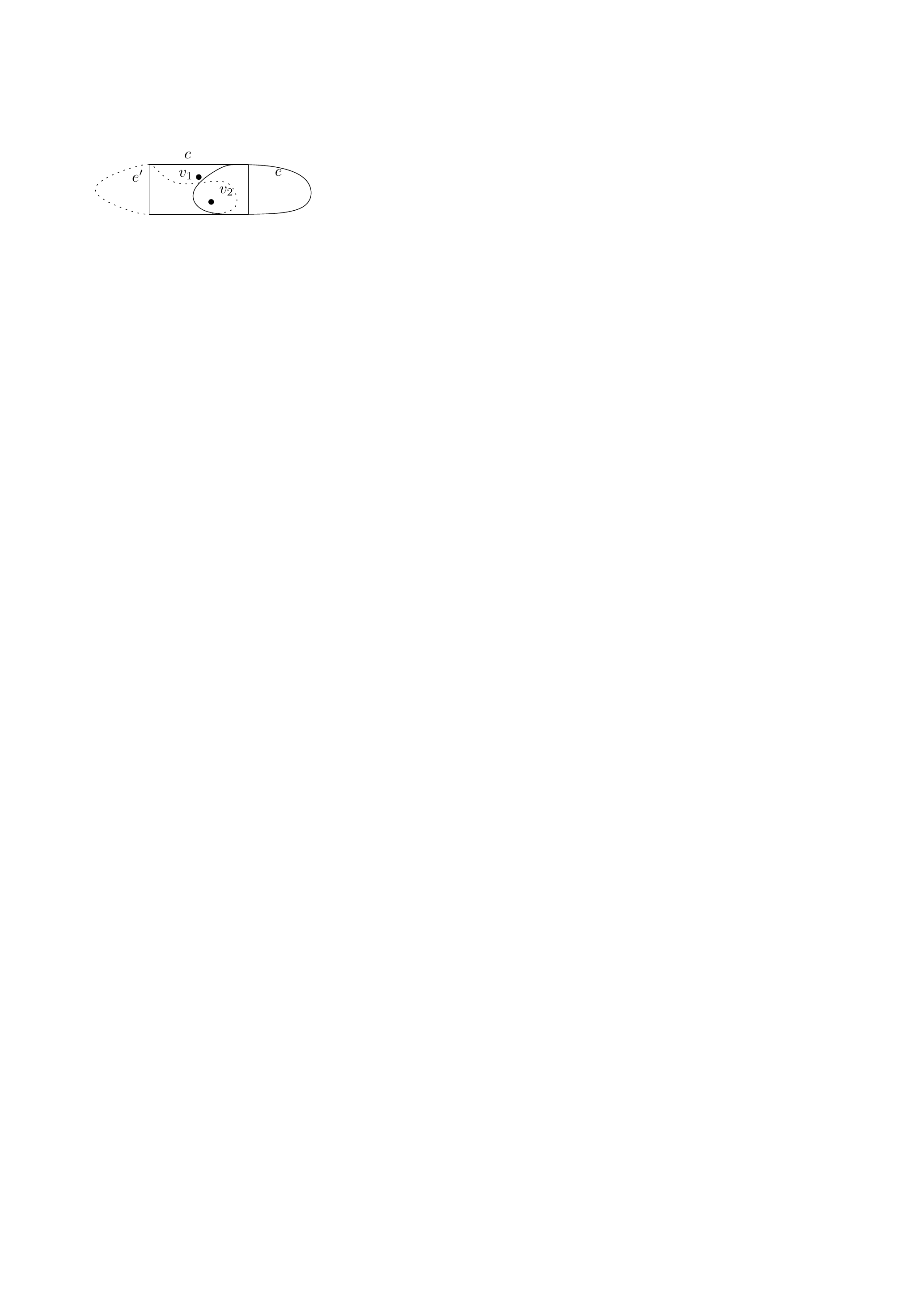}
      \caption{$M_1$.}
      \label{fig:int-ind-case2-a}
    \end{subfigure}
    \hfill
    \begin{subfigure}[t]{.45\textwidth}
      \centering
      \includegraphics[page=3]{img/interval-lemma-induction-2-sided-cases}
      \caption{$F_1$.}
      \label{fig:int-ind-case2-b}
    \end{subfigure}
    \caption{$e$ and $e'$ overlapping in cell $c$ from different
    directions.}
    \label{fig:int-ind-cases2-1}
  \end{figure}
	Then we
  find $F_1$ as a subhypergraph; see Fig.~\ref{fig:int-ind-case2-b}.
	
	Now, assume that there is only a hyperedge $\tilde{e}$ overlapping
  with $e$ and fully containing $e'$; among these edges let
  $\tilde{e}$ be the one ending leftmost and (among the ones ending
  leftmost) the shortest one. We know that there must be at
  least one edge overlapping with $e'$, but no such edge can go to
  the right (and contain $c$), otherwise we would be in one of the
  previous cases. Let $\bar{e}$ be the hyperedge overlapping with
  $e'$ and ending leftmost. If $\bar{e}$ contains a vertex not
  contained in $\tilde{e}$, then we have found $M_2$ as a
  subhypergraph; see Fig.~\ref{fig:int-ind-case2-c}. Otherwise, we continue searching for
  the leftmost starting hyperedge overlapping with $\bar{e}$, forming
  a path of hyperedges reaching to the left. If eventually we reach at
  a vertex not contained in $\tilde{e}$, then we have found an
  $M_k$-subhypergraph; see Fig.~\ref{fig:int-ind-case2-d}. On the other hand, if the
  path ends before reaching out of $\tilde{e}$, by considering the
  union of the path hyperedges starting from $\bar{e}$, we know that
  there must be a hyperedge overlapping from the right. Due to the
  choice of $\tilde{e}$, this hyperedge may or may not overlap with
  $e$, but it must contain an element of $e$ that is not contained in
  $\tilde{e}$. Therefore, we find $F_k$ as a subhypergraph; see
  Fig.~\ref{fig:int-ind-case2-e}.
  \begin{figure}[tb]
    \begin{subfigure}[t]{.49\textwidth}
      \centering
      \includegraphics[page=4, scale=.95]{img/interval-lemma-induction-2-sided-cases}
      \caption{$M_2$.}
      \label{fig:int-ind-case2-c}
    \end{subfigure}
    \hfill
    \begin{subfigure}[t]{.49\textwidth}
      \centering
      \includegraphics[page=5, scale=.95]{img/interval-lemma-induction-2-sided-cases}
      \caption{$M_k$.}
      \label{fig:int-ind-case2-d}
    \end{subfigure}
    \caption{$e$ and $e'$ overlapping in cell $c$ from different
    directions.}
    \label{fig:int-ind-cases2-2}
  \end{figure}

  In the remaining case, each edge containing cell~$c$ must fully
  contain both $e$ and $e'$. Let $\tilde{e}$ be the edge containing
  $c$ that is shortest and starts leftmost. Both for $e$ and
  $e'$ we know that there is at least one edge previously inserted
  that overlaps with them. Similarly to the argument before, we can
  start with the leftmost overlapping for $e'$ and the rightmost for
  $e$ and build paths of overlapping edges into these directions until
  we reach a vertex outside of $\tilde{e}$, or we find no further
  hyperedge to extend the path. If both paths leave $\tilde{e}$, we
  find an $M_{k}$-subhypergraph; see Fig.~\ref{fig:int-ind-case2-f}. Now, assume only
  \begin{figure}[tb]
    \begin{subfigure}[t]{.49\textwidth}
      \centering
      \includegraphics[page=6, scale=.9]{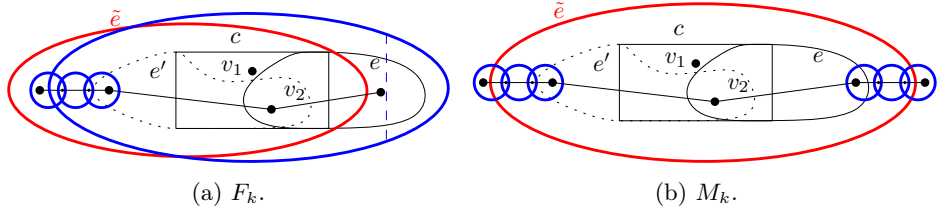}
      \caption{$F_k$.}
      \label{fig:int-ind-case2-e}
    \end{subfigure}
    \hfill
    \begin{subfigure}[t]{.49\textwidth}
      \centering
      \includegraphics[page=7, scale=.9]{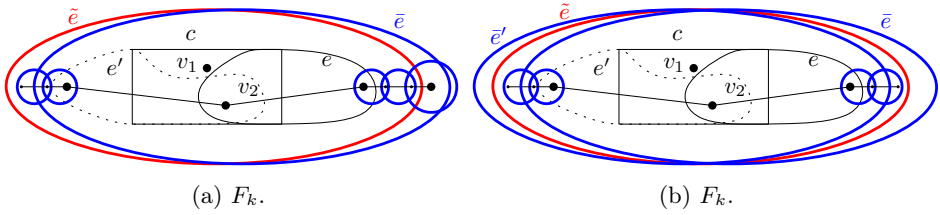}
      \caption{$M_k$.}
      \label{fig:int-ind-case2-f}
    \end{subfigure}
    \caption{$e$ and $e'$ overlapping in cell $c$ from different
    directions.}
    \label{fig:int-ind-cases2-3}
  \end{figure}
  the one for $e$ reaches out of $\tilde{e}$, but the one for
	$e'$ doesn't (the other case is symmetric). Since the hyperedges
	of the path for $e'$ have been
  inserted, there must still be a hyperedge overlapping with them. The only
  remaining possibility is then that this hyperedge $\bar{e}$ extends
  to the right and contains $c$ and fully contains both $e$ and $e'$.
  Due to the choice of $\tilde{e}$ being the shortest hyperedge
  containing $c$, $\bar{e}$ must also contain at least one cell right
  of $\tilde{e}$. Hence, we find an $F_k$-subhypergraph; see
  Fig.~\ref{fig:int-ind-case2-g}. The remaining
  case is that neither path reaches out of $\tilde{e}$. Then, apart
  from $\bar{e}$, with the symmetric argument we find a hyperedge
  $\bar{e}'$ that overlaps with the path for $e$, fully contains
  $e$ and $e'$, and reaches out of $\tilde{e}$ to the left. By using
  $\bar{e}'$ in place of $\tilde{e}$, we again find an
  $F_k$-subhypergraph; see Fig.~\ref{fig:int-ind-case2-h}. This completes the proof.
  \begin{figure}[tb]
    \begin{subfigure}[t]{.49\textwidth}
      \centering
      \includegraphics[page=8, scale=.88]{img/interval-lemma-induction-2-sided-cases}
      \caption{$F_k$.}
      \label{fig:int-ind-case2-g}
    \end{subfigure}
    \hfill
    \begin{subfigure}[t]{.49\textwidth}
      \centering
      \includegraphics[page=9, scale=.88]{img/interval-lemma-induction-2-sided-cases}
      \caption{$F_k$.}
      \label{fig:int-ind-case2-h}
    \end{subfigure}
    \caption{$e$ and $e'$ overlapping in cell $c$ from different
    directions.}
    \label{fig:int-ind-cases2-4}
  \end{figure}
\end{proof}

The lemma shows that we can keep refining the cell-structure by
inserting edges that contain vertices of at least two different cells.
We end up with a cyclic order of cells so that each edge of the
connected component that we did not insert lies completely within a
single cell. Several edges can lie within the same cell, sharing
vertices, and forming a small hypergraph that imposes further
restrictions on the relative order of vertices within the cell.
However, the cell contains fewer than $d$ vertices. Hence, this small
hypergraph cannot contain any long cycles and, since we removed all
other forbidden subhypergraphs, must be an interval hypergraph.

\begin{lemma}
  If for any two adjacent cells there is a hyperedge containing the
  vertices of both cells, we can find a cycle as a subhypergraph.
  \label{lemma:cell-neighboring-cycle}
\end{lemma}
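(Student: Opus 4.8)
The plan is to exhibit an explicit cycle, in the sense of the cycle definition from Section~\ref{sec:hyperedge-removal-approx}, using the cell order as a cyclic backbone. Write the cells in their cyclic order as $c_1, \ldots, c_t$, and for every adjacent pair $(c_i, c_{i+1})$ (indices modulo~$t$) fix a hyperedge $f_i$ that fully contains $c_i$ and $c_{i+1}$, as provided by the hypothesis. By Lemma~\ref{lemma:cell-edge-interval-relation} each $f_i$ occupies a single interval of consecutive cells; since every cell is nonempty and the component stems from a cycle of length at least $d+2$, there are at least $d+2$ cells, whereas $|f_i| \le d$, so each $f_i$ is a \emph{proper} arc of cells that in particular contains all of $c_i$ and $c_{i+1}$ together with every cell strictly between them along that arc.

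First I would turn the family $\{f_1, \ldots, f_t\}$ into a clean cyclic cover. Each cell $c_j$ lies in both $f_{j-1}$ and $f_j$, so the arcs cover all $t$ cells. I would then pass to a minimal (irredundant) subfamily that still covers every cell and list its members $e_1, \ldots, e_k$ in the cyclic order of their arcs. Irredundancy gives each $e_j$ a private cell and forces consecutive arcs $e_{j-1}$ and $e_j$ to overlap; moreover one can choose the cover so that no cell is covered by three of the selected arcs (a triply covered cell would render the middle arc removable). Consequently, for each $j$ the overlap of $e_{j-1}$ and $e_j$ contains a cell belonging to no other selected edge. I would then read off the anchors: for each $j$ pick such an overlap cell and a vertex $v_j$ in it. As an interior cell of both arcs it is fully contained in each, so $v_j \in e_{j-1}\cap e_j$, while no other $e_\ell$ contains $v_j$. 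The sequence $(e_1, \ldots, e_k; v_1, \ldots, v_k)$ then matches the definition of a cycle: consecutive edges share their common anchor, and no edge of the sequence contains an anchor other than its two designated ones.

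The delicate point, and the step I expect to be the main obstacle, is to guarantee that the cycle produced has length $k \ge 3$, so that it is a genuine forbidden cycle rather than two edges meeting in two spots (a configuration that is realizable on a line and hence harmless). Two proper arcs can cover a circle once $d$ is large, so the size bound $|f_i|\le d$ alone does not force $k\ge 3$. Here I would invoke the \ffree property: a hypothetical two-edge cover produces two separated overlap regions with nonempty private regions in between, and the hypothesis supplies a bridging edge across each boundary between a private region and an overlap region. Combining such a bridging edge with $e_1$ and $e_2$, and using that $O_1$ and $C_3$ are forbidden, should either yield a third edge that extends the cover (contradicting minimality at $k=2$) or directly expose a longer cycle.

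Making this dichotomy precise, together with the (routine) verification of the at-most-twice-coverage claim used in the minimal cover, is where the actual case analysis sits; everything else follows mechanically from the interval property of Lemma~\ref{lemma:cell-edge-interval-relation} and the observation that each spanning edge is a proper arc on the cell order.
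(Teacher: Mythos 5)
Your argument is essentially the paper's own proof: cover the cyclic cell order by spanning hyperedges (the paper walks greedily around the circle, you take one edge per adjacent pair and pass to an irredundant subfamily), prune, and read off the cycle from consecutive overlaps, with your choice of anchor vertices from cells covered only twice being a correct filling-in of the paper's terse ``dropping edges fully contained in other edges'' step. The one issue you raise that the paper's proof silently ignores---that an irredundant cover by two proper arcs would produce only a harmless two-edge configuration rather than a forbidden $C_k$ with $k\ge 3$---is a legitimate gap in both write-ups; invoking the \ffree property is the right tool for excluding it, but as you concede that case analysis is left as a sketch rather than carried out.
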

\begin{proof}
  We start at an arbitrary cell $c$. There must be a hyperedge
  $e_1$ containing both $c$ and the next cell in clockwise order. We
  iteratively form a path by considering the rightmost cell explored
  so far and finding a hyperedge that contains that cell as well as
  the cell right of it. Since the number of cells is finite, we
  eventually reach the first cell. By dropping edges fully contained
  in other edges found, if necessary, we have a complete cycle.
\end{proof}

\begin{lemma}
  In any interval hypergraph that is obtained from the connected
  component there is at least one pair of neighboring cells so that
  all edges containing both cells have been removed
  \label{lemma:cell-neighboring-break}
\end{lemma}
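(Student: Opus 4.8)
The plan is to argue by contradiction, using the preceding lemma (Lemma~\ref{lemma:cell-neighboring-cycle}) to convert the negation of the claim into the existence of a cycle, and then invoking the forbidden-subhypergraph characterization to rule that cycle out. The essential point is that an interval hypergraph contains no cycle of length at least three as a subhypergraph: cycles $C_3, C_4, \dots$ are exactly the arbitrarily large forbidden configurations in the characterization we rely on. So if the interval hypergraph we extracted still retained an edge spanning every adjacent pair of cells, we could assemble a cycle from those edges and contradict the interval property.

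Concretely, I would let $H'$ denote an arbitrary interval hypergraph obtained from the connected component by deleting a set of edges, and suppose for contradiction that for \emph{every} pair of neighboring cells in the cyclic cell order there survives an edge of $H'$ containing both cells. This is precisely the hypothesis of Lemma~\ref{lemma:cell-neighboring-cycle}, now read as a statement about the surviving edges. Applying that lemma, I walk around the cyclic order of cells, at each step selecting a surviving edge that covers the current rightmost explored cell together with its clockwise neighbor; since the number of cells is finite, the walk closes up, and after discarding any edge contained in the union of the others, I obtain a genuine cycle subhypergraph whose edges all lie in $H'$.

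The final step is to observe that this cycle is forbidden: an interval hypergraph contains no cycle of length at least three as a subhypergraph (cycles are among the forbidden subhypergraphs of the characterization, independently of the bounded-rank restriction placed on the finite set $\forbiddenh$). This contradicts $H'$ being an interval hypergraph. Hence at least one pair of neighboring cells must have had \emph{all} of its spanning edges deleted, which is exactly the claim.

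The only point needing care—more a subtlety than an obstacle—is ensuring that the cycle produced by Lemma~\ref{lemma:cell-neighboring-cycle} uses only edges that survive in $H'$, rather than edges of the original connected component. This is immediate once the contradiction hypothesis is phrased in terms of surviving edges: every edge invoked during the cycle construction is one guaranteed to lie in $H'$ by that hypothesis. Beyond this, the argument is short, because the heavy structural work—constructing the cyclic cell order and proving the single-interval property (Lemmas~\ref{lemma:cycle-edge-interval-relation} and~\ref{lemma:cell-edge-interval-relation})—has already been completed, so this lemma only has to combine those results with the forbidden-cycle characterization.
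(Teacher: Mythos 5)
Your proof is correct and follows essentially the same route as the paper's: the paper likewise treats the lemma as a direct corollary of Lemma~\ref{lemma:cell-neighboring-cycle}, arguing that if no such pair of cells existed, that lemma would produce a cycle subhypergraph, contradicting the interval property. Your added care about applying the cycle construction only to surviving edges, and your remark that cycles of every length (not just those in the finite set $\forbiddenh$) are forbidden in an interval hypergraph, are both correct refinements of the paper's one-line argument.
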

\begin{proof}
  The lemma is a direct corollary from
  Lemma~\ref{lemma:cell-neighboring-cycle} since if there is no such
  pair of cells, the condition of that lemma holds.
\end{proof}

\begin{lemma}
  Given a cyclic cell-order, let $c$ and $c'$ be a neighboring pair
  of cells in clockwise order. Removing all edges that contain both
  $c$ and $c'$ results in an interval hypergraph.
  \label{lemma:cell-neighboring-cycle-destruction}
\end{lemma}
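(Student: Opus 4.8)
The plan is to exhibit an explicit interval representation of the hypergraph that remains after deleting every edge containing both $c$ and $c'$. First I would recall the structure left by the refinement: each edge that was \emph{inserted} during the refinement occupies a contiguous run of whole cells of the cyclic cell order. Indeed, by Lemma~\ref{lemma:cell-edge-interval-relation} such an edge forms a single interval of cells, and the splitting of its two boundary cells during insertion guarantees that it contains only whole cells; any later split of a cell it fully contains merely breaks that cell into two pieces, both still fully contained, so ``occupies whole cells'' is preserved as an invariant. Every edge that was \emph{not} inserted lies inside a single cell, where the edges of that cell form a hypergraph that is itself an interval hypergraph (it contains fewer than $d$ vertices and hence no surviving forbidden subhypergraph).

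To construct the order $\pi$, I would cut the cyclic cell order at the gap between $c$ and $c'$, obtaining a linear order of cells that starts at $c'$ and ends at $c$ after going once around the cycle. Within each cell I would order its vertices according to any interval representation of its internal hypergraph, which exists by the above. Since every surviving multi-cell edge occupies only whole cells, no surviving edge dips \emph{partially} into a cell, so the internal order of a cell is unconstrained by its neighbours; concatenating the per-cell orders along the cut cyclic sequence therefore yields a single, well-defined linear order $\pi$ of all vertices of the component.

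It then remains to verify that every surviving edge is an interval of $\pi$. An edge contained in a single cell is, by choice of the internal order, an interval within that cell's block, hence an interval of $\pi$. For a multi-cell edge I would use the characterization that, because $c$ and $c'$ are adjacent in the cyclic order and the edge occupies a contiguous run of whole cells, the edge contains both $c$ and $c'$ if and only if its run of cells passes through the $c$--$c'$ gap. The only way an edge could contain two adjacent cells \emph{without} crossing that gap would be to wrap the long way around, i.e.\ to occupy all cells; but this does not happen, since a surviving cycle has length at least $d+2$ and thus spans more than $d$ vertices, whereas each edge has at most $d$ vertices. Consequently the deleted edges are exactly those whose cell-run crosses the cut, and every surviving multi-cell edge occupies a run of cells that does not cross the cut, which becomes a contiguous block of consecutive whole cells in the linear order---an interval of $\pi$.

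The point requiring the most care---and the place where the earlier structural work is really used---is precisely this equivalence ``$e$ contains both $c$ and $c'$'' $\Leftrightarrow$ ``$e$ crosses the $c$--$c'$ gap'', together with the ruling out of an all-cells edge via the length bound on surviving cycles. Once that is in place, the single-cell case and the multi-cell case together show that $\pi$ is an interval representation of the surviving hypergraph, so it is an interval hypergraph, which completes the proof.
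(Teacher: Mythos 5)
Your proposal is correct and follows essentially the same route as the paper: cut the cyclic cell order at the $c$--$c'$ gap, order vertices within each cell by an interval representation of its internal (constant-size, cycle-free) hypergraph, and observe that any surviving multi-cell edge spans a contiguous run of whole cells that cannot cross the cut, since crossing it would force the edge to contain both $c$ and $c'$. Your explicit verification of the ``whole cells'' invariant and the exclusion of an all-cells edge via the length bound on surviving cycles are details the paper leaves implicit, but the argument is the same.
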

\begin{proof}
  We number the cells $c' = c_1, c_2, \ldots, c_{k} = c$ in clockwise
  order. Next, we place the vertices on a straight line so that
  vertices of each cell form an interval on the line and the cells
  appear as $c_1, \ldots, c_k$ from top to bottom. Since the edges
  falling completely within a cell form an interval hypergraph, we put
  the vertices within a cell into an order that supports this
  interval hypergraph; recall that this is an interval hypergraph of
  constant size. Hence, each edge falling within a cell is supported.

  Now consider an edge $e$ that spans over several cells. If the
  interval that $e$ spans is over cells $c_i, \ldots, c_j$ with
  $1 \le i < j \le n$, it is supported by our order of vertices. On the
  other hand, if the cyclic interval of $e$ is of the type
  $c_i \ldots, c_k, c_1, \ldots, c_j$ with $1 \le j < j \le k$, then
  $e$ also contains the cells $c = c_k$ and $c' = c_1$ and, therefore,
  has been removed.
\end{proof}

\section{Interval Hypergraph Edge Deletion -- Implementation in
$O(m^2)$ Time}
\label{sec:hyperedge-deletion-runtime}
The first phase of our algorithm consists mainly
of searching for given subhypergraphs. In general, searching for a
subhypergraph of parameterized size $k$ is hard to achieve in time
$n^{o(k)}$ since this includes the hard search for
$k$-cliques~\cite{Chen20061346}.
However, the structure of our problem allows us  to do the search in
$O(m^2)$ time as follows. First, we check for cycles by considering
any edge $e$, choosing any pair $v_1, v_2$ of its up to $d$ vertices
(we have to try every pair), removing all edges containing both
vertices, and then trying to find a shortest path from $v_1$ to
$v_2$ using breadth-first search. If there is such a path of length
$k \le d$, we have found $C_{k+1}$, and we remove all its edges. Since
any edge has to be considered only once---it is then either removed or
cannot be part of a short cycle---this part takes $O(m^2)$ time.

For destroying the remaining types of forbidden subhypergraphs, we
make use of the fact, that each of them contains an edge that contains
all but 1 ($O_2$ and $F_k$), 2 ($M_k$), or 3 ($O_1$) vertices of the
subhypergraph. We try each edge $e$ to play that role. Since
$e$ has at most $d$ vertices (constant), we can try each
combination of its vertices for the vertices of the forbidden
subhypergraph in the edge as shown in
Fig.~\ref{fig:forbidden-subhypergraphs}. Since there are only up to
three more vertices not in $e$ required, we could try all combinations
for these and end up with an $O(m^2 n^3)$-time algorithm. However, we
can get rid of the factor $n^3$ as follows. Suppose there
are vertices $v_1, v_2 \in e$ and hyperedges $e_1, e_2$ so that
$v_1 \in e_1, v_2 \in e_2$ but $v_1 \notin e_2$ and $v_2 \notin e_1$.
If there is a vertex $v \in \left( e_1 \cap e_2 \right) \setminus e$ in
the intersection of $e_1$ and $e_2$ outside of $e$, then $v, v_1, v_2$
with the hyperedges $e, e_1$, and $e_2$ form a $C_3$-subhypergraph;
however, we have already removed short cycles, a contradiction.

Now, consider the search for $O_1$. If for each of the three involved
vertices in the larger hyperedge $e$ we find a hyperedge containing
vertices not in $e$, then we must have found $O_1$, otherwise the
above argument yields $C_3$. For the other forbidden subhypergraphs we
must additionally check whether there is at least one hyperedge
realizing exactly each of the necessary pairwise adjacencies within
$e$. For $M_k$, $k \le d-1$, this suffices to check for an occurrence.
For $O_2$ we must also check whether there is a hyperedge containing
the two nonadjacent vertices of $e$ and an element not in $e$. For
$F_k$, $k \le d-2$, we need a vertex in the intersection of a
hyperedge $e_1$ connecting the rightmost path-vertex to something
outside of $e$ with the second hyperedge $e_2$ containing $k+2$
vertices. This can be checked in $O(m)$ time by searching all feasible
hyperedges and marking vertices outside of $e$ if they lie in one such
vertex. Note that no hyperedge realizing one of the pairwise
adjacencies of $F_k$ can contain such a vertex of $e_1 \cap e_2
\setminus e$ since our above argument yields $C_3$ in that case.

Summing up, we can test in $O(m)$ time whether a given edge is the
``large edge''---the edge of highest cardinality---of any of the
forbidden subhypergraphs in $O(m)$ time. Since
after considering an edge it is either removed, or we know that it is
not contained as large edge in any forbidden subhypergraph, we can
make $\hyper$ \ffree in $O(m^2)$ time.

Then, we determine the connected components in linear time, find a
cycle for each of them and initialize the cell order, in $O(n+m)$ time
in total. For all components, the
stepwise refinement can be done in $O(m^2)$ time in total. Counting
the numbers of hyperedges between adjacent cells, determining the
optimum splitting point, as well as finding the final order, can all
be done in linear time (since the size of edges is constant).

\begin{repeattheorem}{thm:int-hyper-edge-removal-approx-runtime}
  We can find a $(d+1)$-approximation for \textsc{Interval Hypergraph Edge
  Deletion} on hypergraphs with $m$ hyperedges of rank~$d$ in $O(m^2)$ time.
\end{repeattheorem}

\section{Open Problems}
\label{app:openProblems}
While our paper yields insight into the complexity of several aspects
of SBCM, several interesting problems remain open.
\begin{itemize}
	\item Does the greedy algorithm yield an approximation for 2-SBCM?
		Can it be reasonably generalized to more than
		two characters per meeting? Can we find an optimal starting
		permutation in polynomial time?
	\item It is open if there always is an optimal solution for 2-SBCM
		that uses at most one block crossing between two meetings when the
		start permutation is \emph{not} fixed.
		Our experiments strongly suggest some relations between $n$, $k$
		and the optimum in random instances, but we have not properly
		investigated this.
	\item Can we get better results for any variant of the problem if
		we consider the start permutation part of the input and fixed?
	\item Can similar approximation results be obtained for simple
		crossings rather than block crossings? Since our analysis and
		algorithms heavily depend on the extended powers of block
		crossings, it seems hard to adjust our approach.
\end{itemize}

\end{document}